\numberwithin{equation}{section}
\newtheorem{theorem}{Theorem}[section]
\newtheorem{definition}[theorem]{Definition}
\newtheorem{remark}[theorem]{Remark}
\newtheorem{proposition}[theorem]{Proposition}
\newcommand{\mr}{\mathcal{R}}
\newcommand{\p}{\partial}
\newcommand{\z}{\mathbb{Z}}
\newcommand{\dl}{\{\!\!\{}
\newcommand{\dr}{\}\!\!\}}
\newcommand{\tr}{{\rm tr}}
\newcommand{\fig}{Fig.$\,$}
\newcommand{\wt}{{\rm wt}}
\newcommand{\A}{\mathcal{A}}
\author[B. Wang]{Bao Wang}
\address{School of Mathematics and Statistics, Ningbo University, Ningbo 315211, PR China.}
\email{wangbao@nbu.edu.cn}
\author[S.-H. Li]{Shi-Hao Li}
\address{ Department of Mathematics, Sichuan University, Chengdu, 610064, PR China.}
\email{shihao.li@scu.edu.cn}
\begin{document}

	\title{On non-commutative leapfrog map}

	\keywords{non-commutative leapfrog map;
		discrete integrable systems;
		non-commutative orthogonal polynomials;
		Poisson geometry.}
	\begin{abstract}

		We investigate the integrability of the non-commutative leapfrog map in this paper. 
		Firstly, we derive the explicit  formula for the non-commutative leapfrog map and corresponding discrete zero-curvature equation by employing the concept of non-commutative cross-ratio. 
		Then we revisit this discrete map, as well as its continuous limit, from the perspective of non-commutative Laurent bi-orthogonal polynomials. 
		Finally, the Poisson structure for this discrete non-commutative map is formulated with the help of a non-commutative network. 
		We aim to enhance our understanding of the integrability properties of the non-commutative leapfrog map and its related mathematical structures through these analysis and constructions.
		
	\end{abstract}
	
	\maketitle
	\section{Introduction}
	
	The pentagram map was first introduced by Schwartz \cite{1992The} as a discrete map on the polygons in a projective plane.
	It maps a polygon $P$ with vertices $\{v_i\}_{i\in\mathbb{Z}}$ to a new polygon $T(P)$ where the $i$th vertex is the intersection of diagonals $(v_i,v_{i+2})$ and $(v_{i+1},v_{i+3})$.
	Although the construction is simple,
	the resulting map exhibits many remarkable properties.
	These properties include but are not limited to the following intriguing characteristics:
	
	(i)
	Ovsienko, Schwartz and Tabachnikov proved that the pentagram map was Liouville integrable on the space of twisted polygons \cite{Ovsi2010}.
	Its Lax representation and corresponding algebraic-geometric integrability is established in \cite{Soloviev2013Integrability}.

	(ii) Glick demonstrated a profound connection between the pentagram map and Y-mutations of a specific cluster algebra \cite{Y2011}.
	Then Glick and Pylyavskyy generalized it to a rich family of pentagram-type maps \cite{Y2016}, which could be descibed as $Y$-mutations in a cluster algebra.
	
	(iii) Gekhtman, Shapiro, Tabachnikov and Vainshtein generalized Glick's construction to specify the pentagram map as a family of discrete integrable maps. The idea is to use the compatibility of Poisson and cluster structures and Poisson geometry of directed networks on surfaces \cite{gekhtman2016}.
	
	(iv) Ovenhouse generalized the approach of Gekhtman et al. to establish a non-commutative version of integrability of the non-commutative pentagram map \cite{Non}.
	The generalization encompasses the Grassmannian pentagram map proposed by Marí Beffa and Felipe \cite{grassmann}.
	
	(v) Generalizations of the pentagram map to polygons in higher dimensions have been extensively studied, such as \cite{Y2016,gekhtman2016,dented2016,long2022,higher2012,wang23}.
	The integrability of these generalized maps are also explored. 
	
	As a one-dimensional counterpart of the pentagram map, the leapfrog map introduced in  \cite{gekhtman2016} is defined as follows.
	Let $S=\{S_i\}_{i\in\mathbb{Z}}$ and $S^-=\{S_i^-\}_{i\in\mathbb{Z}}$ be a pair of $N$-gons in the projective line $\mathbb{RP}^1$.
	The leapfrog map $T$ is a map $(S^-,S)\mapsto (S,S^+)$,
	where the points in $S^+$ are defined by a sequence of projective transformations $\phi_i\in\mathrm{GL}_2(\mathbb{R})$ such that
	\begin{align*}
		\phi_i\left(S_{i-1},S_{i},S_{i+1},S_i^-\right)=\left(S_{i+1},S_i,S_{i-1},S_i^+\right).
	\end{align*}
	When points of polygons take values in an associative but not commutative algebra,
	we can get a non-commutative leapfrog map, whose
	complete integrability was listed as an unsolved problem in \cite{open2018}.
	Therefore, in this paper, 
	we mainly focus on the integrability of the non-commutative leapfrog map and its associated mathematical structures.
	
	Given that the non-commutative leapfrog map is a dynamical system in $\mathbb{P}^1$ over an associate algebra $\mathcal{R}$,
	it is necessary to consider a non-commutative version of projective geometry,
	such as the concept of non-commutative cross-ratios introduced by Retakh \cite{retakh14}.
	The non-commutative cross-ratios are defined by using a non-commutative analogue of the determinant, called the quasi-determinant, 
	as well as corresponding quasi-Pl\"ucker coordinates.
	These topics are thoroughly  reviewed in section \ref{sec_ncr} for the description of the non-commutative leapfrog map.
	
	In section \ref{sec_formula},
	we present an explicit formula for the non-commutative leapfrog map.
	Initially,
	Retakh introduced this map by utilizing the non-commutative cross-ratio \cite{retakh14,retakh20} as 
	\begin{align*}
		\kappa(S_{i-1},\,S_{i+1},\,S_{i},\,S_i^-)= \beta_i^{-1}\kappa(S_{i+1},\,S_{i-1},\,S_{i},\,S_i^+)\beta_i,
	\end{align*}
	where $\beta_i\in\mathcal{R}$ and $\kappa(x,y,z,w)$ represents the non-commutative cross-ratio of points $x,y,z,w\in\mathbb{P}^1$.
	Then we demonstrate that the non-commutative leapfrog map could be defined in three different ways.
	One is to construct a sequence of projective transformations $\phi_i\in GL_2(\mathcal{R})\times (\mathcal{R}^2)^{\otimes 4}$ such that $$\phi_i(S_{i-1},\,S_{i+1},\,S_{i},\,S_i^-)=(S_{i+1},\,S_{i-1},\,S_{i},\,S_i^+).$$
	The construction of projective transformation allows us to express the leapfrog map in terms of coordinates of points. 
	The second method is to consider a suitable scaling, which helps us to formulate the non-commutative leapfrog map by using cross-ratio type coordinates.
	The last formula is to make use of the $y$-variables
	\begin{align*}
		\begin{split}
			(b_i^j)^{-1}
			\left(1+y_{i-1}^{j}\right)
			\left(1+\left(y_i^j\right)^{-1}\right)^{-1}
			\left(y_i^{j-1}\right)^{-1}b_i^j\\
			=
			\left(1+\left(y_i^j\right)^{-1}\right)
			y_i^{j+1}
			\left(1+y_{i+1}^j\right)^{-1},
		\end{split}
	\end{align*}
	where $b_i^j$ could be expressed in terms of cross-ratios.
	This formula exhibits a connection to the $Y$-mutations in certain cluster algebra \cite{Y2016}.

	The dynamics of the leapfrog map leads to the so-called relativistic Toda equation in the theory of classical integrable systems,
	which exhibits a potential connection to the Laurent bi-orthogonal polynomials \cite{op2022}.
	Therefore, in section \ref{sec_ops},
	we construct a non-commutative version of Laurent bi-orthogonal polynomials.
	The use of discrete spectral transformations for non-commutative Laurent bi-orthogonal polynomials is made to realize the non-commutative leapfrog map.
	Additionally,
	we investigate a continuum limit of the discrete relativistic Toda equation by considering the continuous time evolution of moments, which gives an explanation for continuous evolution of the leapfrog map.
	This procedure provides us a deeper understanding of the non-commutative leapfrog map. On the one hand, the Lax integrability of this map is given in terms of non-commutative Laurent bi-orthogonal polynomials and corresponding spectral transformations. On the other hand, the solutions for the non-commutative leapfrog map are expressed by quasi-determinants, from which a clear description for the algebraic structure of this map is given.

	In section \ref{sec_Poisson},
	our focus shifts towards studying the Poisson structure of the non-commutative leapfrog map and demonstrating its integrability. 
	We first recall some basic facts about the double Poisson bracket on the representation space of an associate algebra $\mr$ by following the work of van den Bergh \cite{van2008double}. This 
	promotes us to give a double Poisson brackets over a non-commutative network, which is a non-commutative generalization of that given in \cite{gekhtman2016}. It is shown that the Postnikov move of the non-commutative network is in fact a discrete time evolution of the non-commutative leapfrog map, preserving the Poisson brackets for edge weights. The Lax representation of the non-commutative leapfrog map is re-expressed in terms of the boundary measurement matrix at the end.
	
	\section{non-commutative cross-ratios}\label{sec_ncr}
	
	In this section, we show explicit formulas for non-commutative cross-ratios by making use of quasi-Pl\"ucker coordinates. During this process, the expression of quasi-determinants is widely used to formulate explicit algebraic structures. Therefore, we give a brief introduction of quasi-determinants in the appendix for self-consistency. For more details about quasi-determinants and related topics, one could refer to \cite{gelfand91,gelfand97,gelfand05}.
	
	\subsection{Quasi-Pl\"ucker coordinates and their properties}
	Let $\mathcal{R}$ be a skew field and $\text{GL}_n(\mr)$ be all invertible $n\times n$ matrices over $\mr$. One could define the following quasi-Pl\"ucker coordinates \cite{retakh14}.
	\begin{definition}\label{def2.1}
		Given a $2\times n$ matrix 
		\begin{align*}
			A=\left(\begin{array}{cccc}
				a_{11}&a_{12}&\cdots&a_{1n}\\
				a_{21}&a_{22}&\cdots&a_{2n}
			\end{array}
			\right)\in\mr^{2\times n},
		\end{align*}
		if $i\ne j$, we call
		\begin{align*}
			q_{jk}^i(A)=\left|\begin{array}{cc}
				a_{1i}&\boxed{a_{1j}}\\
				a_{2i}&a_{2j}
			\end{array}
			\right|^{-1}\left|\begin{array}{cc}
				a_{1i}&\boxed{a_{1k}}\\
				a_{2i}&a_{2k}
			\end{array}
			\right|
		\end{align*}
		as the quasi-Pl\"ucker coordinate of $A$ over $\mr$.
	\end{definition}
	There are several properties regarding with quasi-Pl\"ucker coordinates. Firstly, it is noted that 
	a quasi-Pl\"ucker coordinate could be alternatively expressed by a single quasi-determinant
	\begin{align}\label{alt}
		q_{jk}^i(A)=-\left|\begin{array}{ccc}
			a_{1i}&a_{1k}&a_{1j}\\
			a_{2i}&a_{2k}&a_{2j}\\
			0&\boxed{0}&1
		\end{array}
		\right|,
	\end{align}
	which could be easily verified by making use of a non-commutative Jacobi identity \eqref{ncj1}. 
	Therefore, we sometimes refer a quasi-determinant in the form of \eqref{alt} as a quasi-Pl\"ucker coordinate.
	Moreover, quasi-Pl\"ucker coordinates have the following properties if actions on the coordinate matrix $A$ is considered.
	\begin{proposition}\label{prop3}
		If $g\in \text{GL}_2(\mr)$, then
		$
		q_{jk}^i (g\cdot A)=q_{jk}^i (A).
		$
	\end{proposition}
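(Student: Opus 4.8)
The plan is to read the statement as the assertion that quasi-Pl\"ucker coordinates are left $\mathrm{GL}_2$-invariants, and to exploit the single--quasideterminant form \eqref{alt}. Setting
\[
B=\left(\begin{array}{ccc}
a_{1i}&a_{1k}&a_{1j}\\
a_{2i}&a_{2k}&a_{2j}\\
0&\boxed{0}&1
\end{array}\right),
\]
formula \eqref{alt} gives $q^i_{jk}(A)=-|B|_{32}$, the quasideterminant expanded about the boxed $(3,2)$ entry, so that $q^i_{jk}(A)=(0,1)M^{-1}c$, where $M$ is the $2\times2$ submatrix of $B$ on rows $1,2$ and columns $1,3$ and $c$ is the column on rows $1,2$ and column $2$. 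The action $A\mapsto g\cdot A=gA$ left--multiplies every column of $A$ by $g$, so it replaces $B$ by $\tilde g B$ with $\tilde g=\left(\begin{smallmatrix}g&0\\0&1\end{smallmatrix}\right)\in\mathrm{GL}_3(\mr)$. The decisive structural point is that the third row of $\tilde g$ is $(0,0,1)$: the boxed row of $B$ is left untouched, and only rows $1$ and $2$ are transformed invertibly among themselves. Hence the whole statement reduces to the invariance of $|B|_{32}$ under an invertible left transformation of the non--boxed rows.

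I would establish this invariance for the three elementary operations on the non--boxed rows: left--scaling a row by an invertible element of $\mr$, adding a left--multiple of one non--boxed row to another, and interchanging two non--boxed rows. Each such operation sends $(M,c)\mapsto(EM,Ec)$ for a single invertible $E\in\mathrm{GL}_2(\mr)$ built from the operation, while leaving the row--$3$ vector $(0,1)$ and the boxed entry untouched. The expansion then transforms as $(0,1)(EM)^{-1}(Ec)=(0,1)M^{-1}E^{-1}Ec=(0,1)M^{-1}c$, so $|B|_{32}$ is unchanged. Since $\mr$ is a skew field, Gaussian elimination factors any $g\in\mathrm{GL}_2(\mr)$ into such elementary operations acting on rows $1,2$ only, never involving row $3$; iterating yields $|\tilde g B|_{32}=|B|_{32}$, and therefore $q^i_{jk}(g\cdot A)=q^i_{jk}(A)$.

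The main obstacle is the noncommutative row--operation lemma itself: the cancellation $E^{-1}E$ is immediate, but it is essential that the operation act by \emph{left} multiplication, so that the same factor $E$ stands to the left of both $M$ and $c$, and that each $E$ be invertible --- this is exactly where the skew--field hypothesis on $\mr$ enters. As a self--contained check bypassing the elementary factorization, I can argue directly from $q^i_{jk}(A)=|A_{ij}|_{12}^{-1}|A_{ik}|_{12}$, where $|A_{ij}|_{12}=a_{1j}-a_{1i}a_{2i}^{-1}a_{2j}$ denotes the $2\times2$ quasideterminant on columns $i,j$. Writing $g=\left(\begin{smallmatrix}\al&\be\\\gamma&\delta\end{smallmatrix}\right)$ and expanding $|gA_{ij}|_{12}$ by collecting the coefficients of $a_{1j}$ and $a_{2j}$, one finds $|gA_{ij}|_{12}=Z_i\,|A_{ij}|_{12}$ with $Z_i=\al-(\al a_{1i}+\be a_{2i})(\gamma a_{1i}+\delta a_{2i})^{-1}\gamma$ depending only on column $i$ and on $g$; the matching of coefficients rests on the identity $\gamma a_{1i}a_{2i}^{-1}+\delta=(\gamma a_{1i}+\delta a_{2i})a_{2i}^{-1}$. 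Because $Z_i$ is independent of the second index, it is common to $|gA_{ij}|_{12}$ and $|gA_{ik}|_{12}$ and cancels in the ratio, giving $q^i_{jk}(g\cdot A)=|A_{ij}|_{12}^{-1}Z_i^{-1}Z_i|A_{ik}|_{12}=q^i_{jk}(A)$.
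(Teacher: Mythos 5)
Your proposal is correct, and it takes a genuinely different route from the paper's proof. The paper starts from the ratio form of Definition~\ref{def2.1} (expanded at the $(2,2)$-position), factors $g$ into triangular matrices via an LU decomposition, and asserts that each factor multiplies the quasideterminant on the left by an element depending only on $g$ (namely $g_{22}$ in \eqref{triangular} and $g_{22}-g_{21}g_{11}^{-1}g_{12}$ in \eqref{quasi1}), which then cancels in the ratio. Your first argument instead works with the $3\times 3$ quasideterminant \eqref{alt}, and its entire content is that $\tilde g=\mathrm{diag}(g,1)$ leaves the boxed row $(0,0,1)$ untouched while the inverse of a product reverses order; in fact your Gaussian-elimination scaffolding is unnecessary, since $(0,1)(gM)^{-1}(gc)=(0,1)M^{-1}g^{-1}gc=(0,1)M^{-1}c$ already holds for the given $g\in\text{GL}_2(\mr)$ itself, with no factorization into elementary operations. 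Your second argument is closer in strategy to the paper's (exhibit a common left factor and cancel it in the ratio), but it is sharper on the decisive point: for general $g$ the left factor is \emph{not} independent of the columns --- already in the commutative case $|gA_{ij}|_{22}=\det g\,\det A_{ij}\,(g_{11}a_{1i}+g_{12}a_{2i})^{-1}$, so the paper's upper-triangular step and \eqref{quasi1} claim more than is true --- whereas your $Z_i$ correctly depends on $g$ and the pivot column $i$ only, and independence of the second index $j,k$ is exactly what the cancellation $|A_{ij}|_{12}^{-1}Z_i^{-1}Z_i|A_{ik}|_{12}$ requires. In this sense your direct computation is a repaired form of the paper's argument; it also has the minor merit of using the row-$1$ boxes of Definition~\ref{def2.1} itself, whereas the paper silently switches to row-$2$ expansions. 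The only items worth stating explicitly in your write-up are the standing invertibility assumptions ($M$, $\gamma a_{1i}+\delta a_{2i}$, and $Z_i$ invertible, the last following from definedness of both sides of $|gA_{ij}|_{12}=Z_i|A_{ij}|_{12}$), which the paper likewise leaves implicit.
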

	\begin{proof}
		This proposition is easily verified if $g$ is an invertible upper/lower triangular matrix. 
		For an invertible upper triangular matrix, we have
		\begin{align}\label{triangular}
			\left|\left(\begin{array}{cc}
				g_{11}&0\\
				g_{21}&g_{22}
			\end{array}\right)\left(\begin{array}{cc}
				a_{1i}&a_{1j}\\
				a_{2i}&a_{2j}
			\end{array}\right)
			\right|_{22}
			=\left|\begin{array}{cc}
				g_{11}a_{1i}&g_{11}a_{1j}\\
				g_{21}a_{1i}+g_{22}a_{2i}&\boxed{g_{21}a_{1j}+g_{22}a_{2j}}
			\end{array}
			\right|
			=g_{22}\left|\begin{array}{cc}
				a_{1i}&a_{1j}\\
				a_{2i}&\boxed{a_{2j}}
			\end{array}\right|,
		\end{align}
		where $|\cdot|_{22}$ means the expansion of a quasideterminant from the $(2,2)$-position.
		Similarly, we have 
		\begin{align*}
			\left|\left(\begin{array}{cc}
				g_{11}&g_{12}\\
				0&g_{22}
			\end{array}
			\right)\left(\begin{array}{cc}
				a_{1i}&a_{1j}\\
				a_{2i}&a_{2j}
			\end{array}
			\right)
			\right|_{22}=g_{22}\left|\begin{array}{cc}
				a_{1i}&a_{1j}\\
				a_{2i}&\boxed{a_{2j}}
			\end{array}
			\right|.
		\end{align*}
		Therefore, when $g$ is a general invertible matrix, we have the following LU-decomposition
		\begin{align*}
			\left(\begin{array}{cc}
				g_{11}&g_{12}\\
				g_{21}&g_{22}
			\end{array}\right)=\left(\begin{array}{cc}
				1&0\\
				g_{21}g_{11}^{-1}&1
			\end{array}
			\right)\left(\begin{array}{cc}
				g_{11}&g_{12}\\
				0&g_{22}-g_{21}g_{11}^{-1}g_{12}
			\end{array}
			\right).
		\end{align*}
		By using \eqref{triangular}, we have
		\begin{align}\label{quasi1}
			\left|\left(\begin{array}{cc}
				g_{11}&g_{12}\\
				g_{21}&g_{22}
			\end{array}\right)\left(\begin{array}{cc}
				a_{1i}&a_{1j}\\
				a_{2i}&a_{2j}
			\end{array}\right)
			\right|_{22}=(g_{22}-g_{21}g_{11}^{-1}g_{12})\left|\begin{array}{cc}
				a_{1i}&a_{1j}\\
				a_{2i}&\boxed{a_{2j}}
			\end{array}
			\right|,
		\end{align}
		which gives the result by the definition of quasi-Pl\"ucker coordinates Definition \ref{def2.1}.
	\end{proof}

	\begin{proposition}\label{prop4}
		Let $\Lambda=\text{diag}(\lambda_1,\cdots,\lambda_n)\in\text{GL}_n(\mr)$. Then 
		\begin{align*}
			q_{jk}^i(A\cdot\Lambda)=\lambda_j^{-1}q_{jk}^{i}(A)\lambda_k.
		\end{align*}
	\end{proposition}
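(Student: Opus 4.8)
The plan is to reduce everything to the behaviour of a single $2\times 2$ quasi-determinant under right multiplication of its columns by scalars, since $A\cdot\Lambda$ simply rescales the $m$-th column of $A$ on the right by $\lambda_m$. Writing out the $(1,2)$-expansion from the appendix, one has for any two column indices $u,v$
\[
\left|\begin{array}{cc}
a_{1u}&\boxed{a_{1v}}\\
a_{2u}&a_{2v}
\end{array}\right|=a_{1v}-a_{1u}a_{2u}^{-1}a_{2v}.
\]
First I would record the key homogeneity identity: replacing the columns $u,v$ by $a_{\bullet u}\mu$ and $a_{\bullet v}\nu$ with $\mu,\nu\in\mr$ invertible gives
\[
\left|\begin{array}{cc}
a_{1u}\mu&\boxed{a_{1v}\nu}\\
a_{2u}\mu&a_{2v}\nu
\end{array}\right|
=a_{1v}\nu-a_{1u}\mu\,(a_{2u}\mu)^{-1}a_{2v}\nu
=\bigl(a_{1v}-a_{1u}a_{2u}^{-1}a_{2v}\bigr)\nu,
\]
because the factor $\mu$ from the non-boxed column cancels against the $\mu^{-1}$ coming from $(a_{2u}\mu)^{-1}=\mu^{-1}a_{2u}^{-1}$, while the scalar $\nu$ of the boxed column factors out on the right. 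Thus scaling the non-boxed column leaves the quasi-determinant fixed, and scaling the boxed column multiplies it on the right by the corresponding scalar.

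Next I would apply this to the two factors in Definition \ref{def2.1}. Since $A\cdot\Lambda$ multiplies columns $i,j,k$ on the right by $\lambda_i,\lambda_j,\lambda_k$, the homogeneity identity yields
\[
\left|\begin{array}{cc}
a_{1i}\lambda_i&\boxed{a_{1j}\lambda_j}\\
a_{2i}\lambda_i&a_{2j}\lambda_j
\end{array}\right|
=\left|\begin{array}{cc}
a_{1i}&\boxed{a_{1j}}\\
a_{2i}&a_{2j}
\end{array}\right|\lambda_j,
\qquad
\left|\begin{array}{cc}
a_{1i}\lambda_i&\boxed{a_{1k}\lambda_k}\\
a_{2i}\lambda_i&a_{2k}\lambda_k
\end{array}\right|
=\left|\begin{array}{cc}
a_{1i}&\boxed{a_{1k}}\\
a_{2i}&a_{2k}
\end{array}\right|\lambda_k,
\]
the shared scalar $\lambda_i$ of the common non-boxed column $i$ dropping out of both. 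Taking the inverse of the first identity turns its right factor $\lambda_j$ into a left factor $\lambda_j^{-1}$, and substituting both into $q_{jk}^i(A\cdot\Lambda)=|\cdot|^{-1}|\cdot|$ gives $\lambda_j^{-1}q_{jk}^i(A)\lambda_k$, as claimed.

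I do not expect a genuine obstacle here; the one point requiring care is the non-commutativity when inverting, namely tracking that the boxed-column scalar must be peeled off on the \emph{right} so that inversion of the first factor correctly produces $\lambda_j^{-1}$ on the left rather than on the right. Everything else is a direct substitution, and the argument is in the same spirit as the verification of Proposition \ref{prop3}.
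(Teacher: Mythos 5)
Your proof is correct and takes essentially the same route as the paper: a direct expansion of the two $2\times 2$ quasi-determinants showing that the scalar $\lambda_i$ attached to the shared non-boxed column cancels, while $\lambda_j$ and $\lambda_k$ peel off on the right, so that inverting the first factor produces $\lambda_j^{-1}$ on the left. The only cosmetic difference is that you box the first-row entries (matching Definition \ref{def2.1}) while the paper's proof boxes the second-row entries; these give the same quasi-Pl\"ucker coordinate, so the arguments are identical in substance.
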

	\begin{proof}
		Noting that 
		\begin{align*}
			A\cdot\Lambda=\left(\begin{array}{ccc}
				a_{11}\lambda_1&\cdots&a_{1n}\lambda_n\\
				a_{21}\lambda_1&\cdots&a_{2n}\lambda_n
			\end{array}
			\right),
		\end{align*}
		we get the formula
		\begin{align*}
			q_{jk}^i (A\cdot\Lambda)&=\left|\begin{array}{cc}
				a_{1i}\lambda_i&a_{1j}\lambda_j\\
				a_{2i}\lambda_i&\boxed{a_{2j}\lambda_j}
			\end{array}
			\right|^{-1}\left|\begin{array}{cc}
				a_{1i}\lambda_i&a_{1k}\lambda_k\\
				a_{2i}\lambda_i&\boxed{a_{2k}\lambda_k}
			\end{array}
			\right|\\
			&=\left(
			\left|\begin{array}{cc}
				a_{1i}&a_{1j}\\
				a_{2i}&\boxed{a_{2i}}
			\end{array}
			\right|\lambda_j
			\right)^{-1}\left(
			\left|\begin{array}{cc}
				a_{1i}&a_{1k}\\
				a_{2i}&\boxed{a_{2k}}
			\end{array}
			\right|\lambda_{k}
			\right)=\lambda_j^{-1}q_{jk}^i(A)\lambda_k.
		\end{align*}
	\end{proof}
	The followings are some properties of quasi-Pl\"ucker coordinates when the coordinate matrix $A$ is fixed. We denote $q_{jk}^i$ instead of $q_{jk}^i(A)$ where it can't lead to a confusion. 
	\begin{proposition}
		If $j\ne i$ and $j\ne k$, then $(q_{ik}^j)^{-1}=q_{ki}^j$. 
	\end{proposition}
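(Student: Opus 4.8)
The plan is to reduce both sides to the same two underlying $2\times2$ quasi-determinants and observe that they differ only in the order of multiplication. First I would fix the coordinate matrix $A$ and note that both $q^j_{ik}$ and $q^j_{ki}$ share the common base superscript $j$; the hypotheses $j\neq i$ and $j\neq k$ are precisely the conditions Definition \ref{def2.1} requires for these two coordinates to be defined, and since $\mr$ is a skew field the quasi-determinants appearing in the denominators are invertible. With this in mind I would introduce the shorthand
\[
\Delta^{\ell} := \left|\begin{array}{cc} a_{1j} & \boxed{a_{1\ell}}\\ a_{2j} & a_{2\ell}\end{array}\right|,\qquad \ell\in\{i,k\},
\]
for the quasi-determinant expanded from the $(1,2)$-entry with base column $j$.

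Reading off the two coordinates directly from Definition \ref{def2.1} with base index $j$ then gives
\[
q^j_{ik} = (\Delta^i)^{-1}\Delta^k,\qquad q^j_{ki} = (\Delta^k)^{-1}\Delta^i.
\]
The key—and essentially only—observation is that these expressions are assembled from the \emph{same} pair $\Delta^i,\Delta^k$ in opposite order, so that the numerator of one coordinate is the denominator of the other. Inverting the first identity and using $(XY)^{-1}=Y^{-1}X^{-1}$ yields $(q^j_{ik})^{-1} = (\Delta^k)^{-1}\Delta^i = q^j_{ki}$, which is exactly the claimed relation.

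I do not expect a genuine obstacle here: the content is purely bookkeeping, and the subtlety is only notational, namely verifying that the reindexing in $q^j_{ik}$ versus $q^j_{ki}$ keeps the base index $j$ fixed while swapping the roles of $i$ and $k$ between the two quasi-determinants. Unlike Proposition \ref{prop3} and Proposition \ref{prop4}, no non-commutative Jacobi identity or LU-type factorization is needed; once the shared base index is recognized, the statement follows immediately from the defining formula, and I would present it in the two displayed lines above.
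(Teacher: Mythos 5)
Your proof is correct and is exactly the direct verification from Definition \ref{def2.1} that the paper alludes to (the paper gives no further detail): with base column $j$, both coordinates are products of the same two quasi-determinants $\Delta^i,\Delta^k$ in opposite order, so inverting $q^j_{ik}=(\Delta^i)^{-1}\Delta^k$ immediately gives $q^j_{ki}$. Your identification of the hypotheses $j\neq i$, $j\neq k$ as precisely what makes both coordinates well defined matches the paper's setup as well.
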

	This proposition could be directly verified from the definition of quasi-Pl\"ucker coordinates.
	\begin{proposition}
		The quasi-Pl\"ucker coordinates satisfy the following non-commutative skew symmetry relation 		\begin{align}\label{ncss}
			q_{ij}^kq_{jk}^iq_{ki}^j=-1.
		\end{align}
	\end{proposition}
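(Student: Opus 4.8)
The plan is to prove the non-commutative skew symmetry relation $q_{ij}^kq_{jk}^iq_{ki}^j=-1$ by reducing everything to explicit quasi-determinant expressions and exploiting the action invariance already established. My first step would be to write each of the three factors using the alternative single-quasi-determinant formula \eqref{alt}, or equivalently the original ratio form of Definition \ref{def2.1}. In the ratio form the three factors are
\begin{align*}
q_{ij}^k &=\left|\begin{array}{cc} a_{1k}&a_{1i}\\ a_{2k}&\boxed{a_{2i}} \end{array}\right|^{-1}\left|\begin{array}{cc} a_{1k}&a_{1j}\\ a_{2k}&\boxed{a_{2j}} \end{array}\right|,\\
q_{jk}^i &=\left|\begin{array}{cc} a_{1i}&a_{1j}\\ a_{2i}&\boxed{a_{2j}} \end{array}\right|^{-1}\left|\begin{array}{cc} a_{1i}&a_{1k}\\ a_{2i}&\boxed{a_{2k}} \end{array}\right|,\\
q_{ki}^j &=\left|\begin{array}{cc} a_{1j}&a_{1k}\\ a_{2j}&\boxed{a_{2k}} \end{array}\right|^{-1}\left|\begin{array}{cc} a_{1j}&a_{1i}\\ a_{2j}&\boxed{a_{2i}} \end{array}\right|,
\end{align*}
and I would introduce the shorthand $D_{pq}=\left|\begin{smallmatrix} a_{1p}&a_{1q}\\ a_{2p}&a_{2q}\end{smallmatrix}\right|$ for the quasi-determinant expanded at the boxed $(2,2)$-entry, so that $q_{pq}^r=D_{rp}^{-1}D_{rq}$ after the columns are arranged with the pivot column first.

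The key observation is a transposition-type relation among the $D_{pq}$. Using Proposition \ref{prop3}, the quasi-Pl\"ucker coordinates are invariant under the left $\mathrm{GL}_2(\mr)$-action, so I may normalize the coordinate matrix. The cleanest route is to swap the two columns: interchanging columns $p$ and $q$ multiplies the relevant $2\times2$ quasi-determinant by $-1$ up to a left factor coming from the other expansion point, which yields the crucial identity $D_{pq}=-D_{qp}\,\gamma$ for an appropriate conjugating factor, or in the fully reduced form the ``antisymmetry'' $D_{qp}^{-1}=-\gamma^{-1}D_{pq}^{-1}$. Substituting all three factors into the product $q_{ij}^kq_{jk}^iq_{ki}^j$ and tracking how the $D$-factors cancel in pairs is the heart of the computation: the product telescopes, leaving a single residual $-1$ once the antisymmetry is applied an odd number of times (three column swaps, one per factor), which is precisely where the minus sign originates.

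The main obstacle I anticipate is bookkeeping the noncommutativity: in the commutative case the three ratios obviously telescope to give $+1$ (the correct sign there being $-1$ only through an orientation convention), but over a skew field the order of the inverse factors matters and one cannot freely commute a $D_{pq}$ past a $D_{rs}^{-1}$. I would therefore be careful to keep each factor in the exact left-to-right order dictated by the product and verify that consecutive numerator/denominator quasi-determinants genuinely coincide as elements of $\mr$ (not merely up to conjugation) before cancelling. A convenient safeguard is to invoke Proposition \ref{prop4} or the homogeneity of quasi-Pl\"ucker coordinates to fix representatives, reducing the identity to the single scalar relation among the three $2\times2$ quasi-determinants. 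Once the antisymmetry $D_{pq}=-D_{qp}$ (in suitably normalized coordinates) is installed, the collapse to $-1$ is immediate; the real work is justifying that normalization via the invariance Proposition \ref{prop3} without disturbing the coordinates $q_{jk}^i$.
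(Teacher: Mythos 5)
There is a genuine gap, and it sits exactly at the step you call ``immediate.'' Your mechanism is: install an antisymmetry $D_{pq}=-D_{qp}$ by normalization, then let the six-factor product telescope by pairwise cancellation, with three sign flips producing $-1$. The telescoping claim is false in the non-commutative setting, even after the best possible normalization. First, the normalization itself cannot come from Proposition \ref{prop3}: making the antisymmetry hold forces the first row to be constant, and the left $\mathrm{GL}_2(\mr)$ action gives only two parameters against three conditions. The correct tool is Proposition \ref{prop4}: right-scale the columns by $\lambda_p=a_{1p}^{-1}$, so the first row becomes $(1,1,1)$; the product $q_{ij}^kq_{jk}^iq_{ki}^j$ is then only conjugated by $\lambda_i$, which is harmless because $-1$ is central. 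After this normalization one indeed has $D_{pq}=u_q-u_p$ with $u_p=a_{2p}$ (genuine antisymmetry), and your ratio forms give
\begin{align*}
q_{ij}^kq_{jk}^iq_{ki}^j=(u_i-u_k)^{-1}(u_j-u_k)\,(u_j-u_i)^{-1}(u_k-u_i)\,(u_k-u_j)^{-1}(u_i-u_j),
\end{align*}
in which \emph{no two adjacent factors coincide}, so nothing cancels pairwise. (Your parenthetical that the commutative product telescopes to $+1$ is also wrong: commutatively the three antisymmetries give $(-1)^3=-1$ outright, which should have been a warning that the sign is not an ``orientation convention.'') What actually finishes the computation is a non-commutative identity you never supply: writing $B=u_j-u_k$, $C=u_j-u_i$, so that $B-C=u_i-u_k$, the product equals $-(B-C)^{-1}BC^{-1}(B-C)B^{-1}C$, and one needs
\begin{align*}
BC^{-1}(B-C)=(B-C)C^{-1}B
\end{align*}
(both sides equal $BC^{-1}B-B$) to collapse it to $-C^{-1}C=-1$. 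Without this identity, or something equivalent, your argument does not close; note also that in unnormalized coordinates the ``antisymmetry'' is $D_{pq}=-D_{qp}\,a_{1p}^{-1}a_{1q}$, a non-central \emph{right} factor rather than a conjugation, so the bookkeeping there is strictly worse than your sketch suggests.

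For comparison, the paper avoids normalization entirely: it applies the non-commutative Jacobi identity \eqref{ncj1} to the single-quasi-determinant form \eqref{alt} to prove the two-factor relation $q_{ij}^kq_{jk}^i=-q_{ik}^j$, and then concludes with the inversion relation $(q_{ik}^j)^{-1}=q_{ki}^j$. Your route can be repaired along the lines above (Proposition \ref{prop4} normalization plus the displayed identity), and it is arguably more concrete once repaired, but as written the proposal identifies the obstacle--non-commutativity of the factors--without providing the ingredient that overcomes it.
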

	\begin{proof}
		According to the quasi-determinant expressions for quasi-Pl\"ucker coordinates \eqref{alt}, and make use of Non-commutative Jacobi identity \eqref{ncj1}, we have
		\begin{align*}
			q_{ij}^k=-\left|\begin{array}{ccc}
				a_{1k}&a_{1j}&a_{1i}\\
				a_{2k}&a_{2j}&a_{2i}\\
				0&\boxed{0}&1
			\end{array}
			\right|
			=\left|\begin{array}{cc}
				a_{1j}&a_{1i}\\
				\boxed{0}&1
			\end{array}
			\right|-\left|\begin{array}{cc}	
				a_{1k}&a_{1i}\\
				\boxed{0}&1
			\end{array}
			\right|\left|\begin{array}{cc}
				a_{1k}&a_{1i}\\
				\boxed{a_{2k}}&a_{2i}	
			\end{array}
			\right|^{-1}\left|\begin{array}{cc}
				a_{1j}&a_{1i}\\
				\boxed{a_{2j}}&a_{2i}
			\end{array}
			\right|,
		\end{align*} 
		from which we get the linear relation
		\begin{align*}
			q_{ij}^k=a_{1i}^{-1}(a_{1k}+a_{1j}q_{kj}^i).
		\end{align*}
		Therefore, by directly expanding the formula, we could obtain
		\begin{align*}
			q_{ij}^kq_{jk}^i
			&=-\left|\begin{array}{cc}
				a_{1i}&a_{1k}\\
				1&\boxed{0}
			\end{array}
			\right|+\left|\begin{array}{cc}
				a_{1i}&a_{1j}\\
				1&\boxed{0}
			\end{array}
			\right|\cdot\left|\begin{array}{cc}
				a_{1i}&a_{1j}\\
				a_{2i}&\boxed{a_{2j}}
			\end{array}
			\right|^{-1}\cdot\left|\begin{array}{cc}
				a_{1i}&a_{1k}\\
				a_{2i}&\boxed{a_{2k}}
			\end{array}
			\right|\\
			&=-\left|\begin{array}{ccc}
				a_{1i}&a_{1k}&a_{1j}\\
				a_{2i}&a_{2k}&a_{2j}\\
				1&\boxed{0}&0
			\end{array}
			\right|=-q_{ik}^j.
		\end{align*}
		Use of the relation $(q_{ik}^j)^{-1}=q_{ki}^j$ is made to complete the proof.
	\end{proof}
	Moreover, we have the following non-commutative Pl\"ucker identity.
	\begin{proposition}
		Quasi-Pl\"ucker coordinates satisfy the following non-commutative Pl\"ucker identity
		\begin{align*}
			q_{ij}^kq_{ji}^l+q_{il}^kq_{li}^j=1.
		\end{align*}
	\end{proposition}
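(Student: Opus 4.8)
The plan is to exploit the linear-algebraic meaning of the quasi-Pl\"ucker coordinates. Writing $A_m=(a_{1m},a_{2m})^{T}$ for the $m$-th column of $A$, the first step I would take is to record the \emph{expansion formula}: for three distinct indices $a,b,c$ one has
\[
A_b=A_a\,q_{ab}^{c}+A_c\,q_{cb}^{a}.
\]
This is immediate from Definition \ref{def2.1}: solving the linear system $A_b=A_a\,\xi+A_c\,\eta$ for $\xi,\eta\in\mr$ by eliminating $\xi$ from the second row gives $\eta=\bigl(a_{1c}-a_{1a}a_{2a}^{-1}a_{2c}\bigr)^{-1}\bigl(a_{1b}-a_{1a}a_{2a}^{-1}a_{2b}\bigr)$, and the two factors are exactly the quasi-determinants defining $q_{cb}^{a}$; by the symmetry of the roles of $a$ and $c$ the coefficient of $A_a$ is $q_{ab}^{c}$. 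This turns the Pl\"ucker identity into a bookkeeping statement about change of basis.

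Next I would apply this expansion twice. Expanding $A_i$ in the basis $\{A_j,A_l\}$ yields $A_i=A_j\,q_{ji}^{l}+A_l\,q_{li}^{j}$, while expanding each of $A_j$ and $A_l$ in the basis $\{A_k,A_i\}$ yields $A_j=A_k\,q_{kj}^{i}+A_i\,q_{ij}^{k}$ and $A_l=A_k\,q_{kl}^{i}+A_i\,q_{il}^{k}$. Substituting the latter two into the former and regrouping the terms gives
\[
A_i=A_k\bigl(q_{kj}^{i}q_{ji}^{l}+q_{kl}^{i}q_{li}^{j}\bigr)+A_i\bigl(q_{ij}^{k}q_{ji}^{l}+q_{il}^{k}q_{li}^{j}\bigr).
\]
Because $A_i$ and $A_k$ are independent over $\mr$ (this is precisely what makes the relevant $2\times2$ quasi-determinants invertible), the right-coordinate expansion of $A_i$ in the basis $\{A_i,A_k\}$ is unique, so I may compare coefficients. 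The coefficient of $A_i$ must equal $1$, which is the claimed identity $q_{ij}^{k}q_{ji}^{l}+q_{il}^{k}q_{li}^{j}=1$; as a byproduct, matching the coefficient of $A_k$ produces the companion relation $q_{kj}^{i}q_{ji}^{l}+q_{kl}^{i}q_{li}^{j}=0$.

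I expect the only point genuinely needing care to be the justification of comparing coefficients in the non-commutative setting: I must confirm that a column has a \emph{unique} right-coordinate expansion over a pair of independent columns, so that the two scalars in $\mr$ are well-defined. As a safeguard I would keep a purely computational route in reserve. Using the $\text{GL}_2(\mr)$-invariance of Proposition \ref{prop3} to normalize $A_i=(1,0)^{T}$ and $A_k=(0,1)^{T}$, and writing $A_j=(x_j,y_j)^{T}$, $A_l=(x_l,y_l)^{T}$, one finds directly $q_{ij}^{k}=x_j$, $q_{il}^{k}=x_l$, $q_{ji}^{l}=(x_j-x_ly_l^{-1}y_j)^{-1}$ and $q_{li}^{j}=(x_l-x_jy_j^{-1}y_l)^{-1}$, after which the whole statement collapses to the elementary identity $s-uv^{-1}t=-(u-st^{-1}v)\,v^{-1}t$ with $u=x_j,\,v=y_j,\,s=x_l,\,t=y_l$. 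I would present the basis-expansion argument as the main proof and use this normalization only as a verification.
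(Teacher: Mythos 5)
Your proof is correct, but it takes a genuinely different route from the paper's. The paper stays inside quasi-determinant calculus: it applies the non-commutative Jacobi identity \eqref{ncj1} in two ways to the bordered quasi-determinant
\begin{align*}
	\left|\begin{array}{cccc}
		a_{1l}&a_{1i}&a_{1j}&a_{1k}\\
		a_{2l}&a_{2i}&a_{2j}&a_{2k}\\
		0&1&0&0\\
		\boxed{0}&0&1&0
	\end{array}\right|,
\end{align*}
obtaining the three-index relation $q_{jl}^k=q_{jl}^i-q_{jk}^iq_{kl}^j$ of \eqref{ncplucker}, and then converts it to the stated form using the skew-symmetry relation \eqref{ncss}. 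You instead work from the linear-algebraic meaning of the quasi-Pl\"ucker coordinates: the expansion formula $A_b=A_a\,q_{ab}^{c}+A_c\,q_{cb}^{a}$ (which you correctly extract from Definition \ref{def2.1} by solving the $2\times 2$ system), followed by a double change of basis and uniqueness of right-coordinate expansions over a skew field. Both arguments are sound under the same standing genericity assumptions (the relevant quasi-determinants are defined and invertible, which is exactly the invertibility of the $2\times2$ coordinate matrices you need for uniqueness). What your route buys is conceptual clarity --- the Pl\"ucker identity becomes a transitivity-of-base-change statement --- plus the companion relation $q_{kj}^{i}q_{ji}^{l}+q_{kl}^{i}q_{li}^{j}=0$ for free, and it avoids both the Jacobi identity and \eqref{ncss}. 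What the paper's route buys is the intermediate identity \eqref{ncplucker}, which is itself a useful form of the Pl\"ucker relation, and it exercises the same quasi-determinant machinery (Jacobi and homological relations) that the paper relies on repeatedly in later sections. One small caution on your backup computation: after normalizing $A_i=(1,0)^{T}$, $A_k=(0,1)^{T}$, any quasi-Pl\"ucker coordinate with superscript $i$ becomes undefined (since $a_{2i}=0$), so that verification covers the main identity but not the companion relation; this does not affect your primary argument.
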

	\begin{proof}
		This proposition could be proved by acting non-commutative Jacobi identity \eqref{ncj1} onto the quasi-determinant
		\begin{align*}
			A=\left|\begin{array}{cccc}	
				a_{1l}&a_{1i}&a_{1j}&a_{1k}\\
				a_{2l}&a_{2i}&a_{2j}&a_{2k}\\
				0&1&0&0\\
				\boxed{0}&0&1&0
			\end{array}
			\right|.
		\end{align*}
		On the one hand, if we apply non-commutative Jacobi identity to the $(1,4)$-columns and $(3,4)$-rows, we know that $A=q_{jl}^i-q_{jk}^i(q_{ik}^j)^{-1}q_{il}^j$. On the other hand, if we apply the identity to the $(1,2)$-columns and $(3,4)$-rows, we have $A=q_{jl}^k$. Therefore, we obtain the relation
		\begin{align}\label{ncplucker}
			q_{jl}^k=q_{jl}^i-q_{jk}^i(q_{ik}^j)^{-1}q_{il}^j=q_{jl}^i-q_{jk}^iq_{kl}^j.
		\end{align}
		By making use of \eqref{ncss}, we complete the proof.
	\end{proof}

	\subsection{Non-commutative cross-ratios and their properties}
	
	Let 
	\begin{align*}
		x=\left(\begin{array}{c}
			x_1\\x_2
		\end{array}
		\right),\quad y=\left(\begin{array}{c}
			y_1\\y_2
		\end{array}
		\right),\quad z=\left(\begin{array}{c}
			z_1\\z_2
		\end{array}
		\right),\quad t=\left(\begin{array}{c}
			t_1\\t_2
		\end{array}
		\right)
	\end{align*}
	represent four arbitrary vectors in $\mr^2$, and corresponding non-commutative cross-ratio $\kappa=\kappa(x,y,z,t)$ can be defined by equations
	\begin{align}\label{cr}
		\left\{\begin{array}{l}
			t=x\alpha+y\beta\\
			z=x\alpha\gamma+y\beta\gamma\cdot\kappa
		\end{array}
		\right.
	\end{align}
	where $\alpha,\beta,\gamma,\kappa\in\mr$. Using the coordinate matrix
	\begin{align*}
		X=\left(\begin{array}{cccc}
			x_1&y_1&z_1&t_1\\
			x_2&y_2&z_2&t_2
		\end{array}
		\right)\in\mr^{2\times 4},
	\end{align*}
	we show that non-commutative cross-ratio can be expressed by quasi-Pl\"ucker coordinates.
	\begin{proposition}
		The non-commutative cross-ratio defined by \eqref{cr} could be written in terms of quasi-Pl\"ucker coordinates as
		\begin{align}\label{nccr}
			\kappa(x,y,z,t)=q_{zt}^y q_{tz}^x.
		\end{align}
	\end{proposition}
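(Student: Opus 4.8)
The plan is to read the two relations in \eqref{cr} as linear decompositions of the vectors $t$ and $z$ over the pair $\{x,y\}$, to solve for the coefficients by non-commutative elimination, to identify these coefficients with quasi-Pl\"ucker coordinates, and finally to eliminate the auxiliary scalar $\gamma$.

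First I would treat a generic decomposition $c_k=c_i\mu+c_j\nu$ of vectors $c_i,c_j,c_k\in\mr^2$. Solving the coordinate equations $c_{1k}=c_{1i}\mu+c_{1j}\nu$ and $c_{2k}=c_{2i}\mu+c_{2j}\nu$ by left division, eliminating $\mu$ via the first row, produces
\[
\nu=\left|\begin{array}{cc} c_{1i}&c_{1j}\\ c_{2i}&\boxed{c_{2j}}\end{array}\right|^{-1}\left|\begin{array}{cc} c_{1i}&c_{1k}\\ c_{2i}&\boxed{c_{2k}}\end{array}\right|,
\]
and, symmetrically, a companion expression for $\mu$. By Definition \ref{def2.1} together with the row-swap invariance of Proposition \ref{prop3}, these ratios are exactly $\nu=q^i_{jk}$ and $\mu=q^j_{ik}$. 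Applying this to $t=x\alpha+y\beta$ gives $\alpha=q^y_{xt}$ and $\beta=q^x_{yt}$, while applying it to $z=x(\alpha\gamma)+y(\beta\gamma\kappa)$ gives $\alpha\gamma=q^y_{xz}$ and $\beta\gamma\kappa=q^x_{yz}$.

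Next I would extract $\kappa$ while cancelling $\gamma$. Since $(\alpha\gamma)^{-1}\alpha=\gamma^{-1}$ and $\beta^{-1}(\beta\gamma\kappa)=\gamma\kappa$, I can write
\[
\kappa=(\alpha\gamma)^{-1}\,\alpha\,\beta^{-1}\,(\beta\gamma\kappa)=(q^y_{xz})^{-1}\,q^y_{xt}\,(q^x_{yt})^{-1}\,q^x_{yz}.
\]
It then remains to simplify the two inner products using the chain rule for quasi-Pl\"ucker coordinates with a common upper index: writing $q^i_{jk}=D_j^{-1}D_k$ with $D_m$ the quasi-determinant built from columns $i$ and $m$, one has $(q^i_{jl})^{-1}q^i_{jk}=D_l^{-1}D_k=q^i_{lk}$. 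With $i=y$ this gives $(q^y_{xz})^{-1}q^y_{xt}=q^y_{zt}$, and with $i=x$ it gives $(q^x_{yt})^{-1}q^x_{yz}=q^x_{tz}$, so that $\kappa=q^y_{zt}\,q^x_{tz}$, which is \eqref{nccr}.

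The individual calculations are routine; the only genuine care required, and hence the main obstacle, is bookkeeping the non-commutative order of all factors (never commuting $\gamma$ or $\kappa$ past $\alpha$ or $\beta$) and justifying that the coefficient of each basis vector equals the stated quasi-Pl\"ucker coordinate irrespective of which entry one expands the quasi-determinant from, which is precisely the content of Proposition \ref{prop3}.
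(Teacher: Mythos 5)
Your proposal is correct and follows essentially the same route as the paper: solve the two coordinate systems in \eqref{cr} to identify $\alpha=q^y_{xt}$, $\beta=q^x_{yt}$, $\alpha\gamma=q^y_{xz}$, $\beta\gamma\kappa=q^x_{yz}$, and then cancel $\gamma$ (your identity $\kappa=(\alpha\gamma)^{-1}\alpha\beta^{-1}(\beta\gamma\kappa)$ is exactly the paper's $\kappa=\gamma^{-1}\beta^{-1}q^x_{yz}$ written out). The one divergence is the final simplification: the paper detours through the claim $\gamma=-q^y_{tz}$, attributed to the skew-symmetry relation \eqref{ncss}, whose sign is in fact off and is compensated by a second sign slip in the last line, whereas your telescoping identity $(q^i_{jl})^{-1}q^i_{jk}=q^i_{lk}$ (valid since both factors share the reference column $i$) gives $\gamma=q^y_{tz}$ and then \eqref{nccr} directly, with signs that check against the commutative specialization — a cleaner and more reliable way to finish the same argument.
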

	\begin{proof}
		According to the first equation in \eqref{cr}, we could write it into coordinates and
		\begin{align*}
			\left\{\begin{array}{l}
				t_1=x_1\alpha+y_1\beta,\\
				t_2=x_2\alpha+y_2\beta.
			\end{array}
			\right.
		\end{align*}
		Solving this linear system with non-commutative coefficients and Proposition \ref{p-ls}, we get
		\begin{align*}
			\alpha=(1,0)\left(\begin{array}{cc}
				x_1&y_1\\
				x_2&y_2
			\end{array}
			\right)^{-1}\left(\begin{array}{c}
				t_1\\t_2\end{array}
			\right)=-\left|\begin{array}{ccc}
				x_1&y_1&t_1\\
				x_2&y_2&t_2\\
				0&1&\boxed{0}
			\end{array}
			\right|=q_{xt}^y,
		\end{align*}
		and similarly $\beta=q_{yt}^x$. 
		By the second equation in \eqref{cr}, one gets
		\begin{align*}
			\gamma=(1,0)\left(\begin{array}{cc}
				x_1\alpha&y_1\beta\\
				x_2\alpha&y_2\beta
			\end{array}
			\right)^{-1}\left(\begin{array}{c}
				z_1\\z_2\end{array}
			\right)=(\alpha^{-1},0)\left(\begin{array}{cc}
				x_1&y_1\\
				x_2&y_2
			\end{array}
			\right)^{-1}\left(\begin{array}{c}
				z_1\\z_2
			\end{array}
			\right)=\alpha^{-1}q_{xz}^y.
		\end{align*}
		Moreover, according to the non-commutative skew symmetry relation \eqref{ncss}, the formula $\gamma=-q_{tz}^y$ is obtained. Therefore, the non-commutative cross-ratio could be computed via the formula
		\begin{align*}
			\kappa=\gamma^{-1}\beta^{-1}q_{yz}^x=-q_{zt}^yq_{ty}^xq_{yz}^x=q_{zt}^yq_{tz}^x.
		\end{align*}
	\end{proof}
	The followings demonstrate how is the non-commutative cross-ratio influenced by the actions on the coordinate matrix $X$. In fact, the coordinate space $\mr^{2\times 4}$ is a $(\text{M}_2(\mr),\text{M}_4(\mr))$-bimodule, which means that there exists a left multiplication action 
	\begin{align}\label{gl2}
		g\cdot X\mapsto gX,\quad g\in M_2(\mr), \,X\in\mr^{2\times 4},
	\end{align}
	and a right multiplication action 
	\begin{align*}
		X\cdot \Lambda\mapsto X\Lambda,\quad \Lambda\in M_4(\mr), \, X\in\mr^{2\times 4}.
	\end{align*}
	Especially, we are mainly interested in the case $g\in\text{GL}_2(\mr)$ and 
	\begin{align*}
		\Lambda\in(\mr^{\times})^{\otimes 4}=\{\text{diag}(\lambda_1,\lambda_2,\lambda_3,\lambda_4),\,\lambda_i\in\mr^\times\}
	\end{align*}
	where $\mr^\times$ denotes the group of all invertible elements in $\mr$. The following theorem demonstrates an explicit relation of non-commutative cross-ratio under the action $\text{GL}_2(\mr)\times(\mr^{\times})^{\otimes 4}$.
	
	\begin{theorem}\label{thm2}
		Let $X=(x,y,z,t)\in\mr^{2\times 4}$, $g\in \text{GL}_2(\mr)$ and $\Lambda=\text{diag}(\lambda_1,\lambda_2,\lambda_3,\lambda_4)\in(\mr^\times)^{\otimes 4}$, then
		\begin{align*}
			\kappa(g\cdot X \cdot \Lambda)	=\kappa(gx\lambda_1,gy\lambda_2,gz\lambda_3,gt\lambda_4)=\lambda_3^{-1}\kappa(x,y,z,t)\lambda_3.
		\end{align*} 
	\end{theorem}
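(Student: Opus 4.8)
The plan is to reduce the statement to the explicit formula $\kappa(x,y,z,t)=q_{zt}^y q_{tz}^x$ recorded in \eqref{nccr}, and then apply the two transformation laws for quasi-Pl\"ucker coordinates already proved in Proposition \ref{prop3} (left $\text{GL}_2(\mr)$ invariance) and Proposition \ref{prop4} (right diagonal scaling). Since the action of $g$ and the action of $\Lambda$ occur on opposite sides of $X$, and matrix multiplication is associative so these two actions commute, I would handle them one at a time and then combine.

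First I would dispose of the left $\text{GL}_2(\mr)$ action. Because $\kappa$ is written purely as a product of quasi-Pl\"ucker coordinates and each coordinate $q_{jk}^i$ is unchanged under left multiplication by any $g\in\text{GL}_2(\mr)$ (Proposition \ref{prop3}), the cross-ratio is left-invariant. Applying Proposition \ref{prop3} with the matrix $X$ replaced by $X\cdot\Lambda$ gives $q_{jk}^i(g\,X\,\Lambda)=q_{jk}^i(X\,\Lambda)$, so the $g$-dependence drops out entirely and it suffices to evaluate $\kappa(X\cdot\Lambda)$.

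Next I would compute the effect of the right diagonal action. By Proposition \ref{prop4}, right multiplication by $\Lambda=\text{diag}(\lambda_1,\lambda_2,\lambda_3,\lambda_4)$ rescales a quasi-Pl\"ucker coordinate only through its two subscripts, the superscript playing no role: $q_{jk}^i(X\,\Lambda)=\lambda_j^{-1}q_{jk}^i(X)\lambda_k$. With the columns of $X$ labelled $(x,y,z,t)$ at positions $(1,2,3,4)$, this yields $q_{zt}^y(X\,\Lambda)=\lambda_3^{-1}q_{zt}^y(X)\lambda_4$ and $q_{tz}^x(X\,\Lambda)=\lambda_4^{-1}q_{tz}^x(X)\lambda_3$. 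The key structural observation, which is really the whole content of the theorem, is that the inner subscripts of the two factors coincide at $t$, so in forming the product $\kappa(X\,\Lambda)=q_{zt}^y(X\,\Lambda)\,q_{tz}^x(X\,\Lambda)$ the factor $\lambda_4$ produced on the right of the first coordinate cancels against the $\lambda_4^{-1}$ on the left of the second; what remains is $\lambda_3^{-1}$ on the far left and $\lambda_3$ on the far right, giving $\lambda_3^{-1}\kappa(X)\lambda_3$.

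I do not expect a serious obstacle: once \eqref{nccr}, Proposition \ref{prop3}, and Proposition \ref{prop4} are in hand, the argument is essentially careful bookkeeping. The only point demanding attention, and the reason the answer comes out as a clean conjugation by $\lambda_3$ rather than an intractable expression in all four scalars, is keeping track of which subscripts of the two quasi-Pl\"ucker factors are ``inner'' (the matched $t$, contributing $\lambda_4$) and which are ``outer'' (the common $z$, contributing $\lambda_3$); in the non-commutative setting the cancellation is legitimate precisely because $\lambda_4$ sits exactly between the two factors and must not be commuted past $\lambda_3$ or past the coordinates themselves.
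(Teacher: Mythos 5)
Your proof is correct and is precisely the argument the paper intends: the paper's own proof of Theorem \ref{thm2} is a one-line remark that it "could be easily verified by using Propositions \ref{prop3} and \ref{prop4}," which is exactly your reduction via \eqref{nccr} to $\kappa=q_{zt}^y q_{tz}^x$, left invariance under $g$, and the scaling law $q_{jk}^i(X\Lambda)=\lambda_j^{-1}q_{jk}^i(X)\lambda_k$ with the inner $\lambda_4$, $\lambda_4^{-1}$ cancelling between the two factors. Your write-up simply supplies the bookkeeping the paper omits.
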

	\begin{proof}
		This theorem could be easily verified by using Propositions \ref{prop3} and \ref{prop4}.
	\end{proof}
	In the commutative case, $\lambda_3^{-1}$ is eliminated by $\lambda_3$, and the cross-ratio is an invariant under the action of linear fractional transformations. 
	For the non-commutative cross-ratio defined above, it is a relative invariant under the action of $\text{GL}_2(\mr)\times (\mr^\times)^{\otimes 4}$ on the space $\mr^{2\times 4}$. Now we can claim the following theorem \cite[Thm. 8]{young82}, from which an integrable non-commutative leapfrog map is introduced.
	\begin{theorem}\label{thm_equi}
		There exists a transformation $\phi\in\text{GL}_2(\mr)\times (\mr^\times)^{\otimes 4}$
		\begin{align*}
			\phi:\quad\mr^{2\times 4}&\to \mr^{2\times 4}\\
			(x,y,z,t)&\mapsto(x',y',z',t'):=(gx\lambda_1,gy\lambda_2,gz\lambda_3,gt\lambda_4)
		\end{align*}
		such that 
		\begin{align}\label{cross_pro}
			\kappa(x,y,z,t)=\lambda_3\kappa(x',y',z',t')\lambda_3^{-1}
		\end{align}
		for some $\lambda_3\in\mr^\times$.
	\end{theorem}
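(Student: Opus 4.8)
The plan is to obtain \eqref{cross_pro} directly from the covariance formula established in Theorem \ref{thm2}, without any fresh computation. Given $(x,y,z,t)\in\mr^{2\times 4}$, I would choose $g\in\text{GL}_2(\mr)$ and $\Lambda=\text{diag}(\lambda_1,\lambda_2,\lambda_3,\lambda_4)\in(\mr^{\times})^{\otimes 4}$, set $\phi=(g,\Lambda)$, and write the image as $(x',y',z',t')=(gx\lambda_1,gy\lambda_2,gz\lambda_3,gt\lambda_4)=g\cdot X\cdot\Lambda$. Theorem \ref{thm2} then gives
\begin{align*}
\kappa(x',y',z',t')=\kappa(g\cdot X\cdot\Lambda)=\lambda_3^{-1}\kappa(x,y,z,t)\lambda_3.
\end{align*}
Conjugating both sides by $\lambda_3$, that is, multiplying on the left by $\lambda_3$ and on the right by $\lambda_3^{-1}$, returns exactly $\kappa(x,y,z,t)=\lambda_3\kappa(x',y',z',t')\lambda_3^{-1}$, which is \eqref{cross_pro}. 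Since this identity holds for \emph{every} admissible pair $(g,\Lambda)$, the mere existence asserted in the statement is automatic; in the trivial instance one may take $g=\Id$ and $\Lambda=\Id$, whence $\lambda_3=1$ and the relation degenerates to $\kappa=\kappa$.

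Because the bare existence claim is thus vacuous, I expect the substantive point—the one that links this theorem to the leapfrog map—to be the construction of a \emph{nontrivial} $\phi$ that realizes a prescribed reconfiguration of the four points while leaving the cross-ratio relative-invariant. The main obstacle is therefore not the identity \eqref{cross_pro} itself but the exhibition of a suitable $g\in\text{GL}_2(\mr)$: for the leapfrog dynamics one needs $g$ to interchange $x\leftrightarrow y$ and fix the distinguished point $z$ (projectively), which is precisely the swap $(S_{i-1},S_{i+1})\mapsto(S_{i+1},S_{i-1})$ with $S_i$ held fixed. In the non-commutative setting one cannot invoke the usual $3$-transitivity of $\mathrm{PGL}_2$ on $\mathbb{P}^1$; instead I would produce $g$ by solving the governing linear systems over $\mr$, in the same spirit as the determination of $\alpha=q_{xt}^y$, $\beta=q_{yt}^x$ and $\gamma$ underlying \eqref{nccr}, using Proposition \ref{p-ls} and the quasi-Pl\"ucker machinery to guarantee invertibility.

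Finally, I would explain the role of the diagonal factors $\lambda_i$: they absorb the scaling ambiguity inherent in representing a projective point of $\mathbb{P}^1$ by a vector in $\mr^2$, so that the action of $\text{GL}_2(\mr)\times(\mr^{\times})^{\otimes 4}$ is exactly the symmetry group of the configuration. The factor $\lambda_3$ survives as the conjugating element in \eqref{cross_pro} because $z$ is the held-fixed point, matching the $\lambda_3^{-1}(\cdot)\lambda_3$ structure of Theorem \ref{thm2}. Thus my organization is: the covariance formula supplies the cross-ratio identity for free, and the solvability of the defining linear systems supplies the transformation $\phi$, the latter being where all the genuine non-commutative work resides.
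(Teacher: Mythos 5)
Your derivation of \eqref{cross_pro} from Theorem \ref{thm2} is correct, and so is your diagnosis: as literally worded, the existence claim is vacuous, since Theorem \ref{thm2} makes \emph{every} pair $(g,\Lambda)$ satisfy the identity after conjugating by $\lambda_3$. The instructive point in comparing with the paper is that the paper does not prove Theorem \ref{thm_equi} at all --- it is imported by citation from Young \cite[Thm.~8]{young82}, and the content of that cited result is precisely the nontrivial \emph{converse} that your proposal anticipates: if two quadruples have cross-ratios that are conjugate by a suitable element of $\mr^\times$, then there exists a transformation in $\text{GL}_2(\mr)\times(\mr^\times)^{\otimes 4}$ carrying one quadruple to the other. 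That converse is what the paper actually relies on: immediately after Definition \ref{def_leap} it asserts that $T$ is a non-commutative leapfrog map \emph{if and only if} the cross-ratio relation \eqref{eq_k} holds, and the ``if'' direction is exactly the completeness statement, not the covariance that you (or Theorem \ref{thm2}) supply. Your sketch of how the converse would be obtained --- producing a concrete $g$ by solving the governing linear systems over $\mr$ via Proposition \ref{p-ls} and the quasi-Pl\"ucker machinery --- is in fact the route the paper takes in practice: in Section \ref{sec_formula} it constructs $g_i$ explicitly by exactly such a linear-system computation, which is where the genuine non-commutative work resides, as you predicted. In summary, your proof of the statement as written is fine and arguably the only reasonable reading of the literal wording; the discrepancy is that the paper's citation, and its subsequent use of the theorem, attach to it a stronger meaning that neither the wording nor your two-line argument captures, and closing that gap requires carrying out in full the construction you only sketched.
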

	
	Besides, if a coordinate matrix $X$ is given, then non-commutative cross-ratios have the following relations if we exchange the positions of vectors $x,y,z,t$.
	\begin{align}\label{eq_3}
		\kappa(x,y,z,t)=\kappa(w,y,z,t)\kappa(x,w,z,t)
	\end{align}
	\begin{align}\label{eq_1-}
		\kappa(x,y,z,t)=1-\kappa(t,y,z,x)
	\end{align}
	\begin{align}\label{eq_per}
		q^x_{tz}\kappa(x,y,z,t)q^x_{zt}
		=q^y_{tz}\kappa(x,y,z,t)q^y_{zt}
		=\kappa(y,x,t,z)
	\end{align}

	\section{non-commutative leapfrog map}\label{sec_formula}
	Let $S=\{S_i\}_{i\in\mathbb{Z}}$ and $S^-=\{S_i^-\}_{i\in\mathbb{Z}}$ be two sequences of infinite points in the projective line $\mathbb{P}^1$ over $\mr$. Then there are actions of $\text{GL}_2(\mr)\times (\mr^\times)$ acting on points in $S$ and $S^{-}$, and we define the non-commutative leapfrog map by constructing a map on $(S^-,S)$.
	\begin{definition}\label{def_leap}
		The non-commutative leapfrog map $T$ is a map from $(S^-,S)$ to another sequences of points $(S,S^+)$ such that $T(S^{-},S)=(S,S^+)$, where points in $S^+$ are defined by transformations $\{\phi_i\in\text{GL}_2(\mr)\times (\mr^2)^{\otimes 4},i\in\mathbb{Z}\}$ such that $$\phi_i(S_{i-1},S_i,S_{i+1},S_i^{-})=(S_{i+1},S_i,S_{i-1},S_i^{+}).$$
	\end{definition}	
	According to Definition \ref{def_leap}, it is known that the points in $S^+$ are determined by the map as follows. Given three points $S_{i-1},S_i,S_{i+1}\in S$, then there exists a projective transformation $g_i\in\text{GL}_2(\mr)$ and $\alpha_i,\beta_i,\gamma_i\in\mr^\times$ such that
	\begin{align}\label{cond}
		(g_iS_{i-1}\alpha_i^{-1},\,g_iS_i\beta_i^{-1},\,g_iS_{i+1}\gamma_i^{-1})= (S_{i+1},\,S_i,\,S_{i-1}).
	\end{align}
	Moreover, if the projective transformation $g_i$ is uniquely determined by \eqref{cond}, then $S_i^+$ is defined by the formula
	\begin{align}\label{sp}
		S_i^+:=g_iS_i^-\eta_i^{-1}, \quad \eta_i\in\mr^\times.
	\end{align}
	According to Theorem \ref{thm_equi}, we know that $T$ is a non-commutative leapfrog map if and only if the corresponding non-commutative cross-ratios defined by $(S^-,S)$ and $(S,S^+)$ satisfy the relation
	\begin{align}\label{eq_k}
		\kappa(S_{i-1},\,S_{i+1},\,S_{i},\,S_i^-)= \beta_i^{-1}\kappa(S_{i+1},\,S_{i-1},\,S_{i},\,S_i^+)\beta_i,
	\end{align}
	where $\beta_i$ coincides with that in \eqref{cond}.
	Moreover, if we lift the vertices of $S, S^-$ and $S^+$ in $\mathbb{P}^1$ to vectors in $\mathcal{R}^2$, i.e. 
	\begin{align*}
		S_i^-= \begin{pmatrix}
			v_i^-\\
			1
		\end{pmatrix},\quad
		S_i= \begin{pmatrix}
			v_i\\
			1
		\end{pmatrix},\quad 
		S_i^+=\begin{pmatrix}
			v_i^+\\
			1
		\end{pmatrix},
	\end{align*}
	then under such homogeneous coordinates, the non-commutative cross-ratio relation \eqref{eq_k} becomes 
	\begin{align*}
		&(v_{i}-v_{i+1})^{-1}(v^{-}_i-v_{i+1})(v_i^--v_{i-1})^{-1}(v_{i}-v_{i-1})\\
		&\quad=\beta_i^{-1}(v_{i}-v_{i-1})^{-1}(v_i^+-v_{i-1})(v_i^+-v_{i+1})^{-1}(v_{i}-v_{i+1})\beta_i,
	\end{align*} 
	which has been given in \cite[Sec.7]{retakh20} as an equivalent definition of the non-commutative leapfrog map. 
	
	\subsection{Explicit formulas for the leapfrog map}
	Now we determine the projective transformation $\phi_i=(g_i,\Lambda_i)\in\text{GL}_2(\mr)\times (\mr^\times)^{\otimes 4}$ sending $(S_{i-1},S_i,S_{i+1},S_i^-)$ to $(S_{i+1},S_i,S_{i-1},S_i^+)$.
	\begin{theorem}
		If we take $\beta_i=-1$ in \eqref{cond}, then $g_i\in\text{GL}_2(\mr)$ has the form
		\begin{align*}
			g_i=\left(\begin{array}{cc}
				v_ir_i+1&-(v_ir_i+2)v_i\\
				r_i&-(r_iv_i+1)
			\end{array}
			\right),
		\end{align*}
		where $r_i$ is given by $r_i=(v_{i-1}-v_i)^{-1}+(v_{i+1}-v_i)^{-1}$. Moreover, $S_i^+$ is uniquely determined and has the coordinate
		\begin{align}\label{eq_v+}
			v_i^+=(v_{i-1}p_i+v_iq_i)(p_i+q_i)^{-1},
		\end{align}
		where $p_i=(v_{i-1}-v_i)^{-1}(v_{i+1}-v_i)$ and $q_i=(v_{i+1}-v_i^-)^{-1}(v_i-v_i^-)$.
	\end{theorem}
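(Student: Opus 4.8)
The plan is to pin down the four entries of $g_i=\begin{pmatrix} a & b\\ c & d\end{pmatrix}$ directly from the defining relation \eqref{cond} with $\beta_i=-1$, and then read off $v_i^+$ from \eqref{sp}. In the homogeneous coordinates $S_{i-1}=\binom{v_{i-1}}{1}$, $S_i=\binom{v_i}{1}$, $S_{i+1}=\binom{v_{i+1}}{1}$, the three conditions in \eqref{cond} say $g_iS_i=-S_i$, $g_iS_{i-1}=S_{i+1}\alpha_i$ and $g_iS_{i+1}=S_{i-1}\gamma_i$ for some $\alpha_i,\gamma_i\in\mr^\times$. The middle relation, applied to $S_i$, forces $av_i+b=-v_i$ and $cv_i+d=-1$, so $b=-(1+a)v_i$ and $d=-1-cv_i$; this reduces the problem to finding $a$ and $c$.

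First I would substitute these expressions for $b,d$ into the two outer relations. Writing $u=v_{i-1}-v_i$ and $w=v_{i+1}-v_i$, the second components give $\alpha_i=cu-1$ and $\gamma_i=cw-1$, and the first components of $g_iS_{i-1}=S_{i+1}\alpha_i$ and $g_iS_{i+1}=S_{i-1}\gamma_i$ then collapse to the pair
\begin{align*}
(a-v_{i+1}c)\,u=-w,\qquad (a-v_{i-1}c)\,w=-u.
\end{align*}
Subtracting $a-v_{i-1}c=-uw^{-1}$ from $a-v_{i+1}c=-wu^{-1}$ eliminates $a$ and yields $(v_{i-1}-v_{i+1})c=uw^{-1}-wu^{-1}$. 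Since $v_{i-1}-v_{i+1}=u-w$, the key non-commutative identity $(u-w)(u^{-1}+w^{-1})=uw^{-1}-wu^{-1}$ singles out the unique solution $c=u^{-1}+w^{-1}=r_i$, after which back-substitution gives $a=v_{i-1}r_i-uw^{-1}=v_ir_i+1$ and hence the stated $b$ and $d$. The invertibility of $u$, $w$ and $u-w$, i.e. the distinctness of the three points $S_{i-1},S_i,S_{i+1}$, is exactly what makes this linear system uniquely solvable, and this is the source of the uniqueness of $g_i$ (and therefore of $S_i^+$ as a point of $\mathbb{P}^1$, independent of the scaling $\eta_i$).

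With $g_i$ in hand, \eqref{sp} gives $v_i^+=(av_i^-+b)(cv_i^-+d)^{-1}$, the second coordinate fixing $\eta_i=cv_i^-+d$. Writing $s=v_i^--v_i$, a short computation collapses numerator and denominator, after cancelling the $v_i$-contribution, to $v_i^+=v_i+s\,(r_is-1)^{-1}$. The last step is to recast this linear-fractional expression in the barycentric cross-ratio form \eqref{eq_v+}: I would clear the common right factor and reduce the claim to a single non-commutative identity expressing $p_i+q_i$ through $r_i$ and $s$, together with the companion relation $v_{i-1}p_i+v_iq_i=v_i(p_i+q_i)+w$, which together convert $v_i+s\,(r_is-1)^{-1}$ into $(v_{i-1}p_i+v_iq_i)(p_i+q_i)^{-1}$.

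I expect the main obstacle to be the non-commutative bookkeeping rather than any conceptual difficulty: one must consistently solve the linear relations on the correct side and resist commuting factors. The two delicate points are the identification $c=r_i$ from $(u-w)c=uw^{-1}-wu^{-1}$, and the final identity that matches the linear-fractional formula for $v_i^+$ against the cross-ratio coordinates $p_i$ and $q_i$; everything else is bilinear algebra in $\mr$.
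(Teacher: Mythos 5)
Your derivation of $g_i$ is correct, and it takes a somewhat different route from the paper: the paper expands $S_{i+1}=S_{i-1}p_i+S_i(1-p_i)$ (its equation \eqref{recur_v}), uses linear independence of $S_{i-1},S_i$ to pin down $\alpha_i=p_i^{-1}$, $\gamma_i=p_i$, and then merely asserts the matrix ``by direct calculation,'' whereas you solve for the four entries outright; your identity $(u-w)(u^{-1}+w^{-1})=uw^{-1}-wu^{-1}$ makes explicit exactly the computation the paper omits. One small omission: the claim $g_i\in\mathrm{GL}_2(\mr)$ is part of the statement, and unique solvability of your linear system for the entries does not by itself give invertibility of the resulting matrix; the paper checks this by a one-line row reduction, which you should add. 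Your closed form $v_i^+=v_i+s(r_is-1)^{-1}$ with $s=v_i^--v_i$ is also correct.

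The genuine gap is the final step, which you defer as ``non-commutative bookkeeping'': the single identity your plan needs, $s(r_is-1)^{-1}=w(p_i+q_i)^{-1}$, equivalently $p_i+q_i=(r_is-1)s^{-1}w$, is \emph{false} for the $q_i$ literally defined in the statement. Writing $w=v_{i+1}-v_i$ and $s=v_i^--v_i$, the stated $q_i=(v_{i+1}-v_i^-)^{-1}(v_i-v_i^-)$ equals $-(w-s)^{-1}s$, and inverting both sides of your identity reduces it to $s^{-1}=(w-s)^{-1}$, i.e.\ $v_{i+1}-v_i^-=v_i^--v_i$, which fails generically. A commutative sanity check: for $(v_{i-1},v_i,v_{i+1},v_i^-)=(0,1,4,2)$ the true image is $v_i^+=2/5$ (from your formula, or from the M\"obius map $x\mapsto(x-4)/(-2x-1)$), while $(v_{i-1}p_i+v_iq_i)(p_i+q_i)^{-1}=1/7$ with the stated $q_i$. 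What has happened is that the statement's formula for $q_i$ is inverted: the paper's own proof never uses that formula, but instead uses the expansion \eqref{eq_v-}, whose first component forces $q_i=(v_i^--v_i)^{-1}(v_i^--v_{i+1})=1-s^{-1}w$, the inverse of the stated quantity. With this $q_i$ your plan closes in one line,
\begin{align*}
(r_is-1)s^{-1}w=(u^{-1}+w^{-1}-s^{-1})w=u^{-1}w+1-s^{-1}w=p_i+q_i,
\end{align*}
which, combined with your companion relation $v_{i-1}p_i+v_iq_i=v_i(p_i+q_i)+w$, yields \eqref{eq_v+}. So your approach is sound and would have succeeded, but the step you left unexecuted is precisely where the statement's definition of $q_i$ breaks down; a complete write-up must work with the corrected $q_i=(v_i-v_i^-)^{-1}(v_{i+1}-v_i^-)$ (as the paper's proof implicitly does via \eqref{eq_v-}), or else the last step cannot be carried out.
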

	
	\begin{proof}
		According to equation \eqref{cond}, it is known that $g_i$ satisfies the following equation
		\begin{align}\label{eq_g}
			g_i\left(
			\begin{pmatrix}
				v_{i-1}\\
				1
			\end{pmatrix},
			\begin{pmatrix}
				v_{i}\\
				1
			\end{pmatrix},
			\begin{pmatrix}
				v_{i+1}\\
				1
			\end{pmatrix}
			\right)
			=\left(
			\begin{pmatrix}
				v_{i+1}\\
				1
			\end{pmatrix}\alpha_i,
			\begin{pmatrix}
				v_{i}\\
				1
			\end{pmatrix}\beta_i,
			\begin{pmatrix}
				v_{i-1}\\
				1
			\end{pmatrix}\gamma_i
			\right),
		\end{align}
		where $\alpha_i$, $\beta_i$, $\gamma_i\in\mr^\times$.
		Taking the lifts of $S_{i-1},\,S_i$ in $\mr^2$, we find that $S_{i+1}$ could be written as a right linear combination of  $S_{i-1}$ and $S_i$, which is equivalent to
		\begin{align}\label{recur_v}
			\begin{pmatrix}
				v_{i+1}\\
				1
			\end{pmatrix}
			=\begin{pmatrix}
				v_{i-1}\\
				1
			\end{pmatrix}p_i
			+\begin{pmatrix}
				v_{i}\\
				1
			\end{pmatrix}(1-p_i),
		\end{align}
		where $p_i=(v_{i-1}-v_i)^{-1}(v_{i+1}-v_i)$.
		By substituting \eqref{recur_v} into \eqref{eq_g},
		we get
		\begin{align*}
			&g_i\begin{pmatrix}
				v_{i-1}\\
				1
			\end{pmatrix}
			=\begin{pmatrix}
				v_{i-1}\\
				1
			\end{pmatrix}p_i\alpha_i
			+\begin{pmatrix}
				v_{i}\\
				1
			\end{pmatrix}(1-p_i)\alpha_i,\\
			&g_i\begin{pmatrix}
				v_{i}\\
				1
			\end{pmatrix}
			=\begin{pmatrix}
				v_{i}\\
				1
			\end{pmatrix}\beta_i,\\
			&g_i\begin{pmatrix}
				v_{i-1}\\
				1
			\end{pmatrix}p_i
			+g_i\begin{pmatrix}
				v_{i}\\
				1
			\end{pmatrix}(1-p_i)
			=\begin{pmatrix}
				v_{i-1}\\
				1
			\end{pmatrix}\gamma_i,
		\end{align*}
		which implies
		\begin{align*}
			\begin{pmatrix}
				v_{i-1}\\
				1
			\end{pmatrix}(p_i\alpha_ip_i-\gamma_i)
			+\begin{pmatrix}
				v_{i}\\
				1
			\end{pmatrix}\left((1-p_i)\alpha_i p_i+\beta_i(1-p_i)\right)
			=0.
		\end{align*}
		Noting that $S_{i-1}$ and $S_i$ are different points and thus their homogeneous coordinates are linearly independent, we can get
		\begin{align*}
			\gamma_i=p_i\alpha_ip_i,\quad
			\alpha_i=-(1-p_i)^{-1}\beta_i(1-p_i)p_i^{-1}.
		\end{align*}
		Taking $\beta_i=-1$,
		we have $\alpha_i=p_i^{-1}$ and $\gamma_i=p_i$.
		In this case, by direct calculation, we have
		\begin{align*}
			g_i=\begin{pmatrix}
				v_ir_i+1&-(v_ir_i+2)v_i\\
				r_i&-(r_iv_i+1)
			\end{pmatrix},
		\end{align*}
		where $r_i=(v_{i-1}-v_i)^{-1}+(v_{i+1}-v_i)^{-1}$.
		Moreover, by using elementary row and column operations, we know that
		\begin{align*}
			g_i\rightarrow \left(\begin{array}{cc}
				v_ir_i+1&-v_i\\
				r_i&-1
			\end{array}
			\right)\rightarrow \left(\begin{array}{cc}
				1&0\\
				r_i&-1
			\end{array}
			\right),
		\end{align*}
		and thus $rank(g_i)=2$, which means $g_i\in\text{GL}_2(\mr)$.
		
		On the other hand, by writing $S_i^-$ as the right linear combination of $S_{i-1}$ and $S_i$, we have the following expression
		\begin{align}\label{eq_v-}
			\begin{pmatrix}
				v_{i}^-\\
				1
			\end{pmatrix}
			=\left(\begin{pmatrix}
				v_{i+1}\\
				1
			\end{pmatrix}
			-\begin{pmatrix}
				v_{i}\\
				1
			\end{pmatrix}q_i\right)
			(1-q_i)^{-1},
		\end{align}
		where $q_i=(v_{i+1}-v_i^-)^{-1}(v_i-v_i^-)$.
		Moreover, according to the determination equation \eqref{sp}, we know that 
		\begin{align*}
			\begin{pmatrix}
				v_{i}^+\\
				1
			\end{pmatrix}
			=g_i\begin{pmatrix}
				v_{i}^-\\
				1
			\end{pmatrix}\eta_i^{-1}&=\left(
			g_i\begin{pmatrix}
				v_{i+1}\\
				1
			\end{pmatrix}-g_i
			\begin{pmatrix}
				v_{i}\\
				1
			\end{pmatrix}q_i
			\right)
			(1-q_i)^{-1}\eta_i^{-1}\\
			&=\left(
			\begin{pmatrix}
				v_{i-1}\\
				1
			\end{pmatrix}p_i
			+\begin{pmatrix}
				v_{i}\\
				1
			\end{pmatrix}q_i
			\right)
			(1-q_i)^{-1}\eta_i^{-1}.
		\end{align*}
		From the second coordinate of the $2$-dimensional vector, one directly obtains  $\eta_i^{-1}=(1-q_i)(p_i+q_i)^{-1}$,
		and thus $v_i^+$ satisfies \eqref{eq_v+}.
		
	\end{proof}
	
	According to the definition of the leapfrog map, we know that 
	\begin{align}\label{evo_vi-}
		Tv_i^-=v_i,\quad Tv_i=v_i^+.
	\end{align}
	Moreover, from the compatibility conditions of \eqref{evo_vi-} and \eqref{eq_v+},
	we have the following proposition.
	\begin{theorem}
		In the coordinates $\{p_i,q_i\}_{i\in\mathbb{Z}}$,
		the non-commutative leapfrog map is given by the formulas
		\begin{subequations}
			\begin{align}
				T(q_i)&=q_i^+=(p_i+q_i)q_{i+1}(p_{i+1}+q_{i+1})^{-1},\label{sub1}\\
				T(p_i)&=p_i^+=h_i^{-1}p_ih_{i+1},\label{sub2}
			\end{align}
		\end{subequations}
		where $h_i:=(p_{i-1}+q_{i-1})^{-1}-q_i(p_i+q_i)^{-1}$.
	\end{theorem}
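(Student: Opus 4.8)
The plan is to read the map off in the point coordinates $v_i$ and then translate into the variables $\{p_i,q_i\}$. By \eqref{evo_vi-} the map $T$ acts on the underlying points simply by $v_i^-\mapsto v_i$ and $v_i\mapsto v_i^+$, so applying $T$ to the defining expressions of $p_i$ and to \eqref{eq_v-} and using \eqref{evo_vi-} gives
\begin{align*}
	p_i^+=(v_{i-1}^+-v_i^+)^{-1}(v_{i+1}^+-v_i^+),\qquad q_i^+=(v_i^+-v_i)^{-1}(v_{i+1}^+-v_i).
\end{align*}
Thus the whole statement reduces to computing the differences of the updated coordinates $v_j^+$ produced by \eqref{eq_v+} and re-expressing them through $p$, $q$ and $h$.

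First I would establish the elementary one-step difference identities obtained by clearing the common right factor $(p_i+q_i)^{-1}$ in \eqref{eq_v+} and recombining, namely
\begin{align*}
	v_i^+-v_i=(v_{i-1}-v_i)p_i(p_i+q_i)^{-1},\qquad v_i^+-v_{i-1}=(v_i-v_{i-1})q_i(p_i+q_i)^{-1},
\end{align*}
together with the shifted version $v_{i+1}^+-v_i=(v_{i+1}-v_i)q_{i+1}(p_{i+1}+q_{i+1})^{-1}$. The crucial simplifying input throughout is the definition of $p_i$ rewritten as $(v_{i-1}-v_i)p_i=v_{i+1}-v_i$; using it after shifting the index also turns $v_{i-1}^+-v_{i-1}$ into the clean form $(v_i-v_{i-1})(p_{i-1}+q_{i-1})^{-1}$.

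The heart of the argument is the factorization of the nearest-neighbour differences of the updated points. Writing $v_{i+1}^+-v_i^+=(v_{i+1}^+-v_i)-(v_i^+-v_i)$ and substituting the identities above, the relation $(v_{i-1}-v_i)p_i=v_{i+1}-v_i$ lets me pull out the common left factor $(v_{i+1}-v_i)$, leaving exactly the combination $(p_i+q_i)^{-1}-q_{i+1}(p_{i+1}+q_{i+1})^{-1}=h_{i+1}$; a strictly analogous telescoping of $v_{i-1}^+-v_i^+=(v_{i-1}^+-v_{i-1})-(v_i^+-v_{i-1})$ exposes $h_i$. This yields
\begin{align*}
	v_{i+1}^+-v_i^+=(v_i-v_{i+1})h_{i+1},\qquad v_{i-1}^+-v_i^+=(v_i-v_{i-1})h_i.
\end{align*}
I expect this step to be the main obstacle: because $\mr$ is non-commutative one must keep every factor on its correct side, and the quantity $h_i$ is invisible until the telescoping is carried out in precisely the right order.

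With these factorizations the conclusion is immediate. For \eqref{sub2} I substitute into $p_i^+=(v_{i-1}^+-v_i^+)^{-1}(v_{i+1}^+-v_i^+)$ and use $(v_i-v_{i-1})^{-1}(v_i-v_{i+1})=p_i$ to obtain $p_i^+=h_i^{-1}p_ih_{i+1}$. For \eqref{sub1} I feed the single-difference identities into $q_i^+=(v_i^+-v_i)^{-1}(v_{i+1}^+-v_i)$; after invoking $(v_{i-1}-v_i)^{-1}(v_{i+1}-v_i)=p_i$ the factors $p_i^{-1}$ and $p_i$ cancel, producing $q_i^+=(p_i+q_i)q_{i+1}(p_{i+1}+q_{i+1})^{-1}$. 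A final check that these two updates are mutually consistent with \eqref{evo_vi-} completes the proof.
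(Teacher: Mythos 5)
Your proof is correct, and while it works in the same lifted coordinates $v_i$ and draws on the same relations (\eqref{eq_v+}, \eqref{eq_v-}, \eqref{recur_v}, \eqref{evo_vi-}) as the paper, it is organized around a different decomposition. The paper proves \eqref{sub1} by applying $T$ to \eqref{eq_v-}, eliminating $v_i^+,v_{i+1}^+$ via \eqref{eq_v+}, reducing with \eqref{recur_v}, and comparing coefficients of the right-linearly independent lifts of $S_{i-1}$ and $S_i$; it proves \eqref{sub2} the same way starting from $T$ applied to \eqref{recur_v}. You instead establish the factorized difference identities
\begin{align*}
v_i^+-v_i=(v_{i+1}-v_i)(p_i+q_i)^{-1},\qquad
v_{i+1}^+-v_i^+=(v_i-v_{i+1})h_{i+1},\qquad
v_{i-1}^+-v_i^+=(v_i-v_{i-1})h_i,
\end{align*}
and then read off both \eqref{sub1} and \eqref{sub2} as non-commutative ``ratios''. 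What this buys: the quantity $h_i$, which in the paper only emerges at the end of the coefficient matching, acquires a transparent meaning as the right factor of the updated difference $v_{i-1}^+-v_i^+$, and the linear-independence argument is replaced by explicit cancellations; the cost is that you must assume invertibility of the various point differences, an assumption the paper makes implicitly anyway, since $p_i$, $q_i$ and $h_i^{-1}$ are built from exactly such inverses. One subtlety you resolved correctly but should flag explicitly: your starting identity $q_i^+=(v_i^+-v_i)^{-1}(v_{i+1}^+-v_i)$ is what one gets by applying $T$ to \eqref{eq_v-}, and it is the \emph{inverse} of what the printed closed form $q_i=(v_{i+1}-v_i^-)^{-1}(v_i-v_i^-)$ would give; these two descriptions of $q_i$ are mutually inconsistent in the paper, and it is the \eqref{eq_v-}-based one (yours) that feeds into the derivation of \eqref{eq_v+} and under which \eqref{sub1}--\eqref{sub2} actually hold.
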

	\begin{proof}
		Substituting \eqref{eq_v-} into the equation \eqref{evo_vi-},
		we have
		\begin{align*}
			T(v_i^-)=T\left((v_{i+1}-v_iq_i)(1-q_i)^{-1} \right)
			=(v_{i+1}^+-v_i^+q_i^+)(1-q_i^+)^{-1}=v_i.
		\end{align*}
		Eliminating $v_i^+$, $v_{i+1}^+$ by using \eqref{eq_v+},
		we get
		\begin{align*}
			v_{i+1}q_{i+1}(p_{i+1}+q_{i+1})^{-1}+v_i\left(p_{i+1}(p_{i+1}+q_{i+1})^{-1}-q_i(p_i+q_i)^{-1}q_i^+\right)\\
			=v_i(1-q_i^+)^{-1}+v_{i-1}p_i(p_i+q_i)^{-1}q_i^+.
		\end{align*}
		Moreover, by substituting \eqref{recur_v} into the above equation,
		it yields
		\begin{align*}
			v_i\left((1-p_i)q_{i+1}(p_{i+1}+q_{i+1})^{-1}+p_{i+1}(p_{i+1}+q_{i+1})^{-1}-q_i(p_i+q_i)^{-1}q_i^+  \right)\\
			=v_i(1-q_i^+)^{-1}-v_{i-1}\left(p_iq_{i+1}(p_{i+1}+q_{i+1})^{-1}-p_i(p_i+q_i)^{-1}q_i^+  \right).
		\end{align*}
		Since $v_i$ and $v_{i-1}$ are two independent points, 
		we get \eqref{sub1}.
		
		Next, the action of the map $T$ on \eqref{recur_v} results in
		\begin{align*}
			v_{i+1}^+=v_i^+(1-p_i^+)+v_{i-1}^+p_i^+.
		\end{align*}
		Substituting \eqref{eq_v+} into the equation,
		we get
		\begin{align*}
			&\left(v_ip_{i+1}+v_{i+1}q_{i+1}\right)(p_{i+1}+q_{i+1})^{-1}\\
			&\quad=\left(v_{i-1}p_i+v_{i}q_i\right)(p_i+q_i)^{-1}
			(1-p_i^+)
			+\left(v_{i-2}p_{i-1}+v_{i-1}q_{i-1}\right)(p_{i-1}+q_{i-1})^{-1}
			p_i^+.
		\end{align*}
		The elimination of $v_{i+1}$ and $v_{i-2}$ by using \eqref{recur_v},
		and the comparison in the coefficients of $v_i$ and $v_{i-1}$
		lead to \eqref{sub2}.
		
	\end{proof}
	The next proposition indicates that the non-commutative leapfrog map is integrable in the sense of admitting a Lax pair.
	\begin{proposition}
		The system \eqref{sub1}-\eqref{sub2} admit the following Lax pair
		\begin{align}
			\begin{aligned}
				&v_{i+1}+v_i\left(p_i+q_i-1\right)
				=z\left(v_{i-1}p_i+v_iq_i\right),
				\\
				&T(v_i)
				=
				\left(v_{i-1}p_i+v_iq_i\right)\left(p_i+q_i\right)^{-1},
			\end{aligned}
		\end{align}
		where $z$ is a spectral parameter commuting with all elements in $\mr$.
	\end{proposition}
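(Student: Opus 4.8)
The plan is to read the two displayed equations as the spatial (scattering) and temporal parts of a discrete Lax pair and to show that their compatibility reproduces exactly the equations of motion \eqref{sub1}--\eqref{sub2}. Rewriting the first equation as
$$v_{i+1}=zv_{i-1}p_i+v_i(zq_i-p_i-q_i+1),$$
I read it as a $z$-deformed three-term recurrence for the sequence $\{v_i\}$; since $z$ is central it degenerates at $z=1$ to the defining recurrence \eqref{recur_v}, while the second equation is nothing but the evolution \eqref{eq_v+} together with $T(v_i)=v_i^+$ recorded in \eqref{evo_vi-}. Thus both relations already hold for the leapfrog dynamics, and the content of the proposition is that the flow $T$ is consistent with the spatial spectral problem, i.e. that $\{T(v_i)\}$ again solves the spatial equation with coefficients $p_i^+,q_i^+$ precisely when \eqref{sub1}--\eqref{sub2} are satisfied. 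This consistency is the discrete zero-curvature condition associated with the pair.

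Concretely, I would compute $T(v_{i+1})$ in two ways and equate them. First, shifting the temporal equation to the index $i+1$ gives $T(v_{i+1})=(v_ip_{i+1}+v_{i+1}q_{i+1})(p_{i+1}+q_{i+1})^{-1}$, into which I substitute the spatial equation to eliminate $v_{i+1}$; this expresses $T(v_{i+1})$ as a right-linear combination of the independent vectors $v_i$ and $v_{i-1}$ with coefficients polynomial in $z$. Second, applying $T$ to the spatial equation yields $T(v_{i+1})=zT(v_{i-1})p_i^++T(v_i)(zq_i^+-p_i^+-q_i^++1)$; here I substitute the temporal equation for $T(v_i)$ and $T(v_{i-1})$ and use the spatial equation at index $i-1$ to re-express the resulting $v_{i-2}$ in terms of $v_{i-1},v_i$, which is legitimate because $z$ is invertible and central. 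After this reduction the second expression is again a right-linear combination of $v_i$ and $v_{i-1}$ with $z$-polynomial coefficients.

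Since $v_i$ and $v_{i-1}$ are linearly independent over $\mr$ and $z$ is a free central parameter, matching the two expressions amounts to equating the coefficients of $v_i$ and of $v_{i-1}$ separately and, within each, equating terms of equal degree in $z$. I expect the degree-one-in-$z$ matching to collapse, after cancelling the invertible factor $p_i$ (resp. $q_i$) on the left, to $(p_i+q_i)^{-1}q_i^+=q_{i+1}(p_{i+1}+q_{i+1})^{-1}$, which is \eqref{sub1}. The degree-zero part of the $v_{i-1}$ coefficient should then reorganize into \eqref{sub2}: using $(p_i+q_i)^{-1}=:s_i$ and the identity $p_is_i=1-q_is_i$, the left-hand coefficient of $p_i^+$ simplifies to $-\,h_i$ with $h_i=s_{i-1}-q_is_i=(p_{i-1}+q_{i-1})^{-1}-q_i(p_i+q_i)^{-1}$, while the remaining term becomes $p_i\,s_i(1-q_i^+)=p_i h_{i+1}$ once \eqref{sub1} is invoked; this gives $h_ip_i^+=p_ih_{i+1}$, i.e. \eqref{sub2}. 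Each of these steps is reversible, so the compatibility condition is equivalent to \eqref{sub1}--\eqref{sub2}, establishing the Lax pair.

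The main obstacle will be the purely non-commutative bookkeeping in the second computation: because all coefficients multiply the $v$'s from the right, the inverse factors $(p_j+q_j)^{-1}$ become trapped between vectors and coefficients, and the substitution for $v_{i-2}$ must be inserted without commuting anything past these inverses. The delicate point is recognizing that the degree-zero coefficient of $v_{i-1}$ closes up into the single conjugating factor $h_i$, rather than an a priori more complicated rational expression in the $p_j,q_j$; I would handle this by first extracting \eqref{sub1} from the degree-one terms and using it to simplify $q_i^+$ before isolating \eqref{sub2}.
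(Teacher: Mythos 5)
Your proposal is correct; I verified that the computation closes exactly as you predict. Note, however, that the paper states this proposition with no proof at all, so there is no paper argument to compare against: your verification supplies the missing content. Methodologically it is the natural one, and it mirrors the elimination technique the paper uses to prove the preceding theorem deriving \eqref{sub1}--\eqref{sub2}: express both evaluations of $T(v_{i+1})$ as right-linear combinations of the independent vectors $v_{i-1}$, $v_i$ and equate coefficients, here additionally graded by powers of the central parameter $z$. Two refinements to your write-up. First, you do not need $z$ to be invertible, only central: in the term $z\,T(v_{i-1})\,p_i^+$ the factor $z$ slides past the $\mr$-valued coefficients onto $v_{i-2}p_{i-1}+v_{i-1}q_{i-1}$, after which the spatial equation at index $i-1$ substitutes directly and no negative powers of $z$ ever appear. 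Second, the matching produces four scalar equations (two vectors times two $z$-degrees), and your text treats only three of them: both degree-one equations reduce to \eqref{sub1} after cancelling the invertible factor $p_i$, resp.\ $q_i$, on the left, and the degree-zero coefficient of $v_{i-1}$ gives
\begin{align*}
\left(1-s_{i-1}-p_is_i\right)p_i^+=p_i\left(s_iq_i^+-s_i\right),
\qquad s_j:=(p_j+q_j)^{-1},
\end{align*}
which, via $1-p_is_i=q_is_i$ and $s_iq_i^+=q_{i+1}s_{i+1}$ from \eqref{sub1}, becomes $-h_ip_i^+=-p_ih_{i+1}$, i.e.\ \eqref{sub2}, exactly as you say. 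But for the claimed equivalence you must also confirm that the fourth equation, the degree-zero coefficient of $v_i$, namely
\begin{align*}
s_{i-1}p_i^+-q_is_i\left(p_i^++q_i^+-1\right)
=\left(p_{i+1}-(p_i+q_i-1)q_{i+1}\right)s_{i+1},
\end{align*}
imposes nothing new. It does not: given \eqref{sub1}--\eqref{sub2}, both sides reduce to $1-(p_i+q_i)q_{i+1}s_{i+1}$, so it is an identity. With that consistency check added, your proof is complete.
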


	\subsection{Cross-ratio type coordinates}
	In this part, the non-commutative leapfrog map in terms of non-commutative cross-ratio type coordinates is formulated.
	According to Theorem \ref{thm_equi},
	the non-commutative cross-ratios are not  invariants of projective transformations,
	but relative invariants, that is, for all $x,y,z,t\in\mathbb{P}^1$,
	\begin{align*}
		\kappa(gx\lambda_1,gy\lambda_2,gz\lambda_3,gt\lambda_4)
		=\lambda_3^{-1}\kappa(x,y,z,t)\lambda_3.
	\end{align*}
	Therefore,
	it is necessary to make scaling transformations for points in $(S^-,S)$ such that cross-ratios can be suitably scaled. 
	Let's denote a new coordinate of $S_i$ by $u_i=\left(\begin{array}{c}
		u_{i,1}\\
		u_{i,2}\end{array}
	\right)$ and that of $S_i^-$ by $u_i^-=\left(\begin{array}{c}
		u_{i,1}^-\\
		u_{i,2}^{-}\end{array}
	\right)$.
	If we take $u_i^-$ and $u_{i-1}^-$ as a basis in $\mathcal{R}^2$,
	then coordinates of $u_i$ and $u_{i+1}$ could be alternatively written as  right linear combinations of $u_i^-$ and $u_{i-1}^-$.
	For simplicity,
	we assume that	
	\begin{align}\label{equ1}
		u_i=u_{i+1}^-+u_i^-a_i,\quad u_{i+1}=u_{i+1}^--u_i^-b_i,
	\end{align}
	where $a_i,b_i\in\mr\backslash\{0\}$ and $a_i+b_i\ne0$.
	This condition could be expressed in terms of quasi-determinants. 
	\begin{proposition}
		For any points $u_i,u_{i+1}\in S$ and $u_i^-,u_{i+1}^-\in S^-$, they satisfy the following constraints
		\begin{align}\label{con_det}
			\left|\begin{array}{cc}
				u_{i,1}^-&u_{i+1,1}^-\\
				u_{i,2}^-&\boxed{u_{i+1,2}^-}
			\end{array}
			\right|=\left|\begin{array}{cc}
				u_{i,1}^-&u_{i,1}\\
				u_{i,2}^-&\boxed{u_{i,2}}
			\end{array}
			\right|=\left|\begin{array}{cc}
				u_{i,1}^-&u_{i+1,1}\\
				u_{i,2}^-&\boxed{u_{i+1,2}}
			\end{array}
			\right|.
		\end{align}
	\end{proposition}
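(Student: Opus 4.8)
The plan is to expand each $2\times2$ quasi-determinant appearing in \eqref{con_det} from its boxed $(2,2)$-entry, using
\[
\left|\begin{array}{cc} a & b \\ c & \boxed{d} \end{array}\right| = d - ca^{-1}b,
\]
and then to exploit the linearity of the right-hand side in the boxed column. First I would record the structural fact that, with the first column fixed to be $u_i^-$, this quasi-determinant is additive in its second column and pulls out right multiplication of that column by a scalar in $\mr$. In particular, taking the second column equal to the first gives
\[
\left|\begin{array}{cc} u_{i,1}^- & u_{i,1}^- \\ u_{i,2}^- & \boxed{u_{i,2}^-} \end{array}\right| = u_{i,2}^- - u_{i,2}^-(u_{i,1}^-)^{-1}u_{i,1}^- = 0,
\]
so that adding to the second column any right-multiple of $u_i^-$ leaves the quasi-determinant unchanged. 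This presumes $u_{i,1}^-$ invertible, which is exactly what makes the expansion meaningful.

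The second step is to reread the assumption \eqref{equ1} through this lens. The relation $u_i = u_{i+1}^- + u_i^- a_i$ says that the second column of the middle quasi-determinant in \eqref{con_det} is obtained from $u_{i+1}^-$ by adding the right-multiple $u_i^- a_i$ of the first column; by the invariance just noted, the middle quasi-determinant equals the first. Similarly $u_{i+1} = u_{i+1}^- - u_i^- b_i$ presents the second column of the third quasi-determinant as $u_{i+1}^-$ plus the right-multiple $-u_i^- b_i$ of the first column, so the third quasi-determinant also equals the first. Chaining these two equalities yields \eqref{con_det}.

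There is essentially no obstacle here: each identity collapses to the single cancellation $u_{i,2}^-(u_{i,1}^-)^{-1}u_{i,1}^- a_i = u_{i,2}^- a_i$. The only thing demanding attention is the non-commutative bookkeeping — one must keep $a_i$ and $b_i$ on the right of $u_i^-$ throughout, since it is precisely the right placement of these coefficients that lets the factor $(u_{i,1}^-)^{-1}u_{i,1}^-$ collapse to $1$ without invoking commutativity. If one preferred a coordinate-free phrasing, the same argument is the statement that the map sending a column vector $w$ to the quasi-Pl\"ucker-type quasi-determinant with columns $(u_i^-,\,w)$ is right-linear and annihilates $u_i^-$, hence is insensitive to the $u_i^-$-components of $u_i$ and $u_{i+1}$.
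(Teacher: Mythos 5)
Your proof is correct and follows essentially the same route as the paper's: both substitute the decomposition \eqref{equ1} into the boxed column and rely on the quasi-determinant being unchanged when a right multiple of the first column is added to the boxed column. The only cosmetic difference is that you verify this invariance by direct expansion from the $(2,2)$-entry (right-linearity plus the vanishing of the repeated-column quasi-determinant), whereas the paper obtains the same effect by citing its appendix Proposition \ref{p-equi}.
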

	\begin{proof}
		As $\left(\begin{array}{c}
			u_{i,1}\\
			u_{i,2}
		\end{array}
		\right)$ is a right linear combination of  $\left(\begin{array}{c}
			u_{i,1}^{-}\\
			u_{i,2}^{-}
		\end{array}
		\right)$ and  $\left(\begin{array}{c}
			u_{i+1,1}^-\\
			u_{i+1,2}^-
		\end{array}
		\right)$, it is known from Proposition \ref{p-equi} that 
		\begin{align*}
			\left|\begin{array}{cc}
				u_{i,1}^-&u_{i,1}\\
				u_{i,2}^-&\boxed{u_{i,2}}
			\end{array}
			\right|=\left|\begin{array}{cc}
				u_{i,1}^-&u_{i+1,1}^-+u_{i,1}^-a_i\\
				u_{i,2}^-&\boxed{u_{i+1,2}^-+u_{i,2}^-a_i}
			\end{array}
			\right|=\left|\begin{array}{cc}
				u_{i,1}^-&u_{i+1,1}^-\\
				u_{i,2}^-&\boxed{u_{i+1,2}^-}
			\end{array}
			\right|,
		\end{align*}
		and the rest can be similarly verified.
	\end{proof}
	\begin{remark}
		It is known from \eqref{equ1} that  $u_{i+1}$ could be written as a right linear combination of $u_i$ and $u_i^-$, which means that 
		\begin{align}\label{equ3}
			u_{i+1}=u_i+u_i^-c_i, \quad c_i\in\mr^\times.
		\end{align}
		Moreover, $c_i$ satisfies $a_i+b_i+c_i=0$.
	\end{remark}
The existence for a scaling transformation for coordinates is demonstrated by the following proposition.
	\begin{proposition}
		There exists a scaling transformation
		\begin{align}\label{uv_scal}
			u_i^-=
			\begin{pmatrix}
				v_i^-\\
				1
			\end{pmatrix}
			(V_i^-)^{-1},\quad
			u_i=
			\begin{pmatrix}
				v_i\\
				1
			\end{pmatrix}
			V_i^{-1},
		\end{align}
		where $V_i, V_i^-\in\mr^\times$,  
		such that coordinates $u_i$ and $u_i^-$ satisfy \eqref{con_det}.
		
	\end{proposition}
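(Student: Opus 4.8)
The plan is to substitute the proposed scaling \eqref{uv_scal} directly into the three quasi-determinants appearing in \eqref{con_det} and to reduce the required equalities to a recursion for the scaling factors $V_i,V_i^-$ that can always be solved. First I would record the scaled coordinates: from \eqref{uv_scal} one has $u_{i,1}=v_iV_i^{-1}$, $u_{i,2}=V_i^{-1}$, $u_{i,1}^-=v_i^-(V_i^-)^{-1}$ and $u_{i,2}^-=(V_i^-)^{-1}$. Using that the $2\times2$ quasi-determinant $\left|\begin{smallmatrix}a&b\\c&\boxed{d}\end{smallmatrix}\right|=d-ca^{-1}b$, the key simplification is the cancellation $u_{i,2}^-(u_{i,1}^-)^{-1}=(V_i^-)^{-1}V_i^-(v_i^-)^{-1}=(v_i^-)^{-1}$, which removes all dependence on $V_i^-$ from the interior of each quasi-determinant and leaves only an overall left factor $(v_i^-)^{-1}$.

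Carrying out the expansions, I expect the first quasi-determinant in \eqref{con_det} to equal $(v_i^-)^{-1}(v_i^--v_{i+1}^-)(V_{i+1}^-)^{-1}$, and the second and third to equal $(v_i^-)^{-1}(v_i^--v_i)V_i^{-1}$ and $(v_i^-)^{-1}(v_i^--v_{i+1})V_{i+1}^{-1}$ respectively. After cancelling the common left factor $(v_i^-)^{-1}$, the constraint \eqref{con_det} becomes equivalent to the chain of equalities
\[
(v_i^--v_{i+1}^-)(V_{i+1}^-)^{-1}=(v_i^--v_i)V_i^{-1}=(v_i^--v_{i+1})V_{i+1}^{-1}.
\]

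It then remains to produce $V_i,V_i^-\in\mr^\times$ realizing this chain. Equating the last two terms yields the self-contained recursion $V_{i+1}=V_i(v_i-v_i^-)^{-1}(v_{i+1}-v_i^-)$, while equating the first two gives $V_{i+1}^-=V_i(v_i-v_i^-)^{-1}(v_{i+1}^--v_i^-)$; substituting these back, one checks that all three terms collapse to $(v_i^--v_i)V_i^{-1}$, using $(v_i^--v_{i+1})(v_{i+1}-v_i^-)^{-1}=-1$ and the analogous identity for the $S^-$ term. Starting from an arbitrary invertible seed $V_0\in\mr^\times$ and propagating the first recursion in both directions fixes all $V_i$, after which the second recursion reads off every $V_i^-$; each factor is invertible precisely because the points of $S$ and $S^-$ are pairwise distinct in $\mathbb{P}^1$, so that the differences $v_i-v_i^-$, $v_{i+1}-v_i^-$ and $v_{i+1}^--v_i^-$ all lie in $\mr^\times$. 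The main point to watch is that $V_{i+1}$ and $V_{i+1}^-$ are each forced by the index-$i$ relations while also having to satisfy the index-$(i+1)$ ones; the substitution check guarantees this compatibility, and because the index set $\mathbb{Z}$ is an open chain rather than a cycle there is no closing-up (monodromy) obstruction. Finally I would remark that this scaling is exactly the one rendering \eqref{equ1} valid with unit coefficient on $u_{i+1}^-$, so that \eqref{con_det} may alternatively be deduced at once from \eqref{equ1} together with the invariance of quasi-determinants under adding a right multiple of one column to another (Proposition \ref{p-equi}).
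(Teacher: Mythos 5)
Your proof is correct, and it takes a genuinely different route from the paper's. The paper never expands the quasi-determinants in \eqref{con_det}: it substitutes the scaling \eqref{uv_scal} into the leapfrog relations \eqref{eq_v+} (shifted one time step back) and \eqref{eq_v-}, compares the result with \eqref{equ1} and \eqref{equ3}, and extracts a linear system for $V_i,V_i^-$ whose compatibility it identifies with the evolution equation \eqref{sub1}; that argument therefore leans on the dynamics, since the coefficients $p_{i+1}^-,q_{i+1}^-$ it invokes live at the preceding time level, i.e.\ it presupposes that $(S^-,S)$ sits inside a leapfrog orbit. You instead work statically: after the cancellation $u_{i,2}^-(u_{i,1}^-)^{-1}=(v_i^-)^{-1}$, the constraint \eqref{con_det} reduces to the chain $(v_i^--v_{i+1}^-)(V_{i+1}^-)^{-1}=(v_i^--v_i)V_i^{-1}=(v_i^--v_{i+1})V_{i+1}^{-1}$, solved by explicit recursions from an arbitrary seed $V_0\in\mr^\times$, with invertibility of every factor coming from distinctness of the points because $\mr$ is a skew field. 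Your version is more elementary, needs neither a backward extension of the orbit nor \eqref{sub1}, and yields closed formulas for the scaling factors. Two remarks. (i) The compatibility you flag is actually vacuous: each $V_{i+1}^-$ occurs in exactly one chain, and $V_{i+1}$ enters the index-$(i+1)$ chain only as the seed determining $V_{i+2}$ and $V_{i+2}^-$, so the system is strictly triangular and your substitution check, while correct, is redundant. (ii) Your recursion reads $V_{i+1}=V_iq_i^{-1}$ with the paper's stated $q_i=(v_{i+1}-v_i^-)^{-1}(v_i-v_i^-)$, whereas the paper's proof asserts $V_{i+1}=V_iq_i$; this mismatch is not on your side but reflects an inversion inconsistency in the paper itself, since its equation \eqref{eq_v-} actually forces $q_i=(v_i-v_i^-)^{-1}(v_{i+1}-v_i^-)$, under which your recursion and the paper's coincide. (A last nitpick: the column-operation invariance cited in your closing remark is not literally Proposition \ref{p-equi}, which characterizes vanishing quasi-determinants, though the paper makes the same loose citation.)
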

	\begin{proof}
		According to \eqref{eq_v+},
		we have
		\begin{align*}
			v_{i+1}
			=\left(v_{i}^-p_{i+1}^-+v_{i+1}^-q_{i+1}^-\right)(p_{i+1}^-+q_{i+1}^-)^{-1}.
		\end{align*}
		By substituting \eqref{uv_scal} into it,
		we get
		\begin{align}\label{tr1}
			u_{i+1}V_{i+1}=\left(u_{i}^-V_i^-p_{i+1}^-+u_{i+1}^-V_{i+1}^-q_{i+1}^-\right)(p_{i+1}^-+q_{i+1}^-)^{-1}.
		\end{align}
		Moreover, according to \eqref{eq_v-}, there holds that
		\begin{align}\label{tr2}
			u_i^-V_i^-
			=\left(u_{i+1}V_{i+1}
			-u_iV_iq_i\right)
			(1-q_i)^{-1}.
		\end{align}
		Comparing \eqref{tr1}, \eqref{tr2} with \eqref{equ1} and \eqref{equ3},
		we have
		\begin{align*}
			V_{i+1}=V_{i+1}^-q_{i+1}^-(p_{i+1}^-+q_{i+1}^-)^{-1},\quad
			V_{i+1}=V_iq_i.
		\end{align*}
		This is a linear system of $V_i$,
		and its compatibility condition is just \eqref{sub1}.
		Thus there exists a solution for $V_i$.
		
	\end{proof}
	
	\begin{theorem}\label{thm1.8}
		The coordinates $a_i$, $b_i$ in \eqref{equ1} could be written in terms of the non-commutative cross-ratios
		\begin{subequations}
			\begin{align}
				&a_i
				=\kappa(u_{i-1}^-,u_{i+1}^-,u_i^-,u_i ),\label{cross_a}\\
				&b_i
				=-\kappa(u_i,u_{i+1},u_i^-,u_{i+1}^-)
				\kappa(u_{i-1}^-,u_{i+1}^-,u_i^-,u_i ).\label{cross_b}
			\end{align}
		\end{subequations}
		Under these coordinates,
		the non-commutative leapfrog map $T$ defined by Definition \ref{def_leap} can be characterized by
		\begin{subequations}
			\begin{align}
				a_i^+
				&=\left(a_{i-1}+b_{i-1}\right)^{-1}a_{i-1}\left(a_{i}+b_{i}\right),\label{leap_a}\\
				b_i^+
				&=\left(a_{i}+b_{i}\right)^{-1}b_{i}\left(a_{i+1}+b_{i+1}\right).\label{leap_b}
			\end{align}
		\end{subequations}
	\end{theorem}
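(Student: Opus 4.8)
The plan is to reduce the whole computation to the single observation, already isolated in the proof of \eqref{nccr}, that a quasi-Pl\"ucker coordinate merely records a coefficient of a linear expansion: if $v_k=v_i\,s+v_j\,c$ in the basis $\{v_i,v_j\}\subset\mr^2$, then $q_{jk}^i=c$. Consequently $\kappa(x,y,z,t)=q_{zt}^yq_{tz}^x$ is evaluated by expanding $t$ in $\{y,z\}$ to read off the coefficient of $z$, expanding $z$ in $\{x,t\}$ to read off the coefficient of $t$, and multiplying the two. For \eqref{cross_a} I would feed the defining relations \eqref{equ1} straight into this recipe: $u_i=u_{i+1}^-+u_i^-a_i$ exhibits $a_i$ as the $u_i^-$-coefficient of $u_i$ in the basis $\{u_{i+1}^-,u_i^-\}$, so $q_{u_i^-\,u_i}^{u_{i+1}^-}=a_i$, while the index-shifted relation $u_i=u_i^--u_{i-1}^-b_{i-1}$ gives the complementary coefficient $q_{u_i\,u_i^-}^{u_{i-1}^-}=1$; the product is exactly $a_i=\kappa(u_{i-1}^-,u_{i+1}^-,u_i^-,u_i)$. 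The same bookkeeping applied to $\kappa(u_i,u_{i+1},u_i^-,u_{i+1}^-)$, using $u_{i+1}^-=u_{i+1}+u_i^-b_i$ and $u_i^-=(u_i-u_{i+1}^-)a_i^{-1}$, returns $-b_ia_i^{-1}$, so that $-\kappa(u_i,u_{i+1},u_i^-,u_{i+1}^-)\,a_i=b_i$, which is \eqref{cross_b}.

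For the dynamics I would first record the collinearity relation obtained by subtracting the two equations of \eqref{equ1} (equivalently \eqref{equ3}), namely $u_i^-=(u_i-u_{i+1})(a_i+b_i)^{-1}$. Running the coefficient recipe on $\kappa(u_{i-1},u_{i+1},u_i,u_i^-)$ — expanding $u_i^-$ in $\{u_{i+1},u_i\}$, and expanding $u_i$ in $\{u_{i-1},u_i^-\}$ after using the index-shifted collinearity relation to eliminate $u_{i-1}^-$ — produces $(a_i+b_i)^{-1}a_{i-1}^{-1}(a_{i-1}+b_{i-1})$. On the other side, the analogue of \eqref{equ1} for the updated pair $(S,S^+)$, that is $u_i^+=u_{i+1}+u_ia_i^+$ and $u_{i+1}^+=u_{i+1}-u_ib_i^+$, gives by the same recipe $\kappa(u_{i+1},u_{i-1},u_i,u_i^+)=(a_i^+)^{-1}$. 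The leapfrog relation \eqref{eq_k} in the gauge $\beta_i=-1$ equates these two cross-ratios: the two sides share the third entry $S_i$, so by Theorem \ref{thm2} the passage from the $v$-lift to the $u$-lift conjugates both by the same factor and preserves the equality. Inverting then yields \eqref{leap_a}.

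I would obtain \eqref{leap_b} by bootstrapping from \eqref{leap_a} rather than from a further cross-ratio identity. Substituting the expansion of $u_{i+1}$ in the basis $\{u_{i-1},u_i\}$ (obtained from the collinearity relation at indices $i$ and $i-1$ together with \eqref{equ1}) into $u_i^+=u_{i+1}+u_ia_i^+$, and inserting the now-known $a_i^+$, the coefficient of $u_i$ collapses to $1$ through the cancellation $(a_{i-1}+b_{i-1})^{-1}(a_{i-1}+b_{i-1})=1$, leaving the closed-form update $u_i^+=u_i-u_{i-1}(a_{i-1}+b_{i-1})^{-1}b_{i-1}(a_i+b_i)$. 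Shifting the index by one and comparing with the defining equation $u_{i+1}^+=u_{i+1}-u_ib_i^+$ then reads off \eqref{leap_b} directly, since $u_i$ and $u_{i+1}$ are independent.

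I expect the main difficulty to be organizational in the noncommutative setting: keeping every scalar on its correct side throughout the quasi-Pl\"ucker bookkeeping, and arranging the argument so that it is not circular. Only \eqref{leap_a} uses the genuine leapfrog dynamics, through \eqref{eq_k}; once it is in place the explicit form of $u_i^+$, and hence \eqref{leap_b}, follow by linear algebra over $\mr$. The one genuinely geometric point requiring care is the gauge comparison when transporting \eqref{eq_k} into the $u$-coordinates, which is legitimate precisely because the distinguished third argument is the common point $S_i$.
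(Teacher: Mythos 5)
Your proposal is correct, and while it shares the paper's skeleton, it evaluates everything by a more elementary route. Both arguments invoke the leapfrog property exactly once, in the gauge $\beta_i=-1$, where the shared third slot $S_i$ makes the conjugation in Theorem \ref{thm2} harmless: the paper does this via the transformation $\tilde{g}_i$ and \eqref{cross_pro}, you by transporting \eqref{eq_k} from the $v$-lifts to the $u$-lifts. The difference lies in how the cross-ratios are computed. The paper works inside the identity calculus \eqref{eq_3}, \eqref{eq_1-}, \eqref{eq_per}, massaging $(a_{i-1}+b_{i-1})^{-1}a_{i-1}(a_i+b_i)$ into the single cross-ratio $\kappa(u_{i+1},u_{i-1},u_i,u_i^-)$ and matching it with $a_i^+$; you instead read off $\kappa(u_{i-1},u_{i+1},u_i,u_i^-)=(a_i+b_i)^{-1}a_{i-1}^{-1}(a_{i-1}+b_{i-1})$ and $\kappa(u_{i+1},u_{i-1},u_i,u_i^+)=(a_i^+)^{-1}$ directly from the linear relations \eqref{equ1} and their time-shifted analogues, using only the fact that $q^i_{jk}$ is the coefficient in a right linear expansion; both evaluations come out with the factors in the correct order. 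Your treatment of \eqref{leap_b} is genuinely different from the paper's ``similarly verified'': you bootstrap from \eqref{leap_a} to the closed-form update $u_i^+=u_i-u_{i-1}(a_{i-1}+b_{i-1})^{-1}b_{i-1}(a_i+b_i)$ (the coefficient of $u_i$ does collapse to $1$ exactly as you claim), then compare the index-shifted formula with $u_{i+1}^+=u_{i+1}-u_ib_i^+$. This makes transparent that the dynamics enters only once and replaces a second round of cross-ratio identities by linear algebra over $\mr$, which is easier to audit for left/right factor placement. Two further remarks. First, your value $q_{u_{i+1}^-,u_i^-}^{u_i}=-a_i^{-1}$ is the correct one; the paper's proof asserts $a_i^{-1}$, a sign slip under which \eqref{cross_b} would yield $b_i=-b_i$, so your computation in fact repairs that step. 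Second, like the paper, you implicitly assume that the lifts $u_i$ normalizing the new pair $(S,S^+)$ coincide with those fixed by \eqref{equ1} for $(S^-,S)$; this is what pins down $a_i^+,b_i^+$ exactly rather than up to conjugation, and your argument is no less rigorous than the paper's on this point.
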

	\begin{proof}
		We give a proof by using the properties of non-commutative cross-ratios.
		We know from \eqref{nccr} that the non-commutative cross-ratio could be written in terms of quasi-Pl\"ucker coordinates and
		\begin{align*}
			\kappa(u_{i-1}^-,u_{i+1}^-,u_i^-,u_i)=q_{u_i^-,u_i}^{u_{i+1}^-}q_{u_i,u_i^-}^{u_{i-1}^-}.
		\end{align*}
		Moreover, by the properties of quasi-Pl\"ucker coordinate \eqref{alt} and \eqref{quasi1}, we obtain
		\begin{align*}
			q_{u_i^-,u_i}^{u_{i+1}^-}=\left|\begin{array}{cc}
				u_{i+1,1}^-&u_{i,1}^-\\
				u_{i+1,2}^-&\boxed{u_{i,2}^-}
			\end{array}
			\right|^{-1}\left|\begin{array}{cc}
				u_{i+1,1}^-&u_{i,1}\\
				u_{i+1,2}^-&\boxed{u_{i,2}}
			\end{array}
			\right|=a_i,
		\end{align*}
		where \eqref{con_det} is used.
		Similarly, by realizing that $u_i$ could be written as a right linear combination of $u_{i-1}^-$ and $u_i^-$, 
		we can get 
		$
		q_{u_i,u_i^-}^{u_{i-1}^-}=1
		$
		and \eqref{cross_a} is valid. Moreover, one can show that $q_{u_i^-,u_{i+1}^-}^{u_{i+1}}=b_i$ and $q_{u_{i+1}^-,u_i^-}^{u_i}=a_i^{-1}$, from which \eqref{cross_b} is directly verified.
		
		To characterize the leapfrog map $T$, we need to replace points in $(S^-,S)$ by $(S,S^+)$. In this case,	$a_i^+$ and $b_i^+$ have expressions
		\begin{align*}
			a_i^+=\kappa(u_{i-1},u_{i+1},u_i,u_i^+), \quad b_i^+=-\kappa(u_i^+,u_{i+1}^+,u_i,u_{i+1})\kappa(u_{i-1},u_{i+1},u_i,u_i^+).
		\end{align*}
		Moreover, since $T$ is a leapfrog map, then
		there exists a projective transformation
		$\tilde{g}_i\in\text{GL}_2(\mr)$ such that
		$$\tilde{g}_i:(u_{i-1},u_{i+1},u_i,u_i^+)\mapsto(u_{i+1}\tilde{\alpha}_i,u_{i-1}\tilde{\beta}_i,-u_i,u_i^-\tilde{\eta}_i),\quad 
		\tilde{\alpha}_i,\,\tilde{\beta}_i,\,\tilde{\eta}_i\in\mr^\times.$$
		By \eqref{cross_pro}, we know that $a_i^+$ has an alternative expression 
		\begin{align*}
			a_i^+=\kappa(u_{i-1},u_{i+1},u_i,u_i^+)=\kappa(u_{i+1},u_{i-1},u_i,u_i^-),
		\end{align*}
		which is helpful in the verifications of \eqref{leap_a} and \eqref{leap_b}. By using properties of non-commutative cross-ratios \eqref{eq_1-} and \eqref{eq_3}, we have
		\begin{align*}
			a_i+b_i&=(1-\kappa(u_i,u_{i+1},u_i^-,u_{i+1}^-))
			\kappa(u_{i-1}^-,u_{i+1}^-,u_i^-,u_i )\\
			&=\kappa(u_{i+1}^-,u_{i+1},u_i^-,u_i)\kappa(u_{i-1}^-,u_{i+1}^-,u_i^-,u_i )\\
			&=\kappa(u_{i-1}^-,u_{i+1},u_i^-,u_i).
		\end{align*}
		Moreover, 
		\begin{align*}
			\begin{split}
				&\left(a_{i-1}+b_{i-1}\right)^{-1}a_{i-1}\left(a_{i}+b_{i}\right)\\
				=&\kappa(u_i,u_{i-2}^-,u_{i-1}^-,u_{i-1})\kappa(u_{i-2}^-,u_i^-,u_{i-1}^-,u_{i-1})
				\kappa(u_{i-1}^-,u_{i+1},u_i^-,u_i)\\
				=&q_{u_{i-1}^-,u_{i-1}}^{u_{i-2}^-}\kappa(u_{i-2}^-,u_i,u_{i-1},u_{i-1}^-)q_{u_{i-1},u_{i-1}^-}^{u_{i-2}^-}\\
				&\cdot q_{u_{i-1}^-,u_{i-1}}^{u_{i-2}^-}
				\kappa(u_{i}^-,u_{i-2}^-,u_{i-1},u_{i-1}^-)
				q_{u_{i-1},u_{i-1}^-}^{u_{i-2}^-}
				\kappa(u_{i-1}^-,u_{i+1},u_i^-,u_i)\\
				=&\kappa(u_{i}^-,u_i,u_{i-1},u_{i-1}^-)\kappa(u_{i-1}^-,u_{i+1},u_i^-,u_i),
			\end{split}
		\end{align*}
		where in the last step $q_{u_{i-1}^-,u_{i-1}}^{u_{i-2}^-}=q_{u_{i-1},u_{i-1}^-}^{u_{i-2}^-}= 1$ and equations \eqref{eq_3}, \eqref{eq_per} are used. We can further show that
		\begin{align*}
			\kappa(u_{i}^-,u_i,u_{i-1},u_{i-1}^-)\kappa(u_{i-1}^-,u_{i+1},u_i^-,u_i)=\kappa(u_{i+1},u_{i-1},u_i,u_i^-)
			=a_i^+,
		\end{align*}
		and \eqref{leap_a} is then verified. \eqref{leap_b} could be similarly verified.

	\end{proof}
	
	\subsection{Y-systems}
	It is known that many pentagram-type maps could be described as $Y$-mutations in cluster algebra,
	thus establishing connections between cluster algebra and projective geometry.
	Here we construct a non-commutative $Y$-system related to the non-commutative leapfrog map given in Definition \ref{def_leap}.
	In the following,
	we use the notation
	\begin{align*}
		u_i^-\rightarrow u_i^j,\quad
		u_i\rightarrow u_i^{j+1},\quad
		u_i^+\rightarrow
		u_i^{j+2}
	\end{align*}
	to regard each leapfrog map as a step of discrete-time evolution.
	Let's define $y$-variable as a non-commutative cross-ratio
	\begin{align}\label{def_y}
		y_i^j=y_i^j(u)=-\kappa(u_i^{j+1},\,u_{i+1}^{j+1},\,u_i^ja_i^j,\,u_{i+1}^j).
	\end{align}
	
	\begin{theorem}
		The $y$-variable
		satisfies
		\begin{align}\label{eq_y}
			\begin{split}
				(b_i^j)^{-1}
				\left(1+y_{i-1}^{j}\right)
				\left(1+\left(y_i^j\right)^{-1}\right)^{-1}
				\left(y_i^{j-1}\right)^{-1}b_i^j\\
				=
				\left(1+\left(y_i^j\right)^{-1}\right)
				y_i^{j+1}
				\left(1+y_{i+1}^j\right)^{-1},
			\end{split}
		\end{align}
		where $b_i^j$ is given by \eqref{cross_b}.
	\end{theorem}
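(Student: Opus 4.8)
The plan is to first replace the cross-ratio definition \eqref{def_y} by a purely algebraic expression in the coordinates $a_i^j,b_i^j$, and then to verify \eqref{eq_y} by feeding in the explicit leapfrog evolution \eqref{leap_a}--\eqref{leap_b}. Throughout I abbreviate $s_i^j:=a_i^j+b_i^j$, so that in the time-indexed notation the leapfrog map reads $a_i^{j+1}=(s_{i-1}^j)^{-1}a_{i-1}^j\,s_i^j$ and $b_i^{j+1}=(s_i^j)^{-1}b_i^j\,s_{i+1}^j$.

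The first and decisive step is the identification $y_i^j=(a_i^j)^{-1}b_i^j$. By Theorem \ref{thm1.8}, the formulas \eqref{cross_a}--\eqref{cross_b} give $\kappa(u_i^{j+1},u_{i+1}^{j+1},u_i^j,u_{i+1}^j)=-b_i^j(a_i^j)^{-1}$. Since only the third slot of the cross-ratio in \eqref{def_y} is rescaled (by $a_i^j$), the relative invariance of Theorem \ref{thm2} turns that rescaling into a conjugation,
\begin{align*}
\kappa(u_i^{j+1},u_{i+1}^{j+1},u_i^j a_i^j,u_{i+1}^j)=(a_i^j)^{-1}\kappa(u_i^{j+1},u_{i+1}^{j+1},u_i^j,u_{i+1}^j)\,a_i^j=-(a_i^j)^{-1}b_i^j,
\end{align*}
whence $y_i^j=(a_i^j)^{-1}b_i^j$. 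Alternatively one verifies this directly by expanding all four points in the basis $\{u_i^j,u_{i+1}^j\}$ using \eqref{equ1} and solving the defining system \eqref{cr} for $\kappa$.

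With this identification every factor in \eqref{eq_y} becomes transparent: $1+y_{i-1}^j=(a_{i-1}^j)^{-1}s_{i-1}^j$, $1+(y_i^j)^{-1}=(b_i^j)^{-1}s_i^j$, and $(1+y_{i+1}^j)^{-1}=(s_{i+1}^j)^{-1}a_{i+1}^j$, while the leapfrog step gives $y_i^{j+1}=(s_i^j)^{-1}(a_{i-1}^j)^{-1}s_{i-1}^j(s_i^j)^{-1}b_i^j\,s_{i+1}^j$. Substituting these into the right-hand side of \eqref{eq_y} makes the $s_i^j$ and $s_{i+1}^j$ factors cancel in pairs, collapsing it to $(b_i^j)^{-1}(a_{i-1}^j)^{-1}s_{i-1}^j(s_i^j)^{-1}b_i^j\,a_{i+1}^j$; the same substitution collapses the left-hand side to $(b_i^j)^{-1}(a_{i-1}^j)^{-1}s_{i-1}^j(s_i^j)^{-1}b_i^j(y_i^{j-1})^{-1}b_i^j$. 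Cancelling the common invertible prefactor $(b_i^j)^{-1}(a_{i-1}^j)^{-1}s_{i-1}^j(s_i^j)^{-1}b_i^j$ reduces \eqref{eq_y} to the single relation $(y_i^{j-1})^{-1}b_i^j=a_{i+1}^j$.

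It remains to prove this last relation, and here the time index $j-1$ enters. Using the leapfrog map once more to write $b_i^j=(s_i^{j-1})^{-1}b_i^{j-1}s_{i+1}^{j-1}$ and $a_{i+1}^j=(s_i^{j-1})^{-1}a_i^{j-1}s_{i+1}^{j-1}$, together with $(y_i^{j-1})^{-1}=(b_i^{j-1})^{-1}a_i^{j-1}$, the relation reduces after cancelling $s_{i+1}^{j-1}$ to the identity $a_i^{j-1}(s_i^{j-1})^{-1}b_i^{j-1}=b_i^{j-1}(s_i^{j-1})^{-1}a_i^{j-1}$. This holds because, setting $a=a_i^{j-1}$, $b=b_i^{j-1}$ and $s=a+b$ and substituting $a=s-b$, both sides equal $b-bs^{-1}b$. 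I expect the main obstacle to be organizational rather than conceptual: because \eqref{eq_y} mixes the three time levels $j-1,j,j+1$, one must apply the leapfrog evolution both forward (to resolve $y_i^{j+1}$) and backward (to resolve $y_i^{j-1}$ against the level-$j$ data $b_i^j,a_{i+1}^j$), and the only real care needed is in tracking the non-commutative orderings and inverses through these substitutions so that the telescoping cancellations and the final swap identity $as^{-1}b=bs^{-1}a$ come out correctly.
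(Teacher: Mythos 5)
Your proof is correct, and it reaches \eqref{eq_y} by a route that is organized genuinely differently from the paper's, though both rest on the same key identification $y_i^j=(a_i^j)^{-1}b_i^j$ obtained from \eqref{cross_b} and the relative invariance of the cross-ratio (Theorem \ref{thm2}). The paper substitutes $b_i^j=a_i^jy_i^j$ into \eqref{leap_a}--\eqref{leap_b} to produce two auxiliary recurrences, $a_{i+1}^j=(y_i^{j-1})^{-1}a_i^jy_i^j$ and $a_i^{j+1}=(1+y_{i-1}^j)^{-1}a_i^j(1+y_i^j)$, and then obtains \eqref{eq_y} as the \emph{compatibility condition} of computing $a_{i+1}^{j+1}$ in the two possible orders (shift in $i$ then in $j$, or vice versa), finishing with the back-substitution $a_i^j=b_i^j(y_i^j)^{-1}$. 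You instead substitute the explicit level-$j$ expressions (and the forward image $y_i^{j+1}$) into both sides of \eqref{eq_y}, cancel the common invertible left prefactor $(b_i^j)^{-1}(a_{i-1}^j)^{-1}s_{i-1}^j(s_i^j)^{-1}b_i^j$, and thereby reduce the whole identity to the single relation $(y_i^{j-1})^{-1}b_i^j=a_{i+1}^j$ --- which is precisely the paper's first auxiliary recurrence in disguise --- and then prove that relation by one more application of the map together with the swap identity $a\,s^{-1}b=b\,s^{-1}a$ for $s=a+b$; the paper's counterpart of this same elementary fact is the commutation of $y_i^{j-1}$ with $1+y_i^{j-1}$. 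All of your intermediate formulas check out (in particular $1+y_{i-1}^j=(a_{i-1}^j)^{-1}s_{i-1}^j$, $1+(y_i^j)^{-1}=(b_i^j)^{-1}s_i^j$, and the collapsed forms of both sides), so the argument is complete. What your organization buys is directness: \eqref{eq_y} is verified as a bona fide algebraic identity with one clean residual lemma, and every cancellation is visible. What the paper's organization buys is structural meaning: there \eqref{eq_y} emerges as a discrete zero-curvature/consistency condition for the two commuting shifts of $a_i^j$, which explains \emph{why} such a $Y$-system must hold rather than merely checking that it does.
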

	\begin{proof}
		Using the property of non-commutative cross-ratios and Theorem \ref{thm1.8},
		we have
		\begin{align*}
			y_i^j=-(a_i^j)^{-1}
			\kappa(u_i^{j+1},\,u_{i+1}^{j+1},\,u_i^j,\,u_{i+1}^j)
			a_i^j=
			(a_i^j)^{-1}b_i^j,
		\end{align*}
		or equivalently,
		\begin{align*}
			b_i^j=a_i^jy_i^j.
		\end{align*}
		Substituting it into \eqref{leap_a} and \eqref{leap_b} yields
		\begin{subequations}
			\begin{align}
				&a_{i+1}^j
				=(y_i^{j-1})^{-1}a_i^jy_i^j,\label{aij1}\\
				&a_i^{j+1}
				=\left(1+y_{i-1}^j\right)^{-1}a_i^j\left(1+y_i^j\right).\label{aij2}
			\end{align}
		\end{subequations}
		There are two ways to compute $a_{i+1}^{j+1}$.
		From \eqref{aij1}, we have
		\begin{align*}
			a_{i+1}^{j+1}
			=(y_i^{j})^{-1}a_i^{j+1}y_i^{j+1}
			=(y_i^{j})^{-1}\left(1+y_{i-1}^j\right)^{-1}a_i^j\left(1+y_i^j\right)y_i^{j+1},
		\end{align*}
		and \eqref{aij2} leads to
		\begin{align*}
			a_{i+1}^{j+1}
			=\left(1+y_{i}^j\right)^{-1}a_{i+1}^j\left(1+y_{i+1}^j\right)
			=\left(1+y_{i}^j\right)^{-1}(y_i^{j-1})^{-1}a_i^jy_i^j\left(1+y_{i+1}^j\right).
		\end{align*}
		Therefore, from the compatibility condition, we get
		\begin{align*}
			(y_i^{j})^{-1}\left(1+y_{i-1}^j\right)^{-1}a_i^j\left(1+y_i^j\right)y_i^{j+1}
			=
			\left(1+y_{i}^j\right)^{-1}(y_i^{j-1})^{-1}a_i^jy_i^j\left(1+y_{i+1}^j\right).
		\end{align*}
		The substitution of the identity $a_i^j=b_i^j(y_i^j)^{-1}$ yields \eqref{eq_y}.

	\end{proof}
	
	In the commutative case,
	the equation \eqref{eq_y} reduces to
	\begin{align}\label{eq_com_y}
		y_i^{j+1}y_i^{j-1}
		=\frac{\left(1+y_{i+1}^j\right)\left(1+y_{i-1}^j\right)}{\left(1+\left(y_i^j\right)^{-1}\right)^2},
	\end{align}
	which is the $Y$-system corresponding to the commutative leapfrog map \cite{Y2016}.
	
	\section{Non-commutative Laurent bi-orthogonal polynomials and non-commutative relativistic Toda equation}\label{sec_ops}
	
	In this section, Lax representation and continuum limit of equations \eqref{leap_a} and \eqref{leap_b} are obtained by utilizing the non-commutative Laurent bi-orthogonal polynomials. Non-commutative orthogonal polynomials theory was proposed in \cite[Sec. 8.3]{gelfand95} as a formal analog of the matrix-valued orthogonal polynomials, and its connection with non-commutative Toda lattice through spectral transformations was recently considered in \cite{li23}. Let $\mr$ be a skew field generated by unity $1$ and formal moments $\{m_i\}_{i=-\infty}^\infty$, and denote by $\mr[[z]]$ (resp. $\mr[z]$) the ring of formal power  series (resp. polynomial) in $z$ with coefficients in the skew field $\mr$.
	This skew field is endowed with an involution $\mr\to \mr^*$ such that $(a_i)^*=a_i^*$, which could be extended to the polynomial ring by $\mr[z]\to \mr^*[z^{-1}]$ such that
	\begin{align*}
		\left(
		\sum_{i}a_iz^i
		\right)^*=\sum_{i}a_i^* z^{-i}.
	\end{align*}
	Then we can define an inner product $\langle\cdot,\cdot\rangle: \,\mr[[z]]\times \mr[[z]]\to \mr$ and
	\begin{align}\label{inn}
		\langle \sum_i a_iz^i,\sum_j b_jz^j\rangle=\sum_{i,j}a_i m_{i-j}b_j^*.
	\end{align}	
	Moreover, we have the following properties for the inner product.
	\begin{proposition}
		Given $f_1(z),f_2(z),g_1(z),g_2(z)\in \mr[[z]]$ and $\alpha_1,\alpha_2,\beta_1,\beta_2\in \mr$, we have
		\begin{enumerate}
			\item $\langle \alpha_1f_1(z)+\alpha_2f_2(z),g(z)\rangle=\alpha_1\langle f_1(z),g(z)\rangle+\alpha_2\langle f_2(z),g(z)\rangle$;
			\item $\langle f(z),\beta_1g_1(z)+\beta_2g_2(z)\rangle=\langle f(z),g_1(z)\rangle \beta_1^*+\langle f(z),g_2(z)\rangle\beta_2^*$;
			\item $\langle zf(z),g(z)\rangle=\langle f(z),z^{-1}g(z)\rangle$.
		\end{enumerate}
		
	\end{proposition}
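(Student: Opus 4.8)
The plan is to verify the three identities directly from the defining formula \eqref{inn}, expanding each formal series into its coefficients and tracking carefully on which side the scalars from $\mr$ are multiplied and how the involution acts. No deep structure is needed here; the entire content is bookkeeping of non-commutative products together with the anti-homomorphism property of $*$.

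For the first identity I would write $f_1(z)=\sum_i a_i z^i$ and $f_2(z)=\sum_i c_i z^i$, so that the coefficient of $z^i$ in $\alpha_1 f_1+\alpha_2 f_2$ is $\alpha_1 a_i+\alpha_2 c_i$. Substituting into \eqref{inn} and using left-distributivity and associativity in $\mr$, each term $(\alpha_1 a_i+\alpha_2 c_i)m_{i-j}b_j^*$ splits as $\alpha_1(a_i m_{i-j}b_j^*)+\alpha_2(c_i m_{i-j}b_j^*)$, and the scalars $\alpha_1,\alpha_2$ then factor out of the double sum on the left, giving property (1) at once.

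The second identity is the only one requiring a little care, since the pairing is conjugate-linear in its second slot. Writing $g_1(z)=\sum_j b_j z^j$ and $g_2(z)=\sum_j d_j z^j$, the coefficient of $z^j$ in $\beta_1 g_1+\beta_2 g_2$ is $\beta_1 b_j+\beta_2 d_j$, whose image under the involution is $(\beta_1 b_j+\beta_2 d_j)^*=b_j^*\beta_1^*+d_j^*\beta_2^*$, because $*$ is an additive anti-homomorphism. Inserting this into \eqref{inn} and factoring $\beta_1^*$ and $\beta_2^*$ out to the right of the respective sums produces $\langle f,g_1\rangle\beta_1^*+\langle f,g_2\rangle\beta_2^*$. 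The point to watch is precisely that $*$ reverses the order of products, so the scalars re-emerge on the right and conjugated; this is the one place where the anti-automorphism property of the involution is indispensable.

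For the third identity I would simply reindex the double sum. With $f(z)=\sum_i a_i z^i$ one has $zf(z)=\sum_i a_i z^{i+1}$, and with $g(z)=\sum_j b_j z^j$ one has $z^{-1}g(z)=\sum_j b_j z^{j-1}$. Applying \eqref{inn} to each side, the left-hand member becomes $\sum_{i,j} a_i m_{(i+1)-j} b_j^*$ while the right-hand member becomes $\sum_{i,j} a_i m_{i-(j-1)} b_j^*$; since $(i+1)-j=i-(j-1)$, these agree term by term. The essential observation is that the moment $m_{i-j}$ depends only on the difference of indices, so moving a factor of $z$ across the pairing merely transfers one unit of degree from the first argument to the second. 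Overall there is no genuine obstacle: once the series are written out explicitly, all three claims reduce to elementary manipulations of the defining sum, the only subtlety being the side on which scalars act and the order-reversal induced by the involution in part (2).
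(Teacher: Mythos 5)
Your proof is correct and takes the only natural route: the paper states this proposition without any proof, treating it as immediate from the defining formula \eqref{inn}, and your coefficient-by-coefficient verification is precisely that omitted check. In particular, you correctly isolate the two points that make the statement work in the non-commutative setting — the anti-homomorphism property $(\beta b)^*=b^*\beta^*$ of the involution, which pushes the scalars to the right and conjugates them in part (2), and the fact that $m_{i-j}$ depends only on the index difference, so that $m_{(i+1)-j}=m_{i-(j-1)}$ gives part (3).
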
	
	Therefore, there are monic non-commutative Laurent bi-orthogonal polynomials $\{P_n(z)\}_{n=0}^\infty$ and $\{Q_n(z)\}_{n=0}^\infty$ with respect to the inner product \eqref{inn} by 
	\begin{align}\label{or}
		\langle P_n(z),Q_m(z)\rangle=H_n\delta_{n,m},
	\end{align}
	where $H_n\in\mr^\times$ is a normalization factor. We call $P_n(z)$ as a monic polynomial if its coefficient in the highest order is the unity in $\mr$, i.e. $P_n(z)$ admits the form $1\cdot z^n+\xi_{n,n-1}\cdot z^{n-1}+\cdots+\xi_{n,0}\cdot z^0$ for some $\xi_{n,i}\in \mr$.

	\subsection{Quasi-determinant formula for non-commutative Laurent bi-orthogonal polynomials}
	In this part, we first demonstrate the quasi-determinant formula for non-commutative Laurent bi-orthogonal polynomials defined by \eqref{or}.
	\begin{theorem}\label{qor}
		If the Toeplitz matrix $(m_{i-j})_{i,j=0}^n$ is invertible for all $n\in\mathbb{N}$, then non-commutative Laurent bi-orthogonal polynomials $\{P_n(z)\}_{n=0}^\infty$ and $\{Q_n(z)\}_{n=0}^\infty$ defined by \eqref{or} have the following quasi-determinant expressions
		\begin{align}\label{form}
			P_n(z)=\left|\begin{array}{cccc}
				m_0&\cdots&m_{1-n}&1\\
				\vdots&&\vdots&\vdots\\
				m_{n-1}&\cdots&m_0&z^{n-1}\\
				m_n&\cdots&m_1&\boxed{z^n}
			\end{array}
			\right|, \quad (Q_n(z))^*=\left|\begin{array}{cccc}
				m_0&\cdots&m_{1-n}&m_{-n}\\
				\vdots&&\vdots&\vdots\\
				m_{n-1}&\cdots&m_0&m_{-1}\\
				1&\cdots&z^{1-n}&\boxed{z^{-n}}
			\end{array}
			\right|,
		\end{align}
		and the normalization factor admits the following Toeplitz quasi-determinant formula
		\begin{align*}
			H_n=\left|\begin{array}{cccc}
				m_0&\cdots&m_{1-n}&m_{-n}\\
				\vdots&&\vdots&\vdots\\
				m_{n-1}&\cdots&m_0&m_{-1}\\
				m_n&\cdots&m_1&\boxed{m_0}
			\end{array}
			\right|.
		\end{align*}
	\end{theorem}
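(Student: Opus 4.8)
The plan is to reduce the defining relation \eqref{or} to two families of scalar moment conditions and then to recognize the quasi-determinant expansions as the unique solutions of the associated Toeplitz linear systems. First I would record that, since each $Q_m$ (resp.\ $P_m$) is monic of degree $m$, the relation $\langle P_n,Q_m\rangle=H_n\delta_{n,m}$ is equivalent to the conditions $\langle P_n,z^k\rangle=0$ and $\langle z^k,Q_n\rangle=0$ for $0\le k\le n-1$, together with $H_n=\langle P_n,Q_n\rangle=\langle P_n,z^n\rangle$; here I use the left-linearity and the right anti-linearity of \eqref{inn} to pass between the $Q_m$-basis and the monomial basis, and the elementary identity $\langle z^i,z^k\rangle=m_{i-k}$ coming from \eqref{inn} and $1^*=1$. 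Given the invertibility of the Toeplitz matrix $M:=(m_{i-j})_{i,j=0}^{n-1}$, these conditions determine the coefficients of $P_n$ and $Q_n$ uniquely, so it suffices to verify that the formulas in \eqref{form} are monic and satisfy them.

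Next I would expand the quasi-determinant defining $P_n(z)$ about its boxed $(n,n)$-entry. The top-left $n\times n$ block is exactly $M$, the row below it is $\mathbf{b}:=(m_n,\dots,m_1)$, and the column above it is the vector of monomials $\mathbf{e}(z):=(1,z,\dots,z^{n-1})^{T}$, so that
\begin{align*}
P_n(z)=z^n-\mathbf{b}\,M^{-1}\mathbf{e}(z).
\end{align*}
Since $z$ commutes with $\mr$, this displays $P_n$ as a monic polynomial of degree $n$ with left coefficients read off from the row vector $\mathbf{b}\,M^{-1}$; monic-ness is immediate because the subtracted term has degree at most $n-1$.

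The verification of orthogonality is then a single cancellation. Applying $\langle\,\cdot\,,z^k\rangle$ and using left-linearity replaces each monomial $z^i$ in $\mathbf{e}(z)$ by $\langle z^i,z^k\rangle=m_{i-k}$, giving $\langle P_n,z^k\rangle=m_{n-k}-\mathbf{b}\,M^{-1}(m_{i-k})_{i=0}^{n-1}$. For $0\le k\le n-1$ the column $(m_{i-k})_{i=0}^{n-1}$ is precisely the $k$-th column of $M$, so $M^{-1}$ sends it to a standard basis vector and $\mathbf{b}\,M^{-1}(m_{i-k})_i=m_{n-k}$, whence $\langle P_n,z^k\rangle=0$. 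The companion identity $\langle z^k,Q_n\rangle=0$ is obtained in the same way after expanding the quasi-determinant for $(Q_n)^*$ about its boxed entry and transporting the involution through the right anti-linearity of \eqref{inn}; the delicate point is that the coefficients of $Q_n$ are now produced on the right together with a $*$, so one must solve the column system $M v=(m_{-n},\dots,m_{-1})^{T}$ and keep the conjugations in the correct places.

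Finally, taking $k=n$ in the same computation gives $H_n=\langle P_n,z^n\rangle=m_0-\mathbf{b}\,M^{-1}\mathbf{d}$ with $\mathbf{d}:=(m_{-n},\dots,m_{-1})^{T}$, and this is exactly the expansion about the boxed $(n,n)$-entry of the Toeplitz quasi-determinant claimed for $H_n$. The only genuine obstacle is bookkeeping: one must consistently track whether coefficients and their involutes sit on the left or the right, and check that the sub-block inverted in each quasi-determinant expansion really is the Toeplitz matrix $M$ guaranteed invertible by hypothesis. Once the expansion point and the side conventions are fixed, all three assertions follow from the single identity $\mathbf{b}\,M^{-1}M=\mathbf{b}$ together with its row analogue $MM^{-1}\mathbf{d}=\mathbf{d}$ used for $Q_n$.
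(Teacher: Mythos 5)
Your proof is correct, but it runs in the opposite direction from the paper's and leans on a different piece of quasi-determinant machinery, so the two are worth contrasting. The paper starts from the orthogonality conditions $\langle P_n(z),z^i\rangle=0$, $0\le i\le n-1$, writes them as the Toeplitz linear system $m_j+\xi_{n,n-1}m_{j-1}+\cdots+\xi_{n,0}m_{j-n}=0$, $j=1,\dots,n$, solves for the coefficients $\xi_{n,i}$ via the quasi-determinantal Cramer's rule (Proposition \ref{p-ls}), and then assembles these coefficient quasi-determinants into the single formula \eqref{form} ``according to the linearity.'' You instead take \eqref{form} as the starting point, expand it at the boxed entry using only Definition \ref{qd-def} --- which is exactly the Schur-complement identity $P_n(z)=z^n-\mathbf{b}\,M^{-1}\mathbf{e}(z)$ --- verify the moment conditions by the cancellation $\mathbf{b}\,M^{-1}M e_k=\mathbf{b}\,e_k$, and close the argument with an explicit uniqueness step based on the invertibility of $M$. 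Both routes rest on the same Toeplitz linear algebra, but yours is more self-contained: it bypasses Proposition \ref{p-ls} entirely and makes explicit the column-expansion step that the paper compresses into a single phrase, while also supplying the uniqueness argument (orthogonality to $1,z,\dots,z^{n-1}$ pins down the coefficients) that the paper leaves implicit. What the paper's derivational route buys in exchange is that it shows where \eqref{form} comes from rather than verifying it after the fact. Your handling of the two asymmetries --- solving the column system $Mv=\mathbf{d}$ for $(Q_n)^*$ with the involution transported through the right anti-linearity of \eqref{inn}, and identifying $H_n=\langle P_n,z^n\rangle$ with the boxed-entry expansion of the $(n+1)\times(n+1)$ Toeplitz quasi-determinant --- matches the paper's computations exactly.
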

	\begin{proof}
		Firstly, the orthogonality \eqref{or} could be equivalently written by 
		\begin{align*}
			\langle P_n(z),z^i\rangle=0,\quad 0\leq i\leq n-1,
		\end{align*}
		which is a linear system with non-commutative variables
		\begin{align*}
			m_j+\xi_{n,n-1}m_{j-1}+\cdots+\xi_{n,0}m_{j-n}=0,\quad j=1,\cdots,n.
		\end{align*}
		According to Proposition \ref{p-ls}, we know that if the coefficient matrix $(m_{j-i})_{j,i=1}^n$ is invertible, then corresponding linear system has the unique solution
		\begin{align*}
			\xi_{n,i}=\left|\begin{array}{cccc}
				m_0&\cdots&m_{1-n}&\\
				\vdots&&\vdots&e_i^\top\\
				m_{n-1}&\cdots&m_0&\\
				m_n&\cdots&m_1&\boxed{0}
			\end{array}
			\right|,\quad i=0,1,\cdots,n-1,
		\end{align*}
		where $e_i$ is a unit row vector whose $(i+1)$-th position is the unity. Therefore, according to the linearity, we know that $P_n(z)$ satisfies \eqref{form}, and $(Q_n(z))^*$ could be similarly verified.
		Moreover, 
		\begin{align*}
			H_n=\langle P_n(z),Q_n(z)\rangle=\langle P_n(z),z^{n}\rangle=\left|\begin{array}{cccc}
				m_0&\cdots&m_{1-n}&m_{-n}\\
				\vdots&&\vdots&\vdots\\
				m_{n-1}&\cdots&m_0&m_{-1}\\
				m_n&\cdots&m_1&\boxed{m_0}
			\end{array}
			\right|,
		\end{align*}
		which could be regarded as a non-commutative version of the Toeplitz determinant.
	\end{proof}
	
	\subsection{Spectral transformations for non-commutative Laurent bi-orthogonal polynomials}
	To demonstrate spectral transformations, we need to introduce notations of adjacent families for non-commutative Laurent bi-orthogonal polynomials. 	
	
	\begin{definition}
		For $k\in\mathbb{Z}$, let's define a discrete deformed inner product $\langle \cdot,\cdot\rangle_k$: $\mr[[z]]\times \mr[[z]]\to \mr$ such that
		\begin{align*}
			\langle \sum_i a_iz^i,\sum_j b_jz^j\rangle_k=a_i m_{i-j+k}b_j^*.
		\end{align*} 
		Moreover, if 
		\begin{align}\label{or2}
			\langle P_n^{(k)}(z),Q_m^{(k)}(z)\rangle_k=H_n^{(k)}\delta_{n,m}
		\end{align}
		is valid for some invertible normalization factor $H_n^{(k)}$, 
		then we call $\{P_n^{(k)}(z)\}_{k=0}^\infty$ and $\{Q_n^{(k)}(z)\}_{n=0}^\infty$ $(k\in\mathbb{Z})$ as adjacent families of the non-commutative Laurent bi-orthogonal polynomials.
	\end{definition}
	Similar to the Theorem \ref{qor}, one could show from the orthogonal relation \eqref{or2} that
	\begin{align*}
		P_n^{(k)}(z)=\left|\begin{array}{cccc}
			m_k&\cdots&m_{k+1-n}&1\\
			\vdots&&\vdots&\vdots\\
			m_{k+n-1}&\cdots&m_{k}&z^{n-1}\\
			m_{k+n}&\cdots&m_{k+1}&\boxed{z^n}
		\end{array}
		\right|,\quad (Q_n^{(k)}(z))^*=\left|\begin{array}{cccc}
			m_{k}&\cdots&m_{k-n+1}&m_{k-n}\\
			\vdots&&\vdots&\vdots\\
			m_{k+n-1}&\cdots&m_k&m_{k-1}\\
			1&\cdots&z^{1-n}&\boxed{z^{-n}}
		\end{array}
		\right|,
	\end{align*}
	and $H_n^{(k)}$ is a shifted Toeplitz quasi-determinant
	\begin{align*}
		H_n^{(k)}=\left|\begin{array}{ccc}
			m_k&\cdots&m_{k-n}\\
			\vdots&&\vdots\\
			m_{n-k}&\cdots&\boxed{m_k}
		\end{array}
		\right|.
	\end{align*}
	Firstly, with adjacent families of non-commutative Laurent bi-orthogonal polynomials, we have the following Christoffel transformation for $(Q_n^{(k)}(z))^*$.
	\begin{proposition}
		There exists the following spectral transformation between $\{(Q_n^{(k)}(z))^*\}_{n\in\mathbb{N}}$ and $\{(Q_n^{(k+1)}(z))^*\}_{n\in\mathbb{N}}$
		\begin{align}\label{ct}
			(Q_{n+1}^{(k)}(z))^*=z^{-1}(Q_n^{(k)}(z))^*-(Q_n^{(k+1)}(z))^*\varphi_n^{(k)},
		\end{align}
		where 
		\begin{align}\label{varphi}
			\varphi_n^{(k)}=\left|\begin{array}{ccc}
				m_k&\cdots&\boxed{m_{k-n}}\\
				\vdots&&\vdots\\
				m_{k+n}&\cdots&m_k
			\end{array}
			\right|^{-1}\left|\begin{array}{ccc}
				m_{k-1}&\cdots&\boxed{m_{k-1-n}}\\
				\vdots&&\vdots\\
				m_{k-1+n}&\cdots&m_{k-1}
			\end{array}
			\right|.
		\end{align}
	\end{proposition}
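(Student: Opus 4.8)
The plan is to establish \eqref{ct} by a uniqueness argument of Christoffel type: I will show that the right-hand side of \eqref{ct} has the same monic normalization and satisfies the same orthogonality relations as $(Q_{n+1}^{(k)}(z))^*$, so the two expressions must coincide. To set up the bookkeeping, for a polynomial $\phi(z)=\sum_{c\ge 0}\phi_c z^{-c}$ in $z^{-1}$ and $s\in\mathbb{Z}$ I write $\mathcal{L}_s(\phi)=\sum_{c}m_{s-c}\phi_c$. Expanding the defining relation \eqref{or2} exactly as in the proof of Theorem \ref{qor}, the orthogonality $\langle z^i,Q_n^{(k)}(z)\rangle_k=0$ for $0\le i\le n-1$ translates into $\mathcal{L}_s((Q_n^{(k)}(z))^*)=0$ for $s=k,\dots,k+n-1$, while the boxed entry in \eqref{form} (shifted by $k$) fixes the coefficient of $z^{-n}$ in $(Q_n^{(k)}(z))^*$ to be $1$. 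Since the Toeplitz block $(m_{k+r-c})_{r,c=0}^{n-1}$ is invertible, these conditions determine $(Q_n^{(k)}(z))^*$ uniquely, and the same statement with $n\mapsto n+1$ characterizes $(Q_{n+1}^{(k)}(z))^*$. I will also use two elementary facts: $\mathcal{L}_s$ is right $\mr$-linear, i.e.\ $\mathcal{L}_s(\phi a)=\mathcal{L}_s(\phi)a$, and multiplication by $z^{-1}$ shifts the index, $\mathcal{L}_s(z^{-1}\phi)=\mathcal{L}_{s-1}(\phi)$.

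Next I would set $R(z):=z^{-1}(Q_n^{(k)}(z))^*-(Q_n^{(k+1)}(z))^*\varphi_n^{(k)}$ and verify the normalization and orthogonality of $R$. The leading behaviour is immediate: $z^{-1}(Q_n^{(k)}(z))^*$ contributes $z^{-(n+1)}$ with coefficient $1$, while $(Q_n^{(k+1)}(z))^*\varphi_n^{(k)}$ has degree $n$ in $z^{-1}$, so the coefficient of $z^{-(n+1)}$ in $R$ equals $1$. For the orthogonality I compute $\mathcal{L}_s(R)=\mathcal{L}_{s-1}((Q_n^{(k)}(z))^*)-\mathcal{L}_s((Q_n^{(k+1)}(z))^*)\varphi_n^{(k)}$ using the shift and right-linearity. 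For $s=k+1,\dots,k+n$ the first term vanishes because $s-1\in\{k,\dots,k+n-1\}$ is a zero of $(Q_n^{(k)})^*$, and the second term vanishes because $s\in\{k+1,\dots,k+n\}$ is a zero of $(Q_n^{(k+1)})^*$; hence $\mathcal{L}_s(R)=0$ for these $n$ values of $s$.

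The only remaining condition is $\mathcal{L}_k(R)=0$, and this is exactly what forces the value of $\varphi_n^{(k)}$: it reads $\mathcal{L}_{k-1}((Q_n^{(k)}(z))^*)=\mathcal{L}_k((Q_n^{(k+1)}(z))^*)\varphi_n^{(k)}$, i.e.\ $\varphi_n^{(k)}=\big(\mathcal{L}_k((Q_n^{(k+1)}(z))^*)\big)^{-1}\mathcal{L}_{k-1}((Q_n^{(k)}(z))^*)$. The crux of the proof is to recognize the two scalars $\mathcal{L}_{k-1}((Q_n^{(k)}(z))^*)$ and $\mathcal{L}_k((Q_n^{(k+1)}(z))^*)$ as the two Toeplitz quasi-determinants appearing in \eqref{varphi}. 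This I would do by a bordering argument: substituting the coefficients of $(Q_n^{(k)}(z))^*$, which solve the Toeplitz linear system above, into $\mathcal{L}_{k-1}$ and running Cramer's rule backwards identifies $\mathcal{L}_{k-1}((Q_n^{(k)})^*)$ with the quasi-determinant obtained by adjoining the extra row $(m_{k-1-c})_c$ on top and boxing its last entry; after shifting the row index by one this is precisely the second factor in \eqref{varphi}, and the analogous computation for $\mathcal{L}_k((Q_n^{(k+1)})^*)$ gives the first factor. In quasi-determinant language this identification is an instance of the non-commutative Jacobi identity \eqref{ncj1}, and getting the boxed positions and index shifts to match \eqref{varphi} exactly is the main obstacle; the rest is the routine uniqueness conclusion, since $R$ is monic of degree $n+1$ in $z^{-1}$ and annihilated by $\mathcal{L}_s$ for $s=k,\dots,k+n$, whence $R(z)=(Q_{n+1}^{(k)}(z))^*$.
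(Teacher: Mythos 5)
Your proof is correct, but it takes a genuinely different route from the paper's. The paper proves \eqref{ct} by direct quasi-determinant manipulation: it applies the non-commutative Jacobi identity \eqref{ncj1} to the $(1,n+2)$-rows and $(1,n+2)$-columns of the quasi-determinant expression for $(Q_{n+1}^{(k)}(z))^*$, and then uses the homological relations \eqref{hm1}--\eqref{hm2} to recognize the resulting terms as $z^{-1}(Q_n^{(k)}(z))^*$ and $(Q_n^{(k+1)}(z))^*\varphi_n^{(k)}$. You instead argue by a characterization-and-uniqueness scheme: the right-hand side of \eqref{ct} is monic of degree $n+1$ in $z^{-1}$ and is annihilated by the moment functionals $\mathcal{L}_s$ for $s=k,\dots,k+n$, the single condition $\mathcal{L}_k(R)=0$ being exactly what forces the value of $\varphi_n^{(k)}$; hence $R=(Q_{n+1}^{(k)}(z))^*$. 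I checked your identification of the two scalars $\mathcal{L}_{k-1}((Q_n^{(k)})^*)$ and $\mathcal{L}_k((Q_n^{(k+1)})^*)$ with the two bordered Toeplitz quasi-determinants in \eqref{varphi}, and it is exactly right (the unboxed $n\times n$ block and the border row/column match entry by entry); note, however, that this step is really just the defining expansion $|A|_{i,j}=a_{i,j}-r_i^j(A^{i,j})^{-1}c_j^i$ of Definition \ref{qd-def} applied to the bordered matrix, combined with invariance of quasi-determinants under permutation of the unboxed rows, rather than an instance of the Jacobi identity \eqref{ncj1} as you suggest. As for what each approach buys: the paper's computation is shorter once the quasi-determinant machinery is in place and never needs uniqueness of the orthogonal family, while your argument is more conceptual --- it explains \emph{why} the Christoffel relation holds and where $\varphi_n^{(k)}$ comes from --- at the cost of explicitly requiring invertibility of the shifted $(n+1)\times(n+1)$ Toeplitz blocks so that the monic solution of the orthogonality conditions is unique; since this is the same standing assumption as in Theorem \ref{qor} (extended to adjacent families), it is not a gap, but you should state it explicitly.
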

	\begin{proof}
		This proof is based on the non-commutative Jacobi identity and homological relations for quasi-determinants. By acting non-commutative Jacobi identity \eqref{ncj1} to the $(1,n+2)$-rows and $(1,n+2)$-columns to $(Q_{n+1}^{k}(z))^*$, we have
		\begin{align}\label{ct1}
			\begin{aligned}
				&\left|\begin{array}{cccc}
					m_k&\cdots&m_{k-n}&m_{k-n-1}\\
					\vdots&&\vdots&\vdots\\
					m_{k+n}&\cdots&m_k&m_{k-1}\\
					1&\cdots&z^{-n}&\boxed{z^{-n-1}}
				\end{array}
				\right|=\left|\begin{array}{ccc}
					m_k&\cdots&m_{k-n}\\
					\vdots&&\vdots\\
					m_{k+n-1}&\cdots&m_{k-1}\\
					z^{-1}&\cdots&\boxed{z^{-n-1}}
				\end{array}
				\right|\\
				&-\left|\begin{array}{ccc}
					m_{k+1}&\cdots&m_{k+1-n}\\
					\vdots&&\vdots\\
					m_{k+n-1}&\cdots&m_k\\
					\boxed{1}&\cdots&z^{-n}
				\end{array}
				\right|\left|\begin{array}{ccc}
					\boxed{m_k}&\cdots&m_{k-n}\\
					\vdots&&\vdots\\
					m_{k+n}&\cdots&m_k
				\end{array}
				\right|^{-1}\left|\begin{array}{ccc}
					m_{k-1}&\cdots&\boxed{m_{k-1-n}}\\
					\vdots&&\vdots\\
					m_{k-1+n}&\cdots&m_{k-1}
				\end{array}
				\right|.
			\end{aligned}
		\end{align}
		Moreover, by using homological relations \eqref{hm1} and \eqref{hm2}, we have
		\begin{align*}
			&\left|\begin{array}{ccc}
				m_{k+1}&\cdots&m_{k+1-n}\\
				\vdots&&\vdots\\
				m_{k+n-1}&\cdots&m_k\\
				\boxed{1}&\cdots&z^{-n}
			\end{array}
			\right|=\left|\begin{array}{ccc}
				m_{k+1}&\cdots&m_{k+1-n}\\
				\vdots&&\vdots\\
				m_{k+n}&\cdots&m_{k}\\
				1&\cdots&\boxed{z^{-n}}
			\end{array}
			\right|\left|\begin{array}{ccc}
				m_{k+1}&\cdots&m_{k+1-n}\\
				\vdots&&\vdots\\
				m_{k+n}&\cdots&m_{k}\\
				\boxed{0}&\cdots&1
			\end{array}
			\right|,\\
			&\left|\begin{array}{ccc}
				\boxed{m_k}&\cdots&m_{k-n}\\
				\vdots&&\vdots\\
				m_{k+n}&\cdots&m_k
			\end{array}
			\right|^{-1}=\left(
			\left|\begin{array}{ccc}
				m_k&\cdots&\boxed{m_{k-n}}\\
				\vdots&&\vdots\\
				m_{k+n}&\cdots&m_k
			\end{array}
			\right|
			\left|\begin{array}{ccc}
				\boxed{0}&\cdots&1\\
				m_{k+1}&\cdots&m_{k+1-n}\\
				\vdots&&\vdots\\
				m_{k+n}&\cdots&m_k
			\end{array}
			\right|
			\right)^{-1}.
		\end{align*}
		Taking the substitution of this equation into \eqref{ct1}, we know that \eqref{ct} is valid for all adjacent families of non-commutative Laurent bi-orthogonal polynomials. 
	\end{proof}
	Besides, there is another spectral transformation for non-commutative Laurent bi-orthogonal polynomials called Geronimus transformation. We have the following proposition.
	\begin{proposition}
		For adjacent families $\{(Q_n^{(k-1)}(z))^*\}_{n=0}^\infty$ and $\{(Q_n^{(k)}(z))^*\}_{n=0}^\infty$, they satisfy the following Geronimus transformation
		\begin{align}\label{gt}
			z^{-1}(Q_n^{(k-1)}(z))^*=(Q_{n+1}^{(k)}(z))^*+(Q_n^{(k)}(z))^*\psi_n^{(k)},
		\end{align}
		where \begin{align}\label{psi}
			\psi_n^{(k)}=\left(
			H_n^{(k)}
			\right)^{-1}H_n^{(k-1)},
		\end{align} and $H_n^{(k)}$ is the shifted Toeplitz quasi-determinant defined in \eqref{or2}.
	\end{proposition}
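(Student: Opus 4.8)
The plan is to prove \eqref{gt} by the uniqueness of the adjoint families $\{(Q_m^{(k)})^*\}$ rather than by direct quasi-determinant surgery, exploiting that each $(Q_m^{(k)})^*$ is pinned down by its leading coefficient together with a finite list of vanishing ``moment pairings''. To make this precise I would first introduce, for any Laurent polynomial $G=\sum_j g_j z^{-j}$, the linear functional $\mu_k^i(G):=\sum_j m_{i-j+k}\,g_j$. Writing $(Q_m^{(k)}(z))^*=\sum_{j=0}^m \hat q_{m,j}z^{-j}$ with $\hat q_{m,m}=1$ (monicity, inherited from $Q_m^{(k)}$), the bi-orthogonality \eqref{or2} together with the triangularity of $\{P_n^{(k)}\}$ forces $\langle z^i,Q_m^{(k)}\rangle_k=0$ for $0\le i\le m-1$; in coefficient form this is exactly $\mu_k^i\big((Q_m^{(k)})^*\big)=0$ for $0\le i\le m-1$, while $\mu_k^m\big((Q_m^{(k)})^*\big)=\langle z^m,Q_m^{(k)}\rangle_k=H_m^{(k)}$. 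Since the shifted Toeplitz matrix $(m_{i-j+k})$ is invertible (the standing hypothesis of Theorem \ref{qor}), these $m+1$ conditions determine $(Q_m^{(k)})^*$ uniquely once its top coefficient is fixed to $1$. This uniqueness is the engine of the proof.

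Next I would form the candidate $F:=z^{-1}(Q_n^{(k-1)}(z))^*-(Q_n^{(k)}(z))^*\psi_n^{(k)}$ and show that it coincides with $(Q_{n+1}^{(k)})^*$. Its top term is $z^{-(n+1)}$ with coefficient $1$, because the second summand only reaches $z^{-n}$, so $F$ already has the correct monic shape. The key computation is the index-shift identity $\mu_k^i(z^{-1}G)=\mu_{k-1}^i(G)$, immediate from the reindexing $m_{i-(j+1)+k}=m_{i-j+(k-1)}$ (equivalently, from property (3) of the inner product). Applying it with $G=(Q_n^{(k-1)})^*$ gives $\mu_k^i(F)=\mu_{k-1}^i\big((Q_n^{(k-1)})^*\big)-\mu_k^i\big((Q_n^{(k)})^*\big)\psi_n^{(k)}$.

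For $0\le i\le n-1$ both moment pairings vanish by the orthogonality recorded above, so $\mu_k^i(F)=0$. At $i=n$ they equal $H_n^{(k-1)}$ and $H_n^{(k)}$ respectively, so $\mu_k^n(F)=H_n^{(k-1)}-H_n^{(k)}\psi_n^{(k)}$, and this is $0$ precisely because $\psi_n^{(k)}$ is defined by \eqref{psi} as $(H_n^{(k)})^{-1}H_n^{(k-1)}$. Thus $F$ is a monic Laurent polynomial of top degree $n+1$ annihilated by $\mu_k^0,\dots,\mu_k^n$, and by the uniqueness established in the first step $F=(Q_{n+1}^{(k)})^*$, which is exactly \eqref{gt}.

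The genuinely delicate points are bookkeeping rather than conceptual. One must verify that right multiplication by $\psi_n^{(k)}$ passes through the pairing, $\mu_k^i(G\psi)=\mu_k^i(G)\psi$, which uses that $z$ is central together with associativity in $\mr$; and one must track the involution $*$ carefully so that monicity and the identification $\mu_k^m\big((Q_m^{(k)})^*\big)=H_m^{(k)}$ are stated consistently. I expect the main effort to be confirming that the definition \eqref{psi} of $\psi_n^{(k)}$ is precisely what kills the single surviving pairing $\mu_k^n(F)$; everything else is forced. As an alternative route, consistent with the Christoffel proof given above, the same identity follows by applying the non-commutative Jacobi identity \eqref{ncj1} to $(Q_{n+1}^{(k)})^*$ and simplifying with the homological relations \eqref{hm1}--\eqref{hm2}, but the orthogonality argument is shorter and makes the role of \eqref{psi} transparent.
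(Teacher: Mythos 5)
Your proof is correct, and it is essentially the paper's own orthogonality argument run in reverse rather than an independent one, though the logical scaffolding differs enough to be worth comparing. The paper proves \eqref{gt} by a Fourier expansion: it writes $z^{-1}(Q_n^{(k-1)}(z))^*=(Q_{n+1}^{(k)}(z))^*+\sum_{i=0}^{n}(Q_i^{(k)}(z))^*\psi_{n,i}^{(k)}$ (its equation \eqref{fourier}), uses the index-shift identity $\langle z^i,Q_n^{(k-1)}(z)\rangle_{k-1}=\langle z^i,zQ_n^{(k-1)}(z)\rangle_k$ --- precisely your $\mu_k^i(z^{-1}G)=\mu_{k-1}^i(G)$ --- to conclude $\psi_{n,i}^{(k)}=0$ for $i<n$, and then evaluates the surviving coefficient by pairing against $P_n^{(k)}$, thereby \emph{deriving} formula \eqref{psi}. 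You instead take \eqref{psi} as given, check that the candidate $F=z^{-1}(Q_n^{(k-1)})^*-(Q_n^{(k)})^*\psi_n^{(k)}$ satisfies the monic normalization and the $n+1$ moment conditions that characterize $(Q_{n+1}^{(k)})^*$, and finish by uniqueness; your key cancellation $\mu_k^n(F)=H_n^{(k-1)}-H_n^{(k)}\bigl(H_n^{(k)}\bigr)^{-1}H_n^{(k-1)}=0$ is exactly why the non-commutative formula \eqref{psi} must have $\bigl(H_n^{(k)}\bigr)^{-1}$ on the left, and you place it correctly. The two schemes are dual and trade hypotheses: the paper needs the spanning property of $\{(Q_i^{(k)})^*\}$ so that the expansion exists, whereas you need uniqueness, i.e.\ nondegeneracy of the underlying linear system. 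On that single point a correction is in order: the invertibility of the shifted Toeplitz matrices $(m_{i-j+k})_{i,j=0}^{n}$ is not ``the standing hypothesis of Theorem \ref{qor}'' (that theorem is the case $k=0$); it is instead implicit in the assumption that the adjacent families \eqref{or2} exist with invertible normalization factors $H_i^{(k)}$. Alternatively, you can obtain uniqueness exactly as the paper obtains existence --- expand the difference $F-(Q_{n+1}^{(k)})^*$ in the family $\{(Q_i^{(k)})^*\}$ and use invertibility of the $H_i^{(k)}$ --- which makes the two proofs coincide. Your closing remark about the quasi-determinant route via \eqref{ncj1}, \eqref{hm1}, \eqref{hm2} is also consistent with the paper, which notes precisely this alternative in the remark following the proposition.
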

	\begin{proof}
		The proof of Geronimus transformation is based on the idea of Fourier expansion. Since any  polynomial in $(R[z])^*=R^*[z^{-1}]$ can be written in terms of a right linear combination of $\{(Q_n^{(k)}(z))^*\}_{n=0}^\infty$, we have
		\begin{align}\label{fourier}
			z^{-1}(Q_n^{(k-1)}(z))^*=(Q_{n+1}^{(k)}(z))^*+\sum_{i=0}^n (Q_i^{(k)}(z))^*\psi_{n,i}^{(k)}.
		\end{align}
		Based on the orthogonality of adjacent families \eqref{or2}, it is known that
		\begin{align*}
			\langle z^i,Q_n^{(k-1)}(z)\rangle_{k-1}=\langle z^{i-1},Q_n^{(k-1)}(z)\rangle_k=\langle z^i,zQ_n^{(k-1)}(z)\rangle_k	=0,\quad 0\leq i\leq n-1.
		\end{align*}
		Therefore, by taking \eqref{fourier} into the above formula, we know that there are only two terms left and 
		\begin{align*}
			z^{-1}(Q_n^{(k-1)}(z))^*=(Q_{n+1}^{(k)}(z))^*+(Q_n^{(k)}(z))^*\psi_{n}^{(k)},
		\end{align*}
		where $$\psi_{n}^{(k)}=\langle P_n^{(k)}(z),Q_n^{(k)}(z)\rangle_k^{-1}\langle P_n^{(k)}(z),Q_n^{(k-1)}(z)\rangle_{k-1}=\left(H_n^{(k)}\right)^{-1}H_n^{(k-1)}.$$
	\end{proof}
	\begin{remark}
		It should be noted that this Geronimus transformation could also be directly obtained via a non-commutative Jacobi identity by acting on $(n+1,n+2)$-rows and $(1,n+2)$-columns of $(Q_{n+1}^{(k)}(z))^*$.
	\end{remark}
	
	From the Christoffel transformation \eqref{ct} and Geronimus transformations \eqref{gt},  the following recurrence relation for $(Q_n^{(k)}(z))^*$ is found.
	\begin{theorem}\label{rrrr}
		For each $k\in\mathbb{Z}$ and $n\in\mathbb{N}$, there exists a three-term recurrence relation
		\begin{align}\label{rrr}
			z\left(
			(Q_{n+1}^{(k)}(z))^*
			+(Q_n^{(k)}(z))^*\psi_n^{(k)}
			\right)=(Q_n^{(k)}(z))^*+(Q_{n-1}^{(k)}(z))^*\left(
			\psi_{n-1}^{(k)}-\varphi_{n-1}^{(k-1)}
			\right).
		\end{align}
	\end{theorem}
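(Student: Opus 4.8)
The plan is to derive the three-term recurrence \eqref{rrr} purely by composing the Christoffel transformation \eqref{ct} and the Geronimus transformation \eqref{gt} with suitably shifted indices, introducing the intermediate adjacent family $\{(Q_n^{(k-1)}(z))^*\}$ and then eliminating it. No fresh quasi-determinant computation should be required; the whole theorem should fall out of bookkeeping with the two established spectral transformations.

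First I would notice that the left-hand side of \eqref{rrr} is exactly $z$ times the right-hand side of the Geronimus relation \eqref{gt} read at level $n$. Since $z$ is a central spectral parameter commuting with every coefficient, multiplying \eqref{gt} on the left by $z$ gives
\begin{align*}
	z\left((Q_{n+1}^{(k)}(z))^*+(Q_n^{(k)}(z))^*\psi_n^{(k)}\right)=(Q_n^{(k-1)}(z))^*.
\end{align*}
Hence the entire content of the theorem reduces to re-expanding $(Q_n^{(k-1)}(z))^*$ inside the level-$k$ family, i.e.\ to proving that it equals $(Q_n^{(k)}(z))^*+(Q_{n-1}^{(k)}(z))^*\bigl(\psi_{n-1}^{(k)}-\varphi_{n-1}^{(k-1)}\bigr)$.

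Next I would obtain this re-expansion in two substitutions. Applying \eqref{gt} with $n$ replaced by $n-1$ yields $(Q_{n-1}^{(k-1)}(z))^*=z\bigl((Q_n^{(k)}(z))^*+(Q_{n-1}^{(k)}(z))^*\psi_{n-1}^{(k)}\bigr)$, and applying the Christoffel transformation \eqref{ct} with $k\mapsto k-1$ and $n\mapsto n-1$ gives
\begin{align*}
	(Q_n^{(k-1)}(z))^*=z^{-1}(Q_{n-1}^{(k-1)}(z))^*-(Q_{n-1}^{(k)}(z))^*\varphi_{n-1}^{(k-1)}.
\end{align*}
Substituting the former into the latter cancels $z^{-1}z=1$ and collapses everything to $(Q_n^{(k)}(z))^*+(Q_{n-1}^{(k)}(z))^*\psi_{n-1}^{(k)}-(Q_{n-1}^{(k)}(z))^*\varphi_{n-1}^{(k-1)}$, which is precisely the desired right-hand side, completing the argument.

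I do not anticipate a genuine analytical obstacle, as the computation is essentially telescoping. The only delicate point is the index matching: Geronimus must be used once at level $n$ (to simplify the left-hand side) and once at level $n-1$ (to introduce $(Q_{n-1}^{(k-1)}(z))^*$), while Christoffel must be invoked with \emph{both} indices lowered so that its coefficient is exactly $\varphi_{n-1}^{(k-1)}$. Because $\mr$ is non-commutative, I would keep all coefficients $\psi$ and $\varphi$ strictly on the right of the polynomials throughout, relying on the centrality of $z$ to justify commuting it past these coefficients.
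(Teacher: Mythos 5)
Your proposal is correct and is exactly the argument the paper intends: the paper states Theorem \ref{rrrr} follows ``from the Christoffel transformation \eqref{ct} and Geronimus transformations \eqref{gt}'' without writing out the details, and your bookkeeping — Geronimus at levels $n$ and $n-1$ combined with Christoffel shifted to $(n-1,k-1)$, using centrality of $z$ — supplies precisely those details with the correct index matching.
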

	If we denote $\xi_i^{(k)}=\psi_i^{(k)}-\varphi_i^{(k-1)}$, and $R^{(k)}=((Q_0^{(k)}(z))^*,(Q_1^{(k)}(z))^*,\cdots)$, then \eqref{rrr} could be written in terms of 
	\begin{align}\label{rrmatrix}
		z^{-1}R^{(k)}=R^{(k)}(I+\Xi^{(k)}\Lambda)(\Psi^{(k)}+\Lambda^{-1}):=R^{(k)}X^{(k)}
	\end{align}
	where $\Xi^{(k)}=\text{diag}(\xi_0^{(k)},\xi_1^{(k)},\cdots)$, $\Psi^{(k)}=(\psi_0^{(k)},\psi_1^{(k)},\cdots)$ and $\Lambda$ represents the shift operator. On the other hand, the Geronimus transformation \eqref{gt} could be written in a matrix form 
	\begin{align}\label{gtmatrix}
		z^{-1}R^{(k-1)}=R^{(k)}Y^{(k)}, \quad Y^{(k)}=(\Psi^{(k)}+\Lambda^{-1}).
	\end{align}
	Therefore, the compatibility condition of \eqref{rrmatrix} and \eqref{gtmatrix} results in the discrete Lax equation 
	\begin{align*}
		X^{(k)}Y^{(k)}=Y^{(k)}X^{(k-1)},
	\end{align*}	
	which is indeed the non-commutative relativistic Toda equation.
	These results are summarized into the following theorem.
	\begin{theorem}
		The non-commutative relativistic Toda equation admits the form
		\begin{align*}
			&(\psi_i^{(k+1)}-\varphi_i^{(k)})\psi_{i+1}^{(k)}=\psi_i^{(k+1)}(\psi_i^{(k)}-\varphi_i^{(k-1)}),\\
			&\psi_i^{(k)}-\varphi_i^{(k)}=\psi_{i-1}^{(k)}-\varphi_{i-1}^{(k-1)},
		\end{align*}
		where $\varphi_i^{(k)}$ and $\psi_i^{(k)}$ have the quasi-determinant formulas \eqref{varphi} and \eqref{psi}.
	\end{theorem}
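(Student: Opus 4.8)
The plan is to extract the two stated relations directly from the matrix zero-curvature equation $X^{(k)}Y^{(k)}=Y^{(k)}X^{(k-1)}$ obtained above, rather than working from the quasi-determinant formulas \eqref{varphi} and \eqref{psi}. First I would record the banded shape of the two factors. Setting $\Xi^{(k)}=\text{diag}(\xi_0^{(k)},\xi_1^{(k)},\dots)$ with $\xi_i^{(k)}=\psi_i^{(k)}-\varphi_i^{(k-1)}$ and $\Psi^{(k)}=\text{diag}(\psi_0^{(k)},\psi_1^{(k)},\dots)$, the factor $Y^{(k)}=\Psi^{(k)}+\Lambda^{-1}$ is lower bidiagonal, while $X^{(k)}=(I+\Xi^{(k)}\Lambda)(\Psi^{(k)}+\Lambda^{-1})$, being the product of an upper and a lower bidiagonal matrix, is tridiagonal, with diagonal entries built from $\psi_i^{(k)}$ and $\xi_i^{(k)}$ and off-diagonal entries $\xi_{i}^{(k)}\psi_{i+1}^{(k)}$ and $1$.

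Next I would multiply out both sides as infinite banded matrices. Only the diagonal matrices $\Psi^{(\cdot)},\Xi^{(\cdot)}$ and the shifts $\Lambda^{\pm1}$ appear, so every product can be brought to the normal form $\sum_p D_p\Lambda^p$ by repeatedly using the commutation rule $\Lambda\,\text{diag}(d_i)=\text{diag}(d_{i+1})\,\Lambda$ together with its inverse. Since $X^{(\cdot)}$ is tridiagonal and $Y^{(k)}$ bidiagonal, the index $p$ runs only through $\{1,0,-1,-2\}$, so the single matrix identity is equivalent to four $\mathcal{R}$-valued identities, one per band, read off by equating the diagonal coefficients $D_p$ on the two sides.

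I would then evaluate the four bands. The extreme band $p=-2$ is the trivial identity $1=1$. One off-diagonal band equates two products and, after substituting $\xi_i^{(k)}=\psi_i^{(k)}-\varphi_i^{(k-1)}$ and shifting the free index, reduces to the multiplicative relation $(\psi_i^{(k+1)}-\varphi_i^{(k)})\psi_{i+1}^{(k)}=\psi_i^{(k+1)}(\psi_i^{(k)}-\varphi_i^{(k-1)})$; the other off-diagonal band is a difference identity that rearranges to the additive relation $\psi_i^{(k)}-\varphi_i^{(k)}=\psi_{i-1}^{(k)}-\varphi_{i-1}^{(k-1)}$, while the diagonal band $p=0$ is then an algebraic consequence of the two. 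Substituting the quasi-determinant formulas \eqref{varphi} and \eqref{psi} for $\varphi_i^{(k)}$ and $\psi_i^{(k)}$ yields the claimed forms.

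I expect the only real difficulty to be the non-commutative bookkeeping. Because $\mathcal{R}$ is noncommutative one must preserve the left/right order of every factor while pushing the shifts through the diagonal matrices, and one must track the simultaneous index shifts $i\mapsto i\pm1$ and $k\mapsto k\pm1$ so that the raw band identities align with the stated normal forms; no genuinely new idea beyond the zero-curvature equation is needed. As an independent check one could instead verify the two relations directly from \eqref{varphi} and \eqref{psi} using the non-commutative Jacobi identity \eqref{ncj1} and the homological relations \eqref{hm1}--\eqref{hm2}, the same tools already used for the Christoffel transformation \eqref{ct}.
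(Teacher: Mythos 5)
Your overall strategy --- reading the theorem off band-by-band from a zero-curvature equation --- is indeed what the paper intends, but the step where you ``evaluate the four bands'' would fail, because you take the formula $X^{(k)}=(I+\Xi^{(k)}\Lambda)(\Psi^{(k)}+\Lambda^{-1})$ literally as a tridiagonal matrix. The three-term recurrence \eqref{rrr} in matrix form is actually $z^{-1}R^{(k)}(I+\Xi^{(k)}\Lambda)=R^{(k)}(\Psi^{(k)}+\Lambda^{-1})$, so the operator representing multiplication by $z^{-1}$ on $R^{(k)}$ is $(\Psi^{(k)}+\Lambda^{-1})(I+\Xi^{(k)}\Lambda)^{-1}$, which is not tridiagonal (the paper's \eqref{rrmatrix} is imprecise on this point, and your proposal inherits the imprecision). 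If you nevertheless expand $X^{(k)}Y^{(k)}=Y^{(k)}X^{(k-1)}$ with the tridiagonal $X^{(k)}=W^{(k)}Y^{(k)}$, where $W^{(k)}:=I+\Xi^{(k)}\Lambda$, you are expanding $W^{(k)}\bigl(Y^{(k)}\bigr)^{2}=Y^{(k)}W^{(k-1)}Y^{(k-1)}$, and its bands are not the stated relations: the top band reads $\xi_i^{(k)}\bigl(\psi_{i+1}^{(k)}\bigr)^{2}=\psi_i^{(k)}\xi_i^{(k-1)}\psi_{i+1}^{(k-1)}$ and the subdiagonal band reads $\psi_{i-1}^{(k)}+\xi_i^{(k)}=\psi_{i-1}^{(k-1)}+\xi_{i-1}^{(k-1)}$. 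Neither rearranges into the theorem's equations; worse, combined with the theorem's multiplicative relation the top band would force the spurious constraint $\bigl(\psi_{i+1}^{(k)}\bigr)^{2}=\bigl(\psi_{i+1}^{(k-1)}\bigr)^{2}$. So no amount of non-commutative bookkeeping makes your four band identities ``reduce'' to the claimed forms.

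The repair is to set up the compatibility one step earlier. Subtracting the Christoffel transformation \eqref{ct}, taken at level $k-1$, from the Geronimus transformation \eqref{gt} gives the lattice relation $(Q_{n+1}^{(k-1)}(z))^*=(Q_{n+1}^{(k)}(z))^*+(Q_n^{(k)}(z))^*\xi_n^{(k)}$, i.e.\ $R^{(k-1)}=R^{(k)}W^{(k)}$. Computing $z^{-1}R^{(k-2)}$ in the two possible orders (lattice relation then Geronimus, versus Geronimus then lattice relation) and using the linear independence of the $(Q_j^{(k)}(z))^*$ yields the genuine zero-curvature equation
\begin{align*}
\bigl(\Psi^{(k)}+\Lambda^{-1}\bigr)\bigl(I+\Xi^{(k-1)}\Lambda\bigr)
=\bigl(I+\Xi^{(k)}\Lambda\bigr)\bigl(\Psi^{(k-1)}+\Lambda^{-1}\bigr),
\end{align*}
which has only three bands. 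Its superdiagonal band is $\psi_i^{(k)}\xi_i^{(k-1)}=\xi_i^{(k)}\psi_{i+1}^{(k-1)}$, its diagonal band is $\psi_i^{(k)}+\xi_{i-1}^{(k-1)}=\psi_i^{(k-1)}+\xi_i^{(k)}$, and its subdiagonal band is the trivial $1=1$; after the shift $k\to k+1$ and the substitution $\xi_i^{(k)}=\psi_i^{(k)}-\varphi_i^{(k-1)}$, these are precisely the theorem's multiplicative and additive relations with $\varphi_i^{(k)}$, $\psi_i^{(k)}$ given by \eqref{varphi} and \eqref{psi}. Note in particular that the diagonal band \emph{is} one of the two asserted equations, not, as you claim, an algebraic consequence of the off-diagonal ones.
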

	This equation is related to the non-commutative leapfrog map \eqref{leap_a}, \eqref{leap_b} through 
	\begin{align*}
	a_i=\psi_i^{(k)},\quad
	b_i=\varphi_i^{(k-1)}-\psi_i^{(k)},
	\end{align*}
	and 
	\begin{align*}
		a_i^+=\psi_i^{(k-1)},\quad
		b_i^+=\varphi_i^{(k-2)}-\psi_i^{(k-1)}.
	\end{align*}
	
	\subsection{Continuous time evolution and non-commutative semi-discrete relativistic Toda equation}
	In this part, we assume that the skew field $\mr$ is dependent with a continuous time variable $t$ whose derivative commutes with involution $*$ over $\mr$. In other words, if $f(t)\in \mr(t)$, then 
	\begin{align*}
		\frac{d}{dt}((f(t))^*)=\left(
		\frac{d}{dt}f(t)
		\right)^*.
	\end{align*}
	Let's consider a time-deformed inner product $\langle\cdot,\cdot\rangle_t$: $\mr(t)[[z]]\times \mr(t)[[z]]\to \mr(t)$ such that
	\begin{align*}
		\frac{d}{dt}\langle \sum_i a_iz^i,\sum_j b_jz^j\rangle_t=\frac{d}{dt}(\sum_{i,j}a_im_{i-j}b_j^*)=\sum_{i,j}\left(
		\frac{da_i}{dt}m_{i-j}b_j^*+a_i\frac{dm_{i-j}}{dt}b_j^*+a_im_{i-j}\frac{db_j^*}{dt}
		\right).
	\end{align*}
	Then time-dependent non-commutative Laurent bi-orthogonal polynomials can be defined by the bi-orthogonal relation
	\begin{align}\label{tip}
		\langle P_n(z;t),Q_m(z;t)\rangle_t=H_n(t)\delta_{n,m},
	\end{align}
	for invertible $H_n(t)$ for all $t\in\mathbb{R}$.

	Following Theorem \ref{rrrr}, we know that three-term recurrence relation 
	\begin{align}\label{newtr}
		z^{-1}\big(
		(Q_n(z))^*+(Q_{n-1}(z))^*\xi_{n-1}
		\big)=(Q_{n+1}(z))^*+(Q_n(z))^*\psi_n
	\end{align}
	is valid for $(Q_n(z))^*$ with recurrence coefficients $\xi_{n-1}$ and $\psi_n$. In this part, we show that under continuous time evolution, $\xi_{n-1}$ and $\psi_n$ should satisfy non-commutative semi-discrete relativistic Toda equations, as a continuum limit of the non-commutative leapfrog map.
	Usually, there are two different evolutions with respect to semi-discrete relativistic Toda equations, and we discuss them separately.

	\subsubsection{Negative flow}
	
	In negative flow case, we require that 
	\begin{align}\label{nef}
		\frac{d}{dt}m_{i-j}=m_{i-j-1}.
	\end{align}
	In other words, if $f(z),g(z)\in \mr[[z]]$, then 
	\begin{align*}
		\frac{d}{dt}\langle f(z),g(z)\rangle_t=\langle z^{-1}f(z),g(z)\rangle_t=\langle f(z),zg(z)\rangle_t.
	\end{align*}
	\begin{proposition}
		For time-dependent non-commutative Laurent bi-orthogonal polynomial $(Q_n(z;t))^*$, there holds 
		\begin{align}\label{tw2}
			\p_t (Q_n(z;t))^*=\left(
			z^{-1}(Q_{n-1}(z;t))^*-(Q_n(z;t))^*
			\right)\xi_{n-1},
		\end{align}
		where $\xi_{n-1}$ is given in \eqref{newtr}.
	\end{proposition}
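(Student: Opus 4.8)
The plan is to expand $\p_t(Q_n(z;t))^*$ in the orthogonal basis, pin down its coefficients by testing against the dual family $\{P_m\}$, and then recognize the resulting combination as the right-hand side of \eqref{tw2} by feeding in the three-term recurrence \eqref{newtr}. First I would record the structural facts. Since $Q_n(z;t)$ is monic of degree $n$, the polynomial $(Q_n(z;t))^*\in\mr^*[z^{-1}]$ is monic in $z^{-n}$ with constant leading coefficient; differentiating in $t$ kills that leading term, so $\p_t(Q_n)^*$ has degree at most $n-1$ in $z^{-1}$ and hence lies in the right-span of $(Q_0)^*,\dots,(Q_{n-1})^*$. Writing $\p_t(Q_n)^*=\sum_{m=0}^{n-1}(Q_m)^*c_{n,m}$ and applying the involution gives the equivalent relation $\p_t Q_n=\sum_{m=0}^{n-1}c_{n,m}^*Q_m$ among genuine polynomials of degree $\le n-1$. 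Using the bi-orthogonality \eqref{tip} and property (2) of the inner product, pairing on the left with $P_m$ isolates $c_{n,m}=H_m^{-1}\langle P_m,\p_t Q_n\rangle_t$.

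Next I would compute $\langle P_m,\p_t Q_n\rangle_t$ by the product rule. For $m\le n-1$ we have $\langle P_m,Q_n\rangle_t\equiv 0$, and differentiating this identity yields $0=\langle\p_t P_m,Q_n\rangle_t+\langle z^{-1}P_m,Q_n\rangle_t+\langle P_m,\p_t Q_n\rangle_t$, where the middle term is exactly the contribution of the moment derivative under the negative-flow rule \eqref{nef}. The term $\langle\p_t P_m,Q_n\rangle_t$ vanishes because $\p_t P_m$ has degree $\le m-1\le n-1$ while $\langle z^i,Q_n\rangle_t=0$ for $0\le i\le n-1$. Hence $\langle P_m,\p_t Q_n\rangle_t=-\langle z^{-1}P_m,Q_n\rangle_t$, and since $z^{-1}P_m=p_{m,0}z^{-1}+(\text{terms }z^i,\,0\le i\le m-1)$, the same orthogonality leaves only $\langle z^{-1}P_m,Q_n\rangle_t=p_{m,0}\langle z^{-1},Q_n\rangle_t$, with $p_{m,0}$ the constant term of $P_m$.

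Running the identical computation on the candidate right-hand side $(z^{-1}(Q_{n-1})^*-(Q_n)^*)\xi_{n-1}$, and using $z^{-1}(Q_{n-1})^*=(zQ_{n-1})^*$ together with property (3), its $P_m$-component equals $p_{m,0}\langle z^{-1},Q_{n-1}\rangle_t\xi_{n-1}$. Matching the two expansions therefore collapses to the single scalar identity $\langle z^{-1},Q_n\rangle_t=-\langle z^{-1},Q_{n-1}\rangle_t\xi_{n-1}$. To close the argument I would feed the recurrence \eqref{newtr} into this: applying the involution turns it into $z(Q_n+\xi_{n-1}^*Q_{n-1})=Q_{n+1}+\psi_n^*Q_n$, and pairing $\langle 1,\cdot\rangle_t$ on the left makes the right side vanish, since $\langle 1,Q_{n+1}\rangle_t=\langle 1,Q_n\rangle_t=0$, while the left side is $\langle z^{-1},Q_n\rangle_t+\langle z^{-1},Q_{n-1}\rangle_t\xi_{n-1}$, which is precisely the required identity. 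The main obstacle, and the step demanding the most care, is bookkeeping the anti-involution and the left/right placement of coefficients throughout, so that property (2) is applied on the correct side and $\xi_{n-1}$ ends up multiplying from the right exactly as in \eqref{tw2}; the degree counting and the orthogonality inputs are routine once these ordering conventions are fixed.
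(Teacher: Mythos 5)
Your proof is correct, but it is organized genuinely differently from the paper's. The paper differentiates the orthogonality relations $\langle z^i,Q_n(z;t)\rangle_t=0$, $0\le i\le n-1$, immediately substitutes the involuted recurrence \eqref{newtr} to replace $zQ_n$, and observes that the resulting polynomial
\begin{align*}
\tilde{Q}_n(z;t)=\p_t Q_n(z;t)-z\,\xi_{n-1}^*Q_{n-1}(z;t)+\xi_{n-1}^*Q_n(z;t)
\end{align*}
has degree at most $n-1$ and is orthogonal to $1,z,\dots,z^{n-1}$, hence vanishes identically; applying the involution then gives \eqref{tw2}. You instead expand $\p_t(Q_n)^*$ in the right basis $\{(Q_m)^*\}_{m\le n-1}$, compute the coefficients by pairing with the dual family $\{P_m\}$, and reduce the entire statement to the single scalar identity $\langle z^{-1},Q_n\rangle_t=-\langle z^{-1},Q_{n-1}\rangle_t\,\xi_{n-1}$, which you then extract from the recurrence via the pairing $\langle 1,\cdot\rangle_t$. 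So the recurrence enters the paper's argument early, at the level of polynomials, while in yours it enters only at the very end through one scalar pairing; your route costs more bookkeeping (the dual family, the constant terms $p_{m,0}$) but makes explicit both the Fourier coefficients $c_{n,m}=-H_m^{-1}p_{m,0}\langle z^{-1},Q_n\rangle_t$ and the precise input needed from \eqref{newtr}, which the paper's vanishing-polynomial argument leaves implicit. One step you should state explicitly: to conclude equality from matching components against $P_0,\dots,P_{n-1}$ alone, the right-hand side of \eqref{tw2} must also lie in the right span of $(Q_0)^*,\dots,(Q_{n-1})^*$; this is immediate because $z^{-1}(Q_{n-1})^*$ and $(Q_n)^*$ are both monic with leading term $z^{-n}$, so their difference has degree at most $n-1$ in $z^{-1}$, but it is needed for the coefficient-matching logic to close. (Incidentally, your write-up also resolves a misprint in the paper's proof, where $\xi_{n-1}$ appears as $a_{n-1}$.)
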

	\begin{proof}
		This proof is based on the orthogonal relation 
		\begin{align*}
			\langle z^i, Q_n(z;t)\rangle_t=0,\quad 0\leq i\leq n-1.		\end{align*}
		Taking the derivative with respect to $t$ and a substitution of \eqref{newtr} results in
		\begin{align*}
			&\langle z^i, \p_t Q_n(z;t)\rangle_t+\langle z^i, zQ_n(z;t)\rangle_t\\
			&\quad=\langle z^i,\p_t Q_n(z;t)+Q_{n+1}(z;t)+\psi_n^*Q_n(z;t)-za_{n-1}^* Q_{n-1}(z;t)\rangle_t=0.
		\end{align*}
		From the three-term recurrence relation \eqref{newtr}, this equation could be further reduced to
		\begin{align*}
			\langle z^i, \p_t Q_n(z;t)-za_{n-1}^*Q_{n-1}(z;t)+a_{n-1}^* Q_{n}(z;t)\rangle_t=0,\quad 0\leq i\leq n-1,
		\end{align*}
		where degree of the polynomial 
		\begin{align*}
		\tilde{Q}_n(z;t):=\p_t Q_n(z;t)-za_{n-1}^*Q_{n-1}(z;t)+a_{n-1}^* Q_{n}(z;t)
		\end{align*} is less than $n$. Moreover, if we expand $\tilde{Q}_n(z;t)$ as a right linear combination of $\{Q_j(z;t)\}_{j=0}^{n-1}$, then the coefficients are zeros by the orthogonality. Therefore, $\tilde{Q}_n(z;t)=0$ and \eqref{tw2} is verified.
	\end{proof}
	
	\begin{remark}
		In fact, time evolution \eqref{tw2} could be alternatively written as
		\begin{align}\label{tw}
			\p_t (Q_{n+1}(z;t))^*+\p_t (Q_{n}(z;t))^*\xi_{n}=(Q_{n}(z;t))^*\zeta_{n},
		\end{align}
		where $\zeta_{n}=(\psi_n-\xi_{n-1})\xi_n$, which facilitate the formulation of Lax integrability.
	\end{remark}
	
	By denoting $\Psi=\text{diag}(\psi_0,\psi_1,\cdots)$, $\Xi=\text{diag}(\xi_0,\xi_1,\cdots)$ and $Z=(\zeta_0,\zeta_1,\cdots)$, then \eqref{newtr} and \eqref{tw} could be written in terms of matrix form
	\begin{subequations}
		\begin{align}
			z^{-1}R\mathbb{A}=R\mathbb{B}, \label{w1}\\
			\p_t R \mathbb{A}=R\mathbb{C},\label{w2}
		\end{align}	
	\end{subequations}
	where $R=((Q_0(z;t))^*,(Q_1(z;t))^*,\cdots)$ and 
	\begin{align*}
		\mathbb{A}=(I+\Xi\Lambda),\quad \mathbb{B}=\Psi+\Lambda^{-1},\quad \mathbb{C}=Z\Lambda.
	\end{align*}
	\begin{theorem}
		The compatible condition of \eqref{w1} and \eqref{w2} gives to the non-commutative semi-discrete relativistic Toda equation in negative flow and
		\begin{align}\label{ncrt}
			\p_t (\mathbb{B}\mathbb{A}^{-1})=[\mathbb{B}\mathbb{A}^{-1},\mathbb{C}\mathbb{A}^{-1}].
		\end{align}
	\end{theorem}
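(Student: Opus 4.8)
The plan is to treat \eqref{w1} and \eqref{w2} as a compatible linear system for the row vector $R=((Q_0(z;t))^*,(Q_1(z;t))^*,\dots)$ and to extract the zero-curvature relation by cross-differentiation, as in the standard Lax-pair argument, while keeping careful track of the left-to-right ordering of factors since $\mr$ is noncommutative. The only genuinely structural input needed will be the invertibility of $\mathbb{A}$ and the linear independence of the orthogonal polynomials; everything else is a formal manipulation.

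First I would note that $\mathbb{A}=I+\Xi\Lambda$ is a unit-diagonal triangular perturbation of the identity, hence invertible as a formal operator. Setting $L:=\mathbb{B}\mathbb{A}^{-1}$ and $M:=\mathbb{C}\mathbb{A}^{-1}$ and multiplying \eqref{w1} and \eqref{w2} on the right by $\mathbb{A}^{-1}$ recasts the system as
\begin{align*}
	z^{-1}R=RL,\qquad \p_t R=RM.
\end{align*}
It is essential here that $z$ is the central, time-independent spectral parameter (commuting with all of $\mr$), so that $z^{-1}$ may be pulled freely through both $\p_t$ and $R$.

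Next I would cross-differentiate the first relation. Applying $\p_t$ to $z^{-1}R=RL$ and using $\p_t(z^{-1}R)=z^{-1}\p_t R$, the left-hand side becomes $z^{-1}RM=(z^{-1}R)M=RLM$, where the centrality of $z^{-1}$ together with $z^{-1}R=RL$ is used; the Leibniz rule gives the right-hand side as $(\p_t R)L+R\,\p_t L=RML+R\,\p_t L$. Equating and subtracting yields
\begin{align*}
	R\big(\p_t L-LM+ML\big)=0.
\end{align*}

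The main point to justify, and the one requiring the orthogonal-polynomial structure, is the left-cancellation of $R$. By the quasi-determinant formula \eqref{form}, each $(Q_n(z;t))^*$ is a polynomial in $z^{-1}$ of exact degree $n$ with leading coefficient the unit $1\in\mr$, so the family $\{(Q_n(z;t))^*\}_{n\ge 0}$ is linearly independent under right multiplication by elements of $\mr$. Reading off coefficients of powers of $z^{-1}$ and inducting on the degree, $R\,N=0$ therefore forces every entry of $N$ to vanish, giving $\p_t L-LM+ML=0$, that is
\begin{align*}
	\p_t(\mathbb{B}\mathbb{A}^{-1})=\big[\mathbb{B}\mathbb{A}^{-1},\,\mathbb{C}\mathbb{A}^{-1}\big],
\end{align*}
which is precisely \eqref{ncrt}. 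I expect no further obstacle: the computation is the noncommutative analogue of the usual zero-curvature derivation, and the only care required is preserving the ordering of products and invoking the triangular/degree argument that licenses cancelling $R$ on the left.
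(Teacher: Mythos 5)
Your proof is correct and follows exactly the route the paper intends: the paper states this theorem without any written proof, treating it as the standard zero-curvature computation obtained by cross-differentiating the linear system \eqref{w1}--\eqref{w2}, which is precisely what you carry out (with the orderings handled correctly). Your only addition is the explicit justification for cancelling $R$ on the left, via the fact that each $(Q_n(z;t))^*$ is monic of exact degree $n$ in $z^{-1}$ so that a downward induction on coefficients kills every entry of the matrix $\p_t L - LM + ML$; this is a step the paper leaves implicit, and your treatment of it is sound.
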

	To express this matrix equation into a scalar form, one obtains
	\begin{align*}
		&\p_t \Psi-\Lambda^{-1}\p_t \Xi\Lambda=\Lambda^{-1}Z\Lambda-Z,\\
		&\p_t \Xi\Lambda=-Z\Lambda+(\Lambda^{-1}\Xi\Lambda-\Psi)^{-1}Z\Lambda(\Xi-\Psi),
	\end{align*}	
	which is equal to
	\begin{align*}	
		\p_t \xi_n&=-(\psi_n-\xi_{n-1})\xi_n-\xi_n(\xi_{n+1}-\psi_{n+1}),\\
		\p_t \psi_n&=\xi_{n-1}\psi_n-\psi_n\xi_n.
	\end{align*}
	The semi-discrete non-commutative relativistic Toda equation has been appeared in \cite{casati21,gehktman11} with Hamiltonian structure and recursion operator.

	\subsubsection{Positive flow} 
	With positive flow, we require that
	\begin{align*}
		\frac{d}{dt}m_{i-j}=m_{i-j+1}.
	\end{align*}
For any $f(z),g(z)\in \mr[[z]]$, let's define
	\begin{align*}
		\frac{d}{dt}\langle f(z),g(z)\rangle_t=\langle zf(z),g(z)\rangle_t=\langle f(z),z^{-1}g(z)\rangle_t.
	\end{align*}
	Under such an assumption, we have the following evolution formula for $(Q_n(z;t))^*$.

	\begin{proposition}
		The evolution of $(Q_n(z;t))^*$ in the positive flow satisfies
		\begin{align}\label{te22}
			\p_t (Q_n(z;t))^*=(Q_{n-1}(z;t))^*\eta_{n},\quad \eta_n=-\xi_{n-1}\psi_{n}^{-1},
		\end{align}
		where $\xi_n$ and $\psi_n$ are coefficients in three-term recurrence relation \eqref{newtr}.
	\end{proposition}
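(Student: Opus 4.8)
The plan is to differentiate the bi-orthogonality relation defining $Q_n(z;t)$ and read off the structure of $\p_t Q_n$, then apply the involution at the end; this is the positive-flow analogue of the preceding proposition. Concretely, the relation \eqref{tip} is equivalent to $\langle z^i,Q_n(z;t)\rangle_t=0$ for $0\le i\le n-1$ together with $\langle z^n,Q_n(z;t)\rangle_t=H_n$. Differentiating $\langle z^i,Q_n\rangle_t=0$ in $t$ and using the positive-flow rule $\frac{d}{dt}\langle f,g\rangle_t=\langle zf,g\rangle_t$ gives $\langle z^i,\p_t Q_n\rangle_t+\langle z^{i+1},Q_n\rangle_t=0$. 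For $0\le i\le n-2$ the second term vanishes, so $\langle z^i,\p_t Q_n\rangle_t=0$, while $i=n-1$ produces the single non-trivial constraint $\langle z^{n-1},\p_t Q_n\rangle_t=-H_n$.

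Since $Q_n$ is monic with constant leading coefficient, $\p_t Q_n$ has degree at most $n-1$, and I would expand it as $\p_t Q_n=\sum_{j=0}^{n-1}\gamma_j Q_j$, a valid (unique) expansion because the $Q_j$ are monic of distinct degrees. Pairing on the left with $P_m$ and using $\langle P_m,\gamma_j Q_j\rangle_t=\langle P_m,Q_j\rangle_t\gamma_j^*=H_m\delta_{mj}\gamma_j^*$ together with $\langle P_m,\p_t Q_n\rangle_t=0$ for $m\le n-2$ (which holds since $P_m$ is a left-linear combination of $z^0,\dots,z^m$) forces $\gamma_m=0$ for all $m\le n-2$. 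Hence $\p_t Q_n=\gamma_{n-1}Q_{n-1}$. Because $\p_t$ commutes with $*$, applying the involution yields $\p_t (Q_n)^*=(Q_{n-1})^*\gamma_{n-1}^*$, which already has the advertised form with $\eta_n=\gamma_{n-1}^*$. The coefficient is then pinned down by the remaining constraint: $\langle z^{n-1},\gamma_{n-1}Q_{n-1}\rangle_t=H_{n-1}\gamma_{n-1}^*=-H_n$, so $\eta_n=\gamma_{n-1}^*=-H_{n-1}^{-1}H_n$.

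It then remains to identify $-H_{n-1}^{-1}H_n$ with $-\xi_{n-1}\psi_n^{-1}$, i.e.\ to prove $H_{n-1}\xi_{n-1}=H_n\psi_n$; this matching of the analytically derived coefficient with the recurrence coefficients is the step that needs the most care. I would extract it from \eqref{newtr} itself: applying $*$ turns \eqref{newtr} into the monic three-term recurrence $z(Q_n+\xi_{n-1}^*Q_{n-1})=Q_{n+1}+\psi_n^*Q_n$, and pairing this with $z^n$ on the left, via $\langle z^n,zg\rangle_t=\langle z^{n-1},g\rangle_t$ and the orthogonality values $\langle z^{n-1},Q_n\rangle_t=0$, $\langle z^{n-1},Q_{n-1}\rangle_t=H_{n-1}$, $\langle z^n,Q_{n+1}\rangle_t=0$, $\langle z^n,Q_n\rangle_t=H_n$, collapses both sides to $H_{n-1}\xi_{n-1}=H_n\psi_n$. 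The main obstacle throughout is the non-commutative bookkeeping: keeping every scalar on its correct side through the involution (which reverses products and sends $z\mapsto z^{-1}$) and through the two one-sided linearity rules of $\langle\cdot,\cdot\rangle_t$, so that the surviving factors emerge as $\xi_{n-1}$ and $\psi_n$ on the right rather than as their conjugates. Once $H_{n-1}\xi_{n-1}=H_n\psi_n$ is in hand, $\xi_{n-1}\psi_n^{-1}=H_{n-1}^{-1}H_n$ and hence $\eta_n=-\xi_{n-1}\psi_n^{-1}$ follow immediately.
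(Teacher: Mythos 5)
Your proposal is correct and follows essentially the same route as the paper: differentiate the orthogonality relations under the positive flow to get $\langle z^i,\p_t Q_n\rangle_t+\langle z^{i+1},Q_n\rangle_t=0$, conclude that $\p_t Q_n$ is a multiple of $Q_{n-1}$ with coefficient determined by $-H_{n-1}^{-1}H_n$, and then use the recurrence \eqref{newtr} together with orthogonality to identify $H_{n-1}^{-1}H_n=\xi_{n-1}\psi_n^{-1}$. The only differences are cosmetic: you obtain the key identity $H_{n-1}\xi_{n-1}=H_n\psi_n$ by pairing the starred recurrence against $z^n$, whereas the paper tests $z^{-1}Q_n$ against $z^i$ for $i\le n-2$ (getting the index-shifted $H_{n-1}\psi_{n-1}=H_{n-2}\xi_{n-2}$), and your sign bookkeeping lands exactly on the proposition as stated, while the paper's written proof contains a harmless sign slip in displaying $\p_t(Q_n(z;t))^*=-(Q_{n-1}(z;t))^*\eta_n$ with $\eta_n=-H_{n-1}^{-1}H_n$.
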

	\begin{proof}
		By taking the derivative to the equation
		\begin{align*}
			\langle z^i, Q_n(z;t)\rangle_t=0,\quad 0\leq i\leq n-1,
		\end{align*}
		we have
		\begin{align}\label{proppr}
			\langle z^i, \p_t Q_n(z;t)\rangle_t+\langle z^{i+1},Q_n(z;t)\rangle_t=0,\quad 0\leq i\leq n-1.
		\end{align}
		There are two explanations for this formula. On one hand, this formula means that
		\begin{align*}
			\langle z^{i+1},Q_n(z;t)\rangle_t=-\langle z^i,\p_tQ_n(z;t)\rangle_t,\quad 0\leq i\leq n-1.
		\end{align*}
		Since $\p_t (Q_n(z;t))^*$ could be expressed as a right linear combination of $(Q_0(z;t))^*,\,\cdots,\,(Q_{n-1}(z;t))^*$, and both sides are equal to zero when $i=0,\cdots,n-2$, it indicates that 
		\begin{align*}
			\p_t (Q_n(z;t))^*=-(Q_{n-1}(z;t))^*\eta_n, \quad \eta_n=-H_{n-1}^{-1}H_n,
		\end{align*}
		where $H_n$ is the normalization factor defined by \eqref{tip}.
		On the other hand, by taking the recurrence relation \eqref{newtr} into consideration, the left hand side of equation \eqref{proppr} implies that
		\begin{align*}
			\langle z^{i+1},Q_n(z;t)\rangle_t=\langle z^i,z^{-1}Q_n(z;t)\rangle_t=\langle z^i, Q_{n-1}(z;t)+\xi_{n-2}^*Q_{n-2}(z;t)-z^{-1}\psi_n^*Q_{n-1}(z;t)\rangle_t,
		\end{align*}
		and thus we get the relation
		\begin{align*}
			\langle z^{i+1},\psi_n^*Q_{n-1}(z;t)\rangle=\langle z^i,\p_tQ_n(z;t)+Q_{n-1}(z;t)+\xi_{n-2}^* Q_{n-2}(z;t)\rangle_t.
		\end{align*}
		If we take $i$ from $0$ to $n-2$, we immediately get that $H_{n-1}\psi_{n-1}=H_{n-2}\xi_{n-2}$, which means that $\eta_n=-\xi_{n-1}\psi_{n}^{-1}$. 
	\end{proof}	
	If we denote $D=\text{diag}(\eta_0,\eta_1,\cdots)$, then time evolution with respect to positive flow could be written into matrix form
	\begin{align*}
		\p_t R=R\mathbb{D},\quad \mathbb{D}=\Lambda D.
	\end{align*}
	This time evolution together with its spectral part \eqref{w1} result in the Lax integrability 
	\begin{align*}
		\p_t (\mathbb{B}\mathbb{A}^{-1})=[\mathbb{B}\mathbb{A}^{-1},\mathbb{D}].
	\end{align*}
	Moreover, the diagonal part and off-diagonal parts indicates another form of semi-discrete non-commutative relativistic Toda equation 
	\begin{align*}
		&\p_t \Psi-\Lambda^{-1}\p_t \Xi\Lambda=D-\Lambda D\Lambda^{-1},\\
		&\p_t\Xi\Lambda=-\Lambda D-(\Psi-\Lambda^{-1}\Xi\Lambda)^{-1}(D\Xi\Lambda-\Lambda D\Psi),
	\end{align*}
	which could be written in terms of $\xi_n$ and $\psi_n$ as
	\begin{align*}
		&\p_t \xi_n=\xi_n\psi_{n+1}^{-1}-\psi_n^{-1}\xi_n,\\
		&\p_t \psi_n=\xi_n\psi_{n+1}^{-1}-\psi_{n-1}^{-1}\xi_{n-1}.
	\end{align*}

	\section{non-commutative integrability}\label{sec_Poisson}
	In this section,
	we construct a non-commutative network  and use it to get the Poisson structure for the non-commutative leapfrog map and a family of involutive invariants.   
	
	\subsection{Preliminaries}
	
	Firstly,
	we recall some backgrounds and facts about non-commutative Poisson structures and networks needed in this section.
	They are taken from \cite{arthamonov18,arthamonov20,van2008double, gekhtman2012,gekhtman2016,Non} and occasionally slightly modified to fit our needs.
	
	\subsubsection{Non-commutative Poisson bracket}
	Let $\A$ be an associative algebra over a field $\mathbb{K}$.
	To define a non-commutative Poisson bracket,
	we firstly introduce the notion of a double bracket $\dl-,-\dr$.
	Then we compose this bracket with the algebra multiplication $\mu:\,\A\otimes \A\rightarrow \A$ to get a bilinear operation $\A\times \A\rightarrow \A$.
	By considering the cyclic space $\A^\natural:=\A/[\A,\A]$,
	we get a Lie bracket on $\A^\natural$,
	which provides an $H_0$-Poisson bracket \cite{van2008double}.
	
	\begin{definition}
		A double bracket on $\A$ is a bilinear map
		\begin{align*}
			\dl-,-\dr:\,\A\times \A\rightarrow \A\otimes \A,
		\end{align*}
		which satisfies
		\begin{align*}
			&\dl a,b\dr=-\dl b,a\dr^\tau,\\
			&\dl a,bc\dr=\dl a,b\dr\left(1\otimes c\right)
			+\left(b\otimes 1\right)
			\dl a,c\dr,
		\end{align*}
		where $(x\otimes y)^\tau:=y\otimes x$.
	\end{definition} 
	It can be deduced from these properties that 
	\begin{align*}
		\dl ab,c\dr
		=\left(1\otimes a\right)\dl b,c\dr
		+\dl a,c\dr\left(b\otimes 1\right).
	\end{align*}
	\begin{definition}
		A double bracket $\dl-,-\dr$ is called a \emph{double Poisson bracket} if it additionally satisfies a version of the Jacobi identity
		\begin{align*}
			\dl a,\dl b,c\dr\dr_L+
			\sigma\dl b,\dl c,a\dr\dr_L+
			\sigma^2\dl c,\dl a,b\dr\dr_L
			=0,
		\end{align*}
		where $\dl x, y\otimes z\dr_L:=\dl x,y\dr\otimes z$ and $\sigma$ is the permutation operator $\sigma:\,x\otimes y\otimes z\rightarrow z\otimes x\otimes y$.
	\end{definition}
	The relation between double and $H_0$-Poisson brackets are established by M. Van den Bergh \cite{van2008double} as follows.
	For a double bracket $\dl-,-\dr$ in $\A$,
	we introduce another operation $\{-,-\}:\A\times \A\rightarrow \A$
	such that 
	\begin{align*}
		\{a,b\}:=
		\mu(\dl a,b\dr),
	\end{align*}
	where $\mu:\A\otimes \A\rightarrow \A$ is a multiplication map.
	Moreover, if $\dl-,-\dr$ is a double Poisson bracket,
	then the induced bracket $\{-,-\}$ satisfies the properties
	\begin{align*}
		&\{a,b\}=-\{b,a\}\mod [\A,\A],\\
		&\{a,bc\}=\{a,b\}c+b\{a,c\},\\
		&\{a,\{b,c\}\}=
		\{\{a,b\},c\}+
		\{b,\{a,c\}\}.
	\end{align*}
	For the associate algebra $\A$,
	we can define the cyclic space, denoted $\A^\natural$,
	to be the vector space $\A/[\A,\A]$, by the linear span of all commutators.
	Then we define the bracket $\langle-,-\rangle:\,\A^\natural\times\A^\natural\to\A^\natural$ by
	\begin{align*}
		\langle a^\natural,b^\natural\rangle
		:=\{a,b\}^\natural,\quad
		\forall a,\,b\in \A.
	\end{align*}
	\begin{proposition}
		Suppose that $\dl-,-\dr$ is a double Poisson bracket.
		Then the induced bracket $\langle-,-\rangle$ is a Lie bracket,
		which provides an $H_0$-Poisson structure on $\A^\natural$.
	\end{proposition}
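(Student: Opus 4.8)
The plan is to verify in turn that $\langle-,-\rangle$ is well-defined on $\A^\natural$, that it is antisymmetric, that it obeys the Jacobi identity, and finally that it furnishes an $H_0$-Poisson structure. Crucially, the three structural properties of the induced bracket $\{-,-\}$ stated just above are to be taken as granted, so all the work consists of pushing them through the quotient $\A^\natural=\A/[\A,\A]$.

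First I would establish well-definedness. Using the derivation property $\{a,bc\}=\{a,b\}c+b\{a,c\}$, one computes for a commutator that
\begin{align*}
\{a,cd-dc\}=[\{a,c\},d]+[c,\{a,d\}]\in[\A,\A],
\end{align*}
so the map $\{a,-\}$ carries $[\A,\A]$ into $[\A,\A]$, and hence $\{a,b\}^\natural$ depends only on the class $b^\natural$. For the first slot I would invoke antisymmetry modulo commutators, $\{a,b\}^\natural=-\{b,a\}^\natural$, together with the previous computation applied to $\{b,-\}$, to conclude that $\{a,b\}^\natural$ likewise depends only on $a^\natural$. This makes $\langle a^\natural,b^\natural\rangle:=\{a,b\}^\natural$ a well-defined bilinear map, and the same descent guarantees that nested expressions such as $\langle a^\natural,\langle b^\natural,c^\natural\rangle\rangle$ are independent of the representatives chosen.

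Antisymmetry of $\langle-,-\rangle$ is then immediate from $\{a,b\}=-\{b,a\}$ modulo $[\A,\A]$. For the Jacobi identity I would exploit the Loday-type relation $\{a,\{b,c\}\}=\{\{a,b\},c\}+\{b,\{a,c\}\}$, which upon projection yields the left Leibniz identity
\begin{align*}
\langle a^\natural,\langle b^\natural,c^\natural\rangle\rangle=\langle\langle a^\natural,b^\natural\rangle,c^\natural\rangle+\langle b^\natural,\langle a^\natural,c^\natural\rangle\rangle
\end{align*}
on $\A^\natural$. The key step is the standard fact that a left Leibniz bracket which is genuinely antisymmetric is automatically a Lie bracket: rewriting the two terms on the right via antisymmetry as $-\langle c^\natural,\langle a^\natural,b^\natural\rangle\rangle$ and $-\langle b^\natural,\langle c^\natural,a^\natural\rangle\rangle$ and rearranging delivers the cyclic Jacobi identity
\begin{align*}
\langle a^\natural,\langle b^\natural,c^\natural\rangle\rangle+\langle b^\natural,\langle c^\natural,a^\natural\rangle\rangle+\langle c^\natural,\langle a^\natural,b^\natural\rangle\rangle=0.
\end{align*}

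Finally, to see that this Lie bracket provides an $H_0$-Poisson structure, I would observe that for each fixed $a$ the map $\{a,-\}:\A\to\A$ is a derivation by the second property, so the induced action $\langle a^\natural,-\rangle=\overline{\{a,-\}}$ on $\A^\natural$ is precisely the one induced by a derivation of $\A$, which is the defining requirement in the sense of Crawley-Boevey. The main obstacle is the descent step, since everything downstream is a formal consequence once the bracket is known to descend to $\A^\natural$; the passage from the Leibniz identity to Jacobi, though it is the conceptual heart of the statement, reduces to the well-known upgrade of a skew Leibniz algebra to a Lie algebra.
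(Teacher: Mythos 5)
Your proposal is correct. Note that the paper itself offers no proof of this proposition --- it is quoted as a known result of Van den Bergh \cite{van2008double} in a preliminaries section --- so there is no in-paper argument to compare against; your proof is essentially the standard one from that reference. The two points that genuinely need care are both handled properly: the descent of the first slot through $[\A,\A]$, which is delicate because antisymmetry of $\{-,-\}$ holds only modulo commutators (you correctly combine it with the second-slot Leibniz computation $\{a,cd-dc\}=[\{a,c\},d]+[c,\{a,d\}]$), and the upgrade from the projected left Leibniz identity to the cyclic Jacobi identity, which indeed follows formally once the bracket on $\A^\natural$ is genuinely antisymmetric. Your closing observation that $\langle a^\natural,-\rangle$ is induced by the derivation $\{a,-\}$ of $\A$ is exactly what the $H_0$-Poisson condition (in Crawley-Boevey's sense) requires, so the proof is complete.
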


	Let ${\rm Rep}_M(\A):={\rm Hom}(\A,{\rm Mat}_M)/{\rm Ad}\, {\rm GL}_M$ be the representation space of $\A$.
	Since the trace map $\tr$ is well defined on elements of $\A/[\A,\A]$,
	there is a Poisson bracket on ${\rm Rep}_M(\A)$ and
	\begin{align*}
		\{\tr(a),\tr(b)\}
		=\tr\langle a^\natural,b^\natural\rangle.
	\end{align*}

	\subsubsection{Non-commutative network}
	Directed planar networks with weighted edges have been widely used in the study of totally non-negative matrices. A thorough investigation of Poisson geometry of directed networks was conducted in \cite{gekhtman2009,gekhtman2012}. Subsequently, Ovenhouse introduced the concept of non-commutative networks and studied its non-commutative Poisson structures \cite{Non}.
	The associated $r$-matrix formalism was discussed in \cite{arthamonov20}, with precise algebraic and conceptual results shown in \cite{Non}.
	
	A network $G=(V,E)$ is a directed planar graph with a vertex set $V$ and an edge set $E$.
	In this paper,
	we suppose that the network is drawn inside a cylinder.
	The vertex on the boundary is called a source if it has exactly one outcoming edge and no incoming edges,
	while the vertex is called a sink if the direction of the single edge reverses. 
	All internal vertices of $G$ have three adjacent edges.
	The internal vertex is called \emph{white} if it has exactly one incoming edge,
	or \emph{black} if it has exactly one outcoming edge (see \fig\ref{wb}).
	The weights of edges in $G$ take values in an associate algebra $\A$.
	A path in $G$ is a sequence $(v_1,e_1,v_2,e_2,\ldots,e_r,v_{r+1})$ of vertices and edges such that $e_i=(v_i,v_{i+1})$ for $i=1,\ldots,r$.
	The weight of such a path is defined as the product of the weight of each $e_i$ in order.
		\begin{figure}[htbp]
		\begin{tikzpicture}[line width=0.75pt]
			\draw [dashed] (0,0) circle[radius=1.2];
			\draw [dashed] (3,0) circle[radius=1.2];
			\draw [->](-1.2,0)--(-0.1,0);
			\draw (0,0) circle[radius=0.1];
			\draw [->](0.05,0.05*1.732)--(0.6,0.6*1.732);
			\draw [->](0.05,-0.05*1.732)--(0.6,-0.6*1.732);
			\node [below]at (-0.6,0){$x$};
			\node [left]at (0.35,0.6){$z$};
			\node [right]at (0.4,-0.5){$y$};
			\filldraw (3,0) circle[radius=0.1];
			\draw [->](2.4,0.6*1.732)--(2.95,0.05*1.732);
			\draw [->](2.4,-0.6*1.732)--(2.95,-0.05*1.732);
			\draw [->](3.1,0)--(4.2,0);
			\node[right]at (2.65,0.6){$c$};
			\node[left]at (2.6,-0.5){$b$};
			\node [below] at (3.6,0){$a$};
		\end{tikzpicture}
		\caption{Two types of internal vertices}
		\label{wb}
	\end{figure}
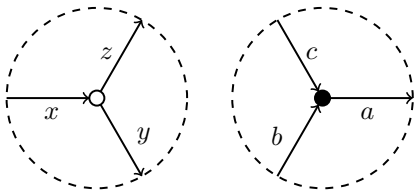 
	
	Next,
	we can choose an oriented curve $\rho$, called the \emph{cut},
	which connects two boundary components of the cylinder.
	The cylinder is cut into a planar graph, such as a rectangle.
	Suppose an oriented edge $\alpha$ has a weight $\wt(\alpha)$.
	If an dege intersecst $\rho$ at a point $i$,
	we define $\epsilon_i=1$, and $(\rho,\alpha)$ form an oriented basis on the plane. 
	Similarly, we define $\epsilon_i=-1$ if they have the opposite orientation. 
	The \emph{modified edge weight} of $\alpha$ is defined as
	\begin{align*}
		\mathrm{WT}(\alpha):=\wt(\alpha)\lambda^{\epsilon_i},
	\end{align*}
	where $\lambda\in\mathbb{K}$.
	Assuming that the graph $G$ is acyclic,
	we can define the boundary measurement matrix $\mathscr{B}(\lambda)=(b_{ij}(\lambda))$,
	where $b_{ij}(\lambda)$ is the sum of the weights of all the paths from the $i$-th source to the $j$-th sink.
	
	A natural operation on networks is their \emph{concatenation},
	which consists in gluing some sources (or sinks) of one network to some of the sinks (or sources) of the other.
	The weight of the new edge is defined as the product of the weights of the edges being glued.
	The boundary measurement matrix of the resulting network after concatenation is the product of the two boundary measurement matrices.
	Then we consider the Poisson structure on non-commutative networks.
	The Poisson structure associated with networks are expected to behave naturally under concatenation.
	This structure is related to the Poisson-Lie structure on groups \cite{gekhtman2012}.
	
	Firstly,
	let's consider local double brackets of weights of edges adjacent to a vertex.
	For any white vertex in \fig \ref{wb},
	local double brackets are defined by
	\begin{align}\label{pw}
		\dl x,y\dr=0,\quad
		\dl x,z\dr=0,\quad
		\dl y,z\dr=\frac{1}{2}\left(y\otimes z\right).
	\end{align}
	Similarly, for any black vertex, we have
	\begin{align}\label{pb}
		\dl a,b\dr=0,\quad
		\dl a,c\dr=0,\quad
		\dl b,c\dr=\frac{1}{2}\left(c\otimes b\right).
	\end{align}
	These local double brackets can be extended into a universal one in the network $G$ through concatenations. Moreover, their relations with Goldman bracket was discussed in \cite{Non}, in which a precise geometric interpretation was given for the double bracket (as well as its induced brackets). Another related structure is the quasi-Poisson structure \cite{massuyeau14}. In \cite{arthamonov18}, Arthamonov constructed a categorical version of such a quasi-Poisson structure, and proved the invariance of the double quasi-Poisson bracket under non-commutative mutations.

	Next we introduce some local transformations of networks, which do not change boundary measurements.
	The first type of move is called `gauge transformation',
	which change edge weights but not for the graph.
	Such a gauge transformation can be realized by the following procedure. For all incoming edges, we right multiply the weights by a parameter $t$, and for all outcoming edges, we left multiply the weights by $t^{-1}$. 
	One can easily check that this transformation keep the boundary measurements invariant.
	Besides,
	there are three types of Postnikov moves depicted in \fig \ref{pm}, which have the following description:
	\begin{enumerate}
		\item Perform the ``square move'' at each square face; 
		\item Perform the ``white-swap'' at each white-white edge;
		\item Perform the ``black-swap'' at each black-black edge.
	\end{enumerate}
	\begin{figure}[htbp]
		\centering
		\begin{tikzpicture}[line width=0.75pt]
			\node at (-1,0.5){$1.$};
			\draw (0,0)--(1-0.08,0);
			\draw (0,1)--(1-0.08,1);
			\draw (1,1-0.08)--(1,0.08);
			\draw [->](0,0)--(1/2,0);
			\draw [->](0,1)--(1/2,1);
			\draw (1+0.08,2-0.08);
			\draw [->](1,0+0.08)--(1,1/2);
			\draw (1,0) circle[radius=0.08];
			\filldraw (2,0) circle[radius=0.08];
			\filldraw (1,1) circle[radius=0.08];
			\draw(1.08,0)--(2,0);
			\draw[->](1.08,0)--(1.5,0);
			\draw(1,1)--(1.92,1);
			\draw(2,0)--(2,0.92);
			\draw [->](1.08,0)--(1.5,0);
			\draw [->](2,0.92)--(2,0.5);
			\draw (2,1)circle[radius=0.08];
			\draw (2,0)--(3,0);
			\draw (2.08,1)--(3,1);
			\draw [->](2,0)--(2.5,0);
			\draw [->](2.08,1)--(2.5,1);
			\draw [->](1,1)--(1.5,1);
			\node[left] at(1,0.5){$a$};
			\node[below] at(1.5,0){$b$};
			\node[right] at(2,0.5){$c$};
			\node[above] at(1.5,1){$d$};
		\end{tikzpicture}
		\hspace{9em}
		\begin{tikzpicture}[line width=0.75pt]
			\draw (0,0)--(1-0.08,0);
			\draw (0,1)--(1-0.08,1);
			\draw (1,1-0.08)--(1,0.08);
			\draw [->](0,0)--(1/2,0);
			\draw [->](0,1)--(1/2,1);
			\draw (1+0.08,2-0.08);
			\draw [->](1,1-0.08)--(1,1/2);
			\filldraw (1,0) circle[radius=0.08];
			\draw (2,0) circle[radius=0.08];
			\draw (1,1) circle[radius=0.08];
			\draw(1.08,0)--(1.92,0);
			\draw[->](1.08,0)--(1.5,0);
			\draw(1.08,1)--(1.92,1);
			\draw(2,0.08)--(2,0.92);
			\draw [->](1.08,0)--(1.5,0);
			\draw [->](2,0.08)--(2,0.5);
			\filldraw (2,1)circle[radius=0.08];
			\draw (2.08,0)--(3,0);
			\draw (2.08,1)--(3,1);
			\draw [->](2.08,0)--(2.5,0);
			\draw [->](2.08,1)--(2.5,1);
			\draw [->](1.08,1)--(1.5,1);
			\node[left] at(1,0.5){$dcf^{-1}$};
			\node[below] at(1.5,-0.1){$f$};
			\node[right] at(2,0.5){$f^{-1}ad$};
			\node[above] at(1.65,1.1){$dcf^{-1}bc^{-1}$};
		\end{tikzpicture}
		\newline
		\newline
		\begin{tikzpicture}[line width=0.75pt]
			\node at (-1,0.5){$2.$};
			\draw (0,0)--(0.92,0);
			\draw (1.08,0)--(1.92,0);
			\draw [->](2.08,0)--(3,0);
			\draw [->](1,0.08)--(1,1);
			\draw [->](2,-0.08)--(2,-1);
			\draw(1,0)circle[radius=0.08];
			\draw(2,0)circle[radius=0.08];
			\draw [->](0,0)--(0.5,0);
			\draw [->](1.08,0)--(1.5,0);
			\node[below]at (0.5,0){$a$};
			\node[below]at (1.5,0){$b$};
			\node[below]at (2.5,0){$c$};
			\node [left]at(1,0.7) {$x$};
			\node [right]at(2,-0.7) {$y$};
		\end{tikzpicture}
		\hspace{9em}
		\begin{tikzpicture}[line width=0.75pt]
			\draw (0,0)--(0.92,0);
			\draw (1.08,0)--(1.92,0);
			\draw [->](2.08,0)--(3,0);
			\draw [->](2,0.08)--(1,1);
			\draw [->](1,-0.08)--(2,-1);
			\draw(1,0)circle[radius=0.08];
			\draw(2,0)circle[radius=0.08];
			\draw [->](0,0)--(0.5,0);
			\draw [->](1.08,0)--(1.5,0);
			\node[below]at (0.5,0){$a$};
			\node[below]at (1.5,0){$b$};
			\node[below]at (2.5,0){$c$};
			\node [right]at(1.5,0.7) {$b^{-1}x$};
			\node [left]at(1.5,-0.7) {$by$};
		\end{tikzpicture}
		\newline
		\newline
		\begin{tikzpicture}[line width=0.75pt]
			\node at (-1,0.5){$3.$};
			\draw (0,0)--(0.92,0);
			\draw (1.08,0)--(1.92,0);
			\draw [->](2.08,0)--(3,0);
			\draw (1,0.08)--(1,1);
			\draw [->](1,1)--(1,0.5);
			\draw (2,-0.08)--(2,-1);
			\draw [->](2,-1)--(2,-0.5);
			\filldraw(1,0)circle[radius=0.08];
			\filldraw(2,0)circle[radius=0.08];
			\draw [->](0,0)--(0.5,0);
			\draw [->](1.08,0)--(1.5,0);
			\node[below]at (0.5,0){$a$};
			\node[below]at (1.5,0){$b$};
			\node[below]at (2.5,0){$c$};
			\node [left]at(1,0.7) {$x$};
			\node [right]at(2,-0.7) {$y$};
		\end{tikzpicture}
		\hspace{9em}
		\begin{tikzpicture}[line width=0.75pt]
			\draw (0,0)--(0.92,0);
			\draw (1.08,0)--(1.92,0);
			\draw [->](2.08,0)--(3,0);
			\draw (2,0.08)--(1,1);
			\draw [->](1,1)--(1.5,0.54);
			\draw (1,-0.08)--(2,-1);
			\draw [->](2,-1)--(1.5,-0.54);
			\filldraw(1,0)circle[radius=0.08];
			\filldraw(2,0)circle[radius=0.08];
			\draw [->](0,0)--(0.5,0);
			\draw [->](1.08,0)--(1.5,0);
			\node[below]at (0.5,0){$a$};
			\node[below]at (1.5,0){$b$};
			\node[below]at (2.5,0){$c$};
			\node [right]at(1.5,0.7) {$xb$};
			\node [left]at(1.5,-0.7) {$yb^{-1}$};
		\end{tikzpicture}
		\newline
		\caption{non-commutative Postnikov moves}
		\label{pm}
	\end{figure}
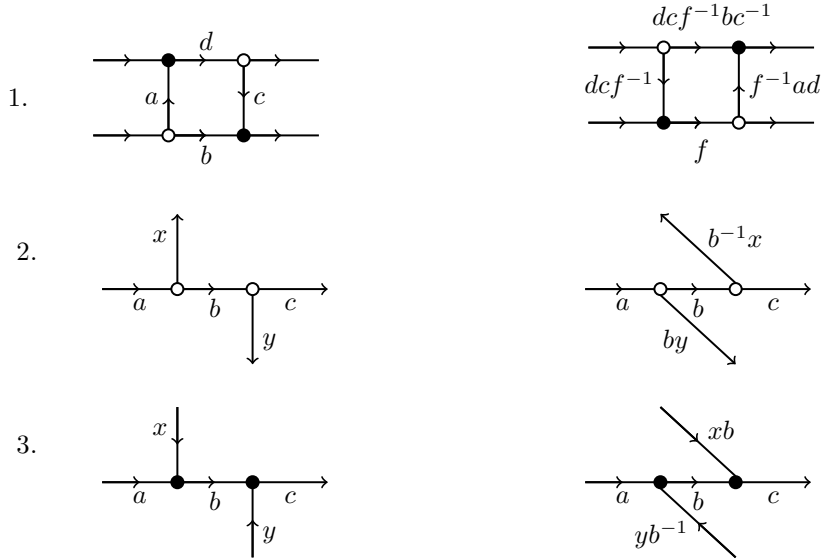 
	In the first type of Postnikov move,
	we can define a new weight $f:=b+adc$. 
	In this case, the associate algebra $\A$ can be extended to the \emph{free skew field} $\mr$ which consists of non-commutative rational expressions in a set of indeterminates, such as $f^{-1}$.
	The double bracket over $\A$ can be extended uniquely into this free skew field $\mr$ using the formulas
	\begin{align}\label{eq_inverse}
		\begin{split}
			\dl b,a^{-1}\dr=
			-\left(a^{-1}\otimes 1\right)
			\dl b,a\dr\left(1\otimes a^{-1}\right),\\
			\dl a^{-1},b\dr=
			-\left(1\otimes a^{-1}\right)
			\dl a,b\dr\left(a^{-1}\otimes 1\right).
		\end{split}
	\end{align}
	
	Recently,
	Gekhtman et al. made the use of Poisson geometry of directed networks on surfaces to generalize the pentagram map and the associated cluster structures in \cite{gekhtman2016}.
	Especially, they introduced the commutative leapfrog map and provided its Poisson and cluster structures by using networks.
	In next subsection, we extend the network of the leapfrog map to the non-commutative case.

	\subsection{Poisson structure of the non-commutative leapfrog map}

	In this part, let's consider a network in a cylinder which contains $N$ squares, see \fig \ref{net0}.
	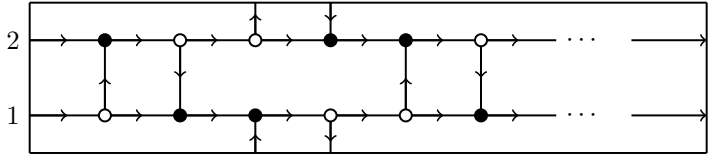
\begin{figure}[htbp]
		\begin{tikzpicture}[line width=0.75pt]
			\draw (0,0)--(1-0.08,0);
			\draw (0,1)--(1-0.08,1);
			\draw (1,1-0.08)--(1,0.08);
			\draw [->](0,0)--(1/2,0);
			\draw [->](0,1)--(1/2,1);
			\draw (1+0.08,2-0.08);
			\draw [->](1,0+0.08)--(1,1/2);
			\draw (1,0) circle[radius=0.08];
			\filldraw (2,0) circle[radius=0.08];
			\filldraw (1,1) circle[radius=0.08];
			\draw (2,1) circle[radius=0.08];
			\draw (2,1-0.08)--(2,0);
			\draw [->](1+0.08,0)--(3/2,0);
			\draw [->](1,1)--(3/2,1);
			\draw (1,1)--(2-0.08,1);
			\draw [->](2,1-0.08)--(2,1/2);
			\draw (1.08,0)--(2-0.08,0);
			\draw (2,0)--(4-0.08,0);
			\filldraw (3,0) circle[radius=0.08];
			\draw (4,0) circle[radius=0.08];
			\draw (4.08,0)--(5-0.08,0);
			\draw [->] (2,0)--(2.5,0);
			\draw [->] (3.08,0)--(3.5,0);
			\draw (3,0)--(3,-0.5);
			\draw [->] (3,-0.5)--(3,-0.25);
			\draw [->](4,-0.08)--(4,-0.35);
			\draw (4,-0.5)--(4,-0.08);
			\draw (2.08,1)--(3-0.08,1);
			\draw (3,1) circle[radius=0.08];
			\draw (3.08,1)--(4,1);
			\filldraw (4,1) circle[radius=0.08];
			\draw (4,1)--(5-0.08,1);
			\draw (3,1.08)--(3,1.5);
			\draw [->](3,1.08)--(3,1.35);
			\draw (4,1.08)--(4,1.5);
			\draw [->](4,1.5)--(4,1.25);
			\draw [->] (2.08,1)--(2.5,1);
			\draw [->](3.08,1)--(3.5,1);
			\draw [->](4.08,0)--(4.5,0);
			\draw [->](4.08,1)--(4.5,1);
			\node [left] at (0,0) {$1$};
			\node [left] at (0,1) {$2$};
			\node [right] at (7,0) {$\cdots$};
			\node [right] at (7,1) {$\cdots$};
			\draw (5,0.08)--(5,1-0.08);
			\draw (5.08,0)--(7,0);
			\draw (5,0) circle[radius=0.08];
			\filldraw (5,1) circle[radius=0.08];
			\draw [->](5,0.08)--(5,0.5);
			\draw (5,1)--(6-0.08,1);
			\draw [->](5.08,1)--(5.5,1);
			\draw (6.08,1)--(7,1);
			\draw (6,1) circle[radius=0.08];
			\draw (6,1-0.08)--(6,0.08);
			\filldraw (6,0)circle[radius=0.08];
			\draw [->](5.08,0)--(5.5,0);
			\draw [->](6.08,0)--(6.5,0);
			\draw [->](6.08,1)--(6.5,1);
			\draw [->](6,1-0.08)--(6,0.5);
			\draw (0,-0.5)--(0,1.5);
			\draw (0,-0.5)--(9,-0.5);
			\draw (0,1.5)--(9,1.5);
			\draw (9,-0.5)--(9,1.5);
			\draw [->](8,0)--(9,0);
			\draw [->](8,1)--(9,1);
		\end{tikzpicture}
		\caption{The network on a cylinder}\label{net0}
	\end{figure}
	\newline
	In this figure,
	the cylinder is cut along $\rho$ into a rectangle,
	so $\rho$ is both the top and bottom edge of the rectangle.
	Note that there are two sources and two sinks on the left and right boundaries respectively,
	so we can glue them together to obtain a network on the torus.
		
	Since the cylinder is composed by squares, let's look at each square and weights therein. 	
	For an elementary square in the left of \fig \ref{gauge},
	let's assume the associated edge weights take values in a skew field $\mr$.
		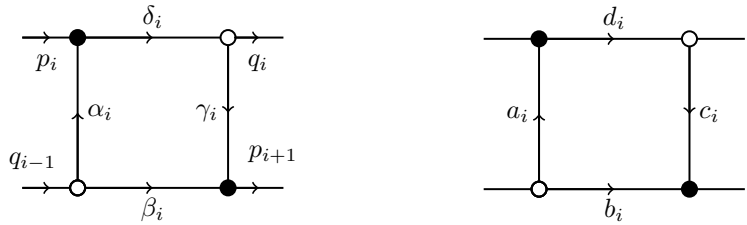
\begin{figure}[htbp]
		\begin{tikzpicture}[line width=0.75pt]
			\draw (-1.736,-1)--(-1.1,-1);
			\draw (-0.9,-1)--(1.736,-1);
			\draw (-1,-1) circle[radius=0.1];
			\filldraw (-1,1) circle[radius=0.1];
			\draw (-1,-1) circle[radius=0.1];
			\filldraw (1,-1) circle[radius=0.1];
			\draw (-1,-1) circle[radius=0.1];
			\draw (1,1) circle[radius=0.1];
			\draw (-1.736,1)--(0.9,1);
			\draw (1.1,1)--(1.736,1);
			\draw (-1,-0.9)--(-1,1);
			\draw (1,-1)--(1,0.9);
			\node [right] at (-1,0) {$\alpha_i$};
			\node [left] at (1,0) {$\gamma_i$};
			\node [above] at (0,1) {$\delta_i$};
			\node [below] at (0,-1) {$\beta_i$};
			\node [below] at (-1.4,0.9) {$p_i$};
			\draw [->](-1.736,1)--(-1.368,1);
			\node [below] at (1.4,0.9) {$q_i$};
			\draw [->](1.08,1)--(1.408,1);
			\node [above] at (-1.6,-0.9) {$q_{i-1}$};
			\draw [->](-1.736,-1)--(-1.368,-1);
			\node [above] at (1.6,-0.8) {$p_{i+1}$};
			\draw [->](1.08,-1)--(1.408,-1);
			\draw[->] (-0.9,-1)--(0,-1);
			\draw[->] (-1,1)--(0,1);
			\draw[->] (-1,-0.9)--(-1,0);
			\draw[->] (1,0.9)--(1,0);
		\end{tikzpicture}
		\hspace{6em}
		\begin{tikzpicture}[line width=0.75pt]
			\draw (-1.736,-1)--(-1.1,-1);
			\draw (-0.9,-1)--(1.736,-1);
			\draw (-1,-1) circle[radius=0.1];
			\filldraw (-1,1) circle[radius=0.1];
			\draw (-1,-1) circle[radius=0.1];
			\filldraw (1,-1) circle[radius=0.1];
			\draw (-1,-1) circle[radius=0.1];
			\draw (1,1) circle[radius=0.1];
			\draw (-1.736,1)--(0.9,1);
			\draw (1.1,1)--(1.736,1);
			\draw (-1,-0.9)--(-1,1);
			\draw (1,-1)--(1,0.9);
			\node [left] at (-1,0) {$a_i$};
			\node [right] at (1,0) {$c_i$};
			\node [above] at (0,1) {$d_i$};
			\node [below] at (0,-1) {$b_i$};
			\draw[->] (-0.9,-1)--(0,-1);
			\draw[->] (-1,1)--(0,1);
			\draw[->] (-1,-0.9)--(-1,0);
			\draw[->] (1,0.9)--(1,0);
		\end{tikzpicture}
		\caption{The edge weights of a square after a gauge transformation}
		\label{gauge}
	\end{figure} 
	\newline
	According to equations \eqref{pw} and \eqref{pb} together the properties of double brackets,
	we have
	\begin{align*}
		&\dl \alpha_i,p_i\dr=\frac{1}{2}p_i\otimes \alpha_i,\quad
		\dl\beta_i,\alpha_i\dr=\frac{1}{2}\beta_i\otimes\alpha_i,\\
		&\dl \gamma_i,q_i\dr=\frac{1}{2}\gamma_i\otimes q_i,\quad\,\,
		\dl\beta_i,\gamma_i\dr=\frac{1}{2}\gamma_i\otimes \beta_i,
	\end{align*}
while other brackets not shown above are zero.
	Then we could consider a gauge transformation at the corners of each square face,
	so that the edge weights of the $i$-th square are $a_i$, $b_i$, $c_i$, $d_i\in \mr$,
	while the others edge weights are set to $1$, which is the unity of the algebra $\mr$ (c.f. the right of \fig \ref{gauge}). These new weights are given by
	\begin{align*}
		a_i=\alpha_ip_i^{-1},\quad
		b_i=\beta_i,\quad
		c_i=q_i^{-1}\gamma_i,\quad
		d_i=p_i\delta_iq_i.
	\end{align*}
	Moreover, double brackets induced on variables after the gauge transformation are
	\begin{align}\label{poisson_abcd}
		\begin{split}
			&\dl b_i,a_i\dr=\frac{1}{2}b_i\otimes a_i,\quad\quad
			\dl b_i,c_i\dr=\frac{1}{2}c_i\otimes b_i,\\
			&\dl a_i,d_i\dr=\frac{1}{2}1\otimes a_id_i,\quad
			\dl c_i,d_i\dr=\frac{1}{2}d_ic_i\otimes 1.
		\end{split}
	\end{align}	
	The resulted network is shown in \fig \ref{net1}.
	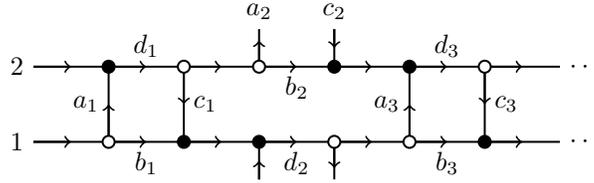
\begin{figure}[htbp]
		\begin{tikzpicture}[line width=0.75pt]
			\draw (0,0)--(1-0.08,0);
			\draw (0,1)--(1-0.08,1);
			\draw (1,1-0.08)--(1,0.08);
			\draw [->](0,0)--(1/2,0);
			\draw [->](0,1)--(1/2,1);
			\draw (1+0.08,2-0.08);
			\draw [->](1,0+0.08)--(1,1/2);
			\draw (1,0) circle[radius=0.08];
			\filldraw (2,0) circle[radius=0.08];
			\filldraw (1,1) circle[radius=0.08];
			\draw (2,1) circle[radius=0.08];
			\draw (2,1-0.08)--(2,0);
			\draw [->](1+0.08,0)--(3/2,0);
			\draw [->](1,1)--(3/2,1);
			\draw (1,1)--(2-0.08,1);
			\draw [->](2,1-0.08)--(2,1/2);
			\draw (1.08,0)--(2-0.08,0);
			\draw (2,0)--(4-0.08,0);
			\filldraw (3,0) circle[radius=0.08];
			\draw (4,0) circle[radius=0.08];
			\draw (4.08,0)--(5-0.08,0);
			\draw [->] (2,0)--(2.5,0);
			\draw [->] (3.08,0)--(3.5,0);
			\draw (3,0)--(3,-0.5);
			\draw [->] (3,-0.5)--(3,-0.25);
			\draw [->](4,-0.08)--(4,-0.35);
			\draw (4,-0.5)--(4,-0.08);
			\draw (2.08,1)--(3-0.08,1);
			\draw (3,1) circle[radius=0.08];
			\draw (3.08,1)--(4,1);
			\filldraw (4,1) circle[radius=0.08];
			\draw (4,1)--(5-0.08,1);
			\draw (3,1.08)--(3,1.5);
			\draw [->](3,1.08)--(3,1.35);
			\draw (4,1.08)--(4,1.5);
			\draw [->](4,1.5)--(4,1.25);
			\draw [->] (2.08,1)--(2.5,1);
			\draw [->](3.08,1)--(3.5,1);
			\draw [->](4.08,0)--(4.5,0);
			\draw [->](4.08,1)--(4.5,1);
			\node [below] at (1.5,0) {$b_1$};
			\node [below] at (3.5,0) {$d_2$};
			\node [below] at (5.5,0) {$b_3$};
			\node [left] at (1,0.5) {$a_1$};
			\node [left] at (5,0.5) {$a_3$};
			\node [right] at (2,0.5) {$c_1$};
			\node [right] at (6,0.5) {$c_3$};
			\node [above] at (1.5,1) {$d_1$};
			\node [below] at (3.5,1) {$b_2$};
			\node [above] at (5.5,1) {$d_3$};
			\node [above] at (3,1.5) {$a_2$};
			\node [above] at (4,1.5) {$c_2$};
			\node [left] at (0,0) {$1$};
			\node [left] at (0,1) {$2$};
			\node [right] at (7,0) {$\cdots$};
			\node [right] at (7,1) {$\cdots$};
			\draw (5,0.08)--(5,1-0.08);
			\draw (5.08,0)--(7,0);
			\draw (5,0) circle[radius=0.08];
			\filldraw (5,1) circle[radius=0.08];
			\draw [->](5,0.08)--(5,0.5);
			\draw (5,1)--(6-0.08,1);
			\draw [->](5.08,1)--(5.5,1);
			\draw (6.08,1)--(7,1);
			\draw (6,1) circle[radius=0.08];
			\draw (6,1-0.08)--(6,0.08);
			\filldraw (6,0)circle[radius=0.08];
			\draw [->](5.08,0)--(5.5,0);
			\draw [->](6.08,0)--(6.5,0);
			\draw [->](6.08,1)--(6.5,1);
			\draw [->](6,1-0.08)--(6,0.5);
		\end{tikzpicture}
		\caption{The network after gauge transformations}\label{net1}
	\end{figure}
	
	To simplify this network, we perform another gauge transformation,
	so that all weights become $1$ except those on bottom and left edges of every square face and the last two edges in the right. 
	The result is depicted in \fig \ref{net2}.
	\begin{figure}[htbp]
		\begin{tikzpicture}[line width=0.75pt]
			\draw (0,0)--(1-0.08,0);
			\draw (0,1)--(1-0.08,1);
			\draw (1,1-0.08)--(1,0.08);
			\draw [->](0,0)--(1/2,0);
			\draw [->](0,1)--(1/2,1);
			\draw (1+0.08,2-0.08);
			\draw [->](1,0+0.08)--(1,1/2);
			\draw (1,0) circle[radius=0.08];
			\filldraw (2,0) circle[radius=0.08];
			\filldraw (1,1) circle[radius=0.08];
			\draw (2,1) circle[radius=0.08];
			\draw (2,1-0.08)--(2,0);
			\draw [->](1+0.08,0)--(3/2,0);
			\draw [->](1,1)--(3/2,1);
			\draw (1,1)--(2-0.08,1);
			\draw [->](2,1-0.08)--(2,1/2);
			\draw (1.08,0)--(2-0.08,0);
			\draw (2,0)--(4-0.08,0);
			\filldraw (3,0) circle[radius=0.08];
			\draw (4,0) circle[radius=0.08];
			\draw (4.08,0)--(5-0.08,0);
			\draw [->] (2,0)--(2.5,0);
			\draw [->] (3.08,0)--(3.5,0);
			\draw (3,0)--(3,-0.5);
			\draw [->] (3,-0.5)--(3,-0.25);
			\draw [->](4,-0.08)--(4,-0.35);
			\draw (4,-0.5)--(4,-0.08);
			\draw (2.08,1)--(3-0.08,1);
			\draw (3,1) circle[radius=0.08];
			\draw (3.08,1)--(4,1);
			\filldraw (4,1) circle[radius=0.08];
			\draw (4,1)--(5-0.08,1);
			\draw (3,1.08)--(3,1.5);
			\draw [->](3,1.08)--(3,1.35);
			\draw (4,1.08)--(4,1.5);
			\draw [->](4,1.5)--(4,1.25);
			\draw [->] (2.08,1)--(2.5,1);
			\draw [->](3.08,1)--(3.5,1);
			\draw [->](4.08,0)--(4.5,0);
			\draw [->](4.08,1)--(4.5,1);
			\node [below] at (1.5,0) {$Y_1$};
			\node [left] at (1,0.5) {$X_1$};
			\node [below] at (3.5,1) {$Y_2$};
			\node [above] at (3,1.5) {$X_2$};
			\node [below] at (5.5,0) {$Y_3$};
			\node [left] at (5,0.5) {$X_3$};
			\node [left] at (0,0) {$1$};
			\node [left] at (0,1) {$2$};
			\node [right] at (7,0) {$\cdots$};
			\node [right] at (7,1) {$\cdots$};
			\draw (5,0.08)--(5,1-0.08);
			\draw (5.08,0)--(7,0);
			\draw (5,0) circle[radius=0.08];
			\filldraw (5,1) circle[radius=0.08];
			\draw [->](5,0.08)--(5,0.5);
			\draw (5,1)--(6-0.08,1);
			\draw [->](5.08,1)--(5.5,1);
			\draw (6.08,1)--(7,1);
			\draw (6,1) circle[radius=0.08];
			\draw (6,1-0.08)--(6,0.08);
			\filldraw (6,0)circle[radius=0.08];
			\draw [->](5.08,0)--(5.5,0);
			\draw [->](6.08,0)--(6.5,0);
			\draw [->](6.08,1)--(6.5,1);
			\draw [->](6,1-0.08)--(6,0.5);
			\draw (8,0)--(9,0);
			\draw [->](8,0)--(8.5,0);
			\draw (8,1)--(9,1);
			\draw [->](8,1)--(8.5,1);
			\node [above] at(8.5,1){$Z$};
			\node [below] at(8.5,0){$Z$};
		\end{tikzpicture}
		\caption{The new edge weights}
		\label{net2}
	\end{figure}
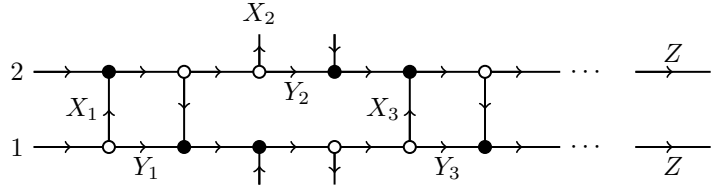
	\newline
 According to the rule of gauge transformation again, edge weights in \fig \ref{net2} are given by
	\begin{align}\label{xy2}
		\begin{split}
			X_i&=z_{i-1}a_ic_{i-1}^{-1}z_{i-1}^{-1},\quad
			i\geq 2,\\
			Y_i&=z_{i-1}b_ic_i^{-1}d_i^{-1}c_{i-1}^{-1}z_{i-1}^{-1},\quad
			i\geq 2,\\
			Z&=z_Nc_N,
		\end{split}
	\end{align}
	where $z_i=d_1c_1\cdots d_{i-1}c_{i-1}d_i$ for $i=1,2,\ldots,N$.
	Note that indices are not read cyclically with period $N$, but conjugate by $Z$, i.e. 
	\begin{align*}
		X_{i+N}=ZX_iZ^{-1},
	\end{align*}
	which gives us
	\begin{align*}
		X_1&=Z^{-1}X_{N+1}Z=Z^{-1}z_Na_1c_N^{-1}z_{N-1}^{-1}Z=c_N^{-1}a_1,\\
		Y_1&=Z^{-1}Y_{N+1}Z=Z^{-1}z_Nb_1c_1^{-1}d_1^{-1}c_N^{-1}z_{N-1}^{-1}Z=c_N^{-1}b_1c_1^{-1}d_1^{-1}.
	\end{align*}

	\begin{theorem}
		The $H_0$-Poisson brackets for the edge weights in \fig \ref{net2} are
		\begin{align}\label{poi}
			\begin{split}
				\langle Y_i,X_i\rangle=Y_iX_i,\quad
				\langle X_{i+1}, Y_i\rangle=X_{i+1}Y_i,\quad
				\langle Y_{i+1},Y_i\rangle=Y_{i+1}Y_i,
			\end{split}
		\end{align}
		with exceptions
		\begin{align}\label{poi_e}
			\langle X_1,Y_N\rangle=X_1Z^{-1}Y_NZ,\quad
			\langle Y_1,Y_N\rangle=Y_1Z^{-1}Y_NZ.
		\end{align}
	\end{theorem}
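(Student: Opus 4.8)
The plan is to compute each induced bracket $\langle F,G\rangle=\mu\dl F,G\dr$ in two stages: first assemble the double bracket $\dl F,G\dr\in\mr\otimes\mr$ from the elementary data \eqref{poisson_abcd}, then apply the multiplication $\mu$. From \eqref{poisson_abcd} together with the antisymmetry $\dl a,b\dr=-\dl b,a\dr^\tau$ one records all nonzero elementary double brackets among $a_i,b_i,c_i,d_i$ for fixed $i$, namely the four listed ones and their transposes, while every bracket between variables attached to distinct squares vanishes and every self-bracket vanishes. The two derivation rules for $\dl-,-\dr$ and the inversion formulas \eqref{eq_inverse} then reduce the computation of $\dl X_i,X_j\dr$, $\dl X_i,Y_j\dr$, $\dl Y_i,Y_j\dr$ to a finite sum of elementary brackets, after substituting the explicit expressions \eqref{xy2}.

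First I would treat the brackets with the ``frame'' variables $z_i$. Using the recursion $z_i=z_{i-1}c_{i-1}d_i$ and the Leibniz rule one obtains $\dl z_i,-\dr$ and $\dl -,z_i\dr$ on the generators, and via \eqref{eq_inverse} the corresponding brackets with $z_i^{-1}$. The key structural point is that $z_{i-1}$ contains $d_{i-1}$ but not $c_{i-1}$, while both $X_i$ and $Y_i$ carry an explicit factor $c_{i-1}^{-1}$; since $\dl c_{i-1},d_{i-1}\dr=\tfrac12 d_{i-1}c_{i-1}\otimes 1\ne 0$, the conjugation $z_{i-1}(\cdots)z_{i-1}^{-1}$ does not merely act by an overall similarity and must be expanded carefully. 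I expect, however, that the contributions of the two copies of $z_{i-1}$ (one from each factor in a product) combine so that, after applying $\mu$, the $z$-dependence drops out of the three generic brackets.

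With this in hand I would compute the three generic relations in \eqref{poi}. For $\langle Y_i,X_i\rangle$ both arguments live in the $i$-th square, and the only elementary brackets that survive are those coming from the pairs $(b_i,a_i)$ and $(a_i,d_i)$ inside $X_i$ and $Y_i$, together with the $c_{i-1}$--$d_{i-1}$ coupling through $z_{i-1}$; collecting these and applying $\mu$ should yield $Y_iX_i$. For $\langle X_{i+1},Y_i\rangle$ and $\langle Y_{i+1},Y_i\rangle$ the coupling is between adjacent squares and enters only through the shared letters $c_i,d_i$ and the overlapping part of the frames $z_i$; the same bookkeeping gives $X_{i+1}Y_i$ and $Y_{i+1}Y_i$ respectively. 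Throughout, I would use that all remaining cross-index elementary brackets vanish, so that only nearest-neighbour interactions contribute.

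Finally I would obtain the exceptional relations \eqref{poi_e}. Because indices are read on the cylinder with the wrap-around rule $X_{i+N}=ZX_iZ^{-1}$ rather than periodically, the pairs $(X_1,Y_N)$ and $(Y_1,Y_N)$ are the images under $Z^{-1}(\cdot)Z$ of the generic pattern at index $N$; equivalently one may substitute $X_1=c_N^{-1}a_1$ and $Y_1=c_N^{-1}b_1c_1^{-1}d_1^{-1}$ and track the bracket of $Y_N$ with the factor $c_N^{-1}$, which reappears in $Z=z_Nc_N$. Either route produces the residual conjugation $Z^{-1}Y_NZ$ that distinguishes \eqref{poi_e} from \eqref{poi}. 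The main obstacle, I anticipate, is precisely this $z$- and $Z$-conjugation bookkeeping: keeping the tensor legs of each elementary bracket in the correct order through the long products $z_{i-1}$, and verifying the cancellations that make the generic brackets log-canonical while leaving the genuine $Z$-twist in the boundary cases.
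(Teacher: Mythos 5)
Your computational skeleton --- reduce everything to the elementary double brackets \eqref{poisson_abcd} via the Leibniz rules and the inversion formulas \eqref{eq_inverse}, observe that only nearest-neighbour squares interact, and handle the wrap-around through $Z$ --- is the same as the paper's. The genuine divergence, and the place where your plan has a gap, is your treatment of the conjugators $z_{i-1}$ in \eqref{xy2}. You propose to expand them explicitly and you ``expect'' that their contributions cancel after applying $\mu$, i.e.\ that the $z$-dependence drops out in $\A$. That expectation is false at the level of $\A$: even in the idealized situation where $z$ commutes with the bracket (it does not here, since $z_{i-1}$ ends in $d_{i-1}$ and $\dl c_{i-1},d_{i-1}\dr\neq 0$), one only gets $\{z y z^{-1}, z x z^{-1}\}=z\{y,x\}z^{-1}$, not $\{y,x\}$; in the actual situation there are additional terms as well. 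What is true is that all such discrepancies are commutators, so the cancellation you anticipate happens only in the cyclic space $\A^\natural$, where the $H_0$-bracket actually lives. The clean justification is the identity $a^\natural=(bab^{-1})^\natural$, which makes $\langle-,-\rangle$ invariant under conjugation of either argument; this is precisely the first line of the paper's proof. Invoking it, one replaces $X_i,Y_i$ by the de-conjugated representatives $x_i=a_ic_{i-1}^{-1}$, $y_i=b_ic_i^{-1}d_i^{-1}c_{i-1}^{-1}$ \emph{before} computing anything, and the generic relations \eqref{poi} then follow from a handful of elementary brackets with no $z$-bookkeeping whatsoever; the one conjugated piece that survives, $x_id_j^{-1}c_{j-1}^{-1}y_jc_{j-1}d_j$ for $j=i-1$, is identified with $X_iY_j$ again by cyclicity.

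So you face a choice: either invoke conjugation invariance of the induced bracket up front, in which case your ``main obstacle'' disappears and your proof collapses to the paper's, or carry out the full expansion of the words $z_{i-1}$ and then recognize every residual term as a commutator --- which amounts to re-proving conjugation invariance in this special case, and whose success criterion is equality modulo $[\A,\A]$ rather than in $\A$, a point your write-up does not acknowledge. Note also that the stated identities \eqref{poi} and \eqref{poi_e} only make sense as equalities in $\A^\natural$, so working modulo commutators is not optional. Your handling of the exceptional brackets \eqref{poi_e}, by substituting $X_1=c_N^{-1}a_1$, $Y_1=c_N^{-1}b_1c_1^{-1}d_1^{-1}$ and tracking the coupling of $Y_N$ with the factor $c_N^{-1}$ inside $Z=z_Nc_N$, is exactly the paper's route and is fine, again provided the comparison is made in the cyclic space.
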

	\begin{proof}

		Since $a^\natural=(bab^{-1})^\natural$,
		we have $\langle a,c\rangle=\langle bab^{-1},c\rangle$.
		So we will compute the brackets of the conjugate elements
		\begin{align*}
			x_i=a_ic_{i-1}^{-1},\quad
			y_i=b_ic_i^{-1}d_i^{-1}c_{i-1}^{-1},\quad
			i\geq 2.
		\end{align*}
		For example,
		we have
		\begin{align*}
			\langle Y_j,X_i\rangle=\langle y_j,x_i\rangle.
		\end{align*}
		We need to calculate their double brackets
		\begin{align*}
			\dl y_j,x_i\dr
			=&\dl b_jc_j^{-1}d_j^{-1}c_{j-1}^{-1},a_ic_{i-1}^{-1} \dr\\
			=&
			\frac{1}{2}\delta_{i,j}
			\left( b_jc_j^{-1}d_j^{-1}c_{j-1}^{-1}\otimes a_ic_{i-1}^{-1}
			+
			a_ic_{i-1}^{-1}\otimes
			b_jc_j^{-1}d_j^{-1}c_{j-1}^{-1}  \right)\\
			&-\delta_{i-1,j}
			a_ic_{i-1}^{-1}d_j^{-1}c_{j-1}^{-1}\otimes
			b_jc_j^{-1},
		\end{align*}
		where we use the identities \eqref{eq_inverse}.
		Therefore, induced bracket is
		\begin{align*}
			\langle y_j,x_i\rangle
			=\delta_{i,j}y_jx_i-\delta_{i-1,j}x_id_j^{-1}c_{j-1}^{-1}y_jc_{j-1}d_j,
		\end{align*}
		which yields
		\begin{align*}
			\langle Y_j,X_i\rangle
			=\delta_{i,j}Y_jX_i
			-\delta_{i-1,j}X_{i}Y_j.
		\end{align*}
		Calculations of $\langle Y_i,Y_j\rangle$ and $\langle X_i,X_j\rangle$ are similar.
		
		For the exceptions,
		we have
		\begin{align*}
			\langle X_1,Y_N\rangle=\langle c_N^{-1}a_1,b_{N}c_N^{-1}d_N^{-1}c_{N-1}^{-1}\rangle,
		\end{align*}
		whose corresponding double bracket is
		\begin{align*}
			\dl c_N^{-1}a_1,b_{N}c_N^{-1}d_N^{-1}c_{N-1}^{-1}\dr
			=\frac{1}{2}d_Na_1^{-1}\otimes
			c_Na_1^{-1}b_N^{-1}a_N^{-1}
			+\frac{1}{2}d_Na_1^{-1}c_Na_1^{-1}\otimes
			b_N^{-1}a_N^{-1},
		\end{align*}
		which yields the induced bracket
		\begin{align*}
			\langle X_1,Y_N\rangle
			=X_1Z^{-1}Y_NZ.
		\end{align*}
		The second equation in \eqref{poi_e} could be similarly verified.
	\end{proof}
	
	Well prepared, we consider Postnikov moves on the non-commutative network over a cylinder,
	which can be treated as a discrete time evolution in the non-commutative leapfrog map.
	Taking Postnikov moves in \fig \ref{pm},
	the new edge weights are given by
	\begin{align}\label{ab_tr}
		\tilde{a}_i=d_ic_if_i^{-1},\quad
		\tilde{b}_i=f_i,\quad
		\tilde{c}_i=f_i^{-1}a_id_i,\quad
		\tilde{d}_i=d_ic_if_i^{-1}b_ic_i^{-1},
	\end{align}
	where $f_i=b_i+a_id_ic_i$.
	The results are pictured in Fig \ref{edge}.
	\begin{figure}[htbp]
		\begin{tikzpicture}[line width=0.75pt]
			\draw [dashed] (0,0) circle[radius=2];
			\draw (-1.736,-1)--(-1.1,-1);
			\draw (-0.9,-1)--(1.736,-1);
			\draw (-1,-1) circle[radius=0.1];
			\filldraw (-1,1) circle[radius=0.1];
			\draw (-1,-1) circle[radius=0.1];
			\filldraw (1,-1) circle[radius=0.1];
			\draw (-1,-1) circle[radius=0.1];
			\draw (1,1) circle[radius=0.1];
			\draw (-1.736,1)--(0.9,1);
			\draw (1.1,1)--(1.736,1);
			\draw (-1,-0.9)--(-1,1);
			\draw (1,-1)--(1,0.9);
			\node [left] at (-1,0) {$\tilde{c}_{i-1}$};
			\node [right] at (1,0) {$\tilde{a}_{i+1}$};
			\node [above] at (0,1) {$\tilde{b}_{i}$};
			\node [below] at (0,-1) {$\tilde{d}_{i}$};
			\draw[->] (-0.9,-1)--(0,-1);
			\draw[->] (-1,1)--(0,1);
			\draw[->] (-1,-0.9)--(-1,0);
			\draw[->] (1,0.9)--(1,0);
		\end{tikzpicture}
		\caption{Edge weights after Postnikov moves}
		\label{edge}
	\end{figure}
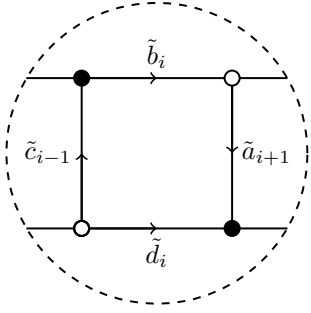 
	\newline
	For edge weights after Postnikov moves, they satisfy the following double brackets
	\begin{align}\label{poisson_abcd1}
		\begin{split}
			& \dl\widetilde{b}_i, \widetilde{a}_i\dr=\frac{1}{2}(\widetilde{a}_i \widetilde{b}_i \otimes 1), \quad
			\dl\widetilde{a}_i, \widetilde{d}_i\dr=\frac{1}{2}(\widetilde{a}_i \otimes \widetilde{d}_i), \\
			& \dl\widetilde{b}_i, \widetilde{c}_i\dr=\frac{1}{2}(1 \otimes \widetilde{b}_i \widetilde{c}_i) ,
			\quad
			\dl\widetilde{c}_i, \widetilde{d}_i\dr=\frac{1}{2}(\widetilde{d}_i \otimes \widetilde{c}_i).
		\end{split}
	\end{align}
	Then we take a similar gauge transformation in \fig \ref{net2},
	so that we set as many edge weights as possible equal to $1$.
	The corresponding new weights are
	\begin{align}\label{tildexy}
		\tilde{X}_i=\xi_{i-1}\tilde{c}_{i-1}\tilde{a}_i^{-1}\xi_{i-1}^{-1},\quad
		\tilde{Y}_i=\xi_{i-1}\tilde{d}_{i}\tilde{a}_{i+1}^{-1}\tilde{b}_{i}^{-1}\tilde{a}_i^{-1}\xi_{i-1}^{-1},
	\end{align}
	where $\xi_i=\tilde{a}_1\tilde{b}_1\cdots \tilde{a}_i\tilde{b}_i$.
	According to \eqref{ab_tr},
	we have $\xi_N=Z=d_1c_1\cdots d_Nc_N$.
	\begin{proposition}	
		The Postnikov moves could be described as the following dynamics
		\begin{align}
			\begin{split}\label{xy_eq}
				\tilde{X}_i&=(X_{i-1}+Y_{i-1})^{-1}X_{i-1}(X_{i}+Y_{i}),\\
				\tilde{Y}_i&=(X_{i}+Y_{i})^{-1}Y_{i}(X_{i+1}+Y_{i+1}),\\
				\tilde{Z}&=Z,
			\end{split}
		\end{align}
		where $(X_i,Y_i,Z)$ are edge weights before the move, and $(\tilde{X}_i,\tilde{Y}_i,\tilde{Z})$ are edge weights after the move.
	\end{proposition}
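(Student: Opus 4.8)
The plan is to verify the three identities in \eqref{xy_eq} by direct substitution, reducing everything to words in the face weights $a_i,b_i,c_i,d_i$ and exploiting two telescoping identities that make all the conjugations cancel. First I would record the two structural observations. On the one hand, the Postnikov weights \eqref{ab_tr} satisfy $\tilde a_j\tilde b_j=(d_jc_jf_j^{-1})f_j=d_jc_j$, so the gauge factor $\xi_i=\tilde a_1\tilde b_1\cdots\tilde a_i\tilde b_i$ of \eqref{tildexy} telescopes to $\xi_i=d_1c_1\cdots d_ic_i$; comparing with $z_i=d_1c_1\cdots d_{i-1}c_{i-1}d_i$ from \eqref{xy2} yields the two relations $\xi_{i-1}=z_{i-1}c_{i-1}$ and $z_i=z_{i-1}c_{i-1}d_i$, which are exactly what is needed to line up the conjugating prefixes and suffixes. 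On the other hand, using $f_i=b_i+a_id_ic_i$ one checks the key factorization
\[
X_i+Y_i=z_{i-1}f_ic_i^{-1}d_i^{-1}c_{i-1}^{-1}z_{i-1}^{-1},
\]
obtained by pulling $c_{i-1}^{-1}z_{i-1}^{-1}$ out on the right of \eqref{xy2} and rewriting $a_i+b_ic_i^{-1}d_i^{-1}=(b_i+a_id_ic_i)c_i^{-1}d_i^{-1}=f_ic_i^{-1}d_i^{-1}$. This single identity turns every sum $X_i+Y_i$ appearing in \eqref{xy_eq} into a $z_{i-1}$-conjugate of the Postnikov face weight $f_i$.

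With these in hand I would compute the right-hand side of the first equation of \eqref{xy_eq}. Substituting the factorization for $X_{i-1}+Y_{i-1}$ and its inverse, together with $X_{i-1}$ from \eqref{xy2}, the interior conjugating factors collapse via $z_{i-2}c_{i-2}d_{i-1}=z_{i-1}$ and $z_{i-2}^{-1}z_{i-1}=c_{i-2}d_{i-1}$, leaving
\[
(X_{i-1}+Y_{i-1})^{-1}X_{i-1}(X_i+Y_i)=z_{i-1}c_{i-1}\,f_{i-1}^{-1}a_{i-1}d_{i-1}f_ic_i^{-1}d_i^{-1}\,c_{i-1}^{-1}z_{i-1}^{-1}.
\]
On the other side, \eqref{ab_tr} gives $\tilde c_{i-1}\tilde a_i^{-1}=f_{i-1}^{-1}a_{i-1}d_{i-1}\cdot f_ic_i^{-1}d_i^{-1}$ directly, and conjugating by $\xi_{i-1}=z_{i-1}c_{i-1}$ reproduces precisely the same expression, establishing $\tilde X_i=(X_{i-1}+Y_{i-1})^{-1}X_{i-1}(X_i+Y_i)$. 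The second identity is entirely parallel: substituting the factorization for $(X_i+Y_i)^{-1}$ and $(X_{i+1}+Y_{i+1})$ and using $z_{i-1}^{-1}z_i=c_{i-1}d_i$ together with $z_i^{-1}=d_i^{-1}c_{i-1}^{-1}z_{i-1}^{-1}$, the right-hand side reduces to $\xi_{i-1}\,\tilde d_i\tilde a_{i+1}^{-1}\tilde b_i^{-1}\tilde a_i^{-1}\,\xi_{i-1}^{-1}=\tilde Y_i$. The third identity $\tilde Z=Z$ follows at once from the telescoping $\tilde a_j\tilde b_j=d_jc_j$, which gives $\xi_N=d_1c_1\cdots d_Nc_N=z_Nc_N=Z$, the invariance of the monodromy already recorded below \eqref{tildexy}.

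Finally I would treat the boundary indices $i=1$ and $i=N$ separately, since the labels there obey the twist $X_{i+N}=ZX_iZ^{-1}$ rather than being read cyclically; in those cases the expressions $X_1=c_N^{-1}a_1$ and $Y_1=c_N^{-1}b_1c_1^{-1}d_1^{-1}$ derived above are used in place of the generic formula, and the same cancellations go through at the cost of one extra conjugation by $Z$. The only genuinely delicate point in the whole argument is the bookkeeping: each term is a long word conjugated by $z_{i-1}$ or $\xi_{i-1}$, and the proof hinges on applying the three cancellation relations $\xi_{i-1}=z_{i-1}c_{i-1}$, $z_i=z_{i-1}c_{i-1}d_i$, and the factorization of $X_i+Y_i$ in the correct non-commutative order. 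I expect no conceptual obstacle beyond carrying these substitutions out consistently.
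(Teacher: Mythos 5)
Your proof is correct, and it is worth noting that the paper states this proposition with no proof at all, so your computation supplies exactly the verification the paper leaves implicit. The two structural identities you isolate are the right ones and both check out: since $\tilde a_j\tilde b_j=(d_jc_jf_j^{-1})f_j=d_jc_j$, the gauge factor telescopes to $\xi_{i-1}=d_1c_1\cdots d_{i-1}c_{i-1}=z_{i-1}c_{i-1}$ and $z_i=z_{i-1}c_{i-1}d_i$; and pulling $z_{i-1}$ out on the left and $c_{i-1}^{-1}z_{i-1}^{-1}$ out on the right of \eqref{xy2} gives
\begin{align*}
X_i+Y_i=z_{i-1}\left(a_i+b_ic_i^{-1}d_i^{-1}\right)c_{i-1}^{-1}z_{i-1}^{-1}
=z_{i-1}\,f_ic_i^{-1}d_i^{-1}\,c_{i-1}^{-1}z_{i-1}^{-1},
\end{align*}
using $f_i=b_i+a_id_ic_i$. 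I reproduced your reduction of $(X_{i-1}+Y_{i-1})^{-1}X_{i-1}(X_i+Y_i)$ to $z_{i-1}c_{i-1}f_{i-1}^{-1}a_{i-1}d_{i-1}f_ic_i^{-1}d_i^{-1}c_{i-1}^{-1}z_{i-1}^{-1}$, which agrees with $\xi_{i-1}\tilde c_{i-1}\tilde a_i^{-1}\xi_{i-1}^{-1}=\tilde X_i$ because $\tilde c_{i-1}\tilde a_i^{-1}=f_{i-1}^{-1}a_{i-1}d_{i-1}\cdot f_ic_i^{-1}d_i^{-1}$ by \eqref{ab_tr}; likewise the cancellation $\tilde b_i^{-1}\tilde a_i^{-1}=f_i^{-1}\cdot f_ic_i^{-1}d_i^{-1}=c_i^{-1}d_i^{-1}$ makes the second identity close up, and $\tilde Z=Z$ is precisely the paper's remark $\xi_N=Z$ below \eqref{tildexy}. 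The only point I would tighten is the boundary bookkeeping: you should say explicitly that the face weights $a_i,b_i,c_i,d_i$ are taken $N$-periodic, so that $\tilde a_{N+1}=\tilde a_1$ and the twisted convention $\tilde X_{i+N}=Z\tilde X_iZ^{-1}$ applies to the tilde variables as well; then, for instance at $i=1$, one computes $(X_0+Y_0)^{-1}X_0(X_1+Y_1)=f_N^{-1}a_Nd_N\,f_1c_1^{-1}d_1^{-1}=\tilde c_N\tilde a_1^{-1}=Z^{-1}\tilde X_{N+1}Z=\tilde X_1$, confirming your claim that the extra conjugation by $Z$ cancels. With that made explicit, the argument is complete.
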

	It should be noted that after imposing the $N$-twisted condition,
	equations \eqref{leap_a} and \eqref{leap_b} for the non-commutative leapfrog map 
	is the same as \eqref{xy_eq}. Moreover, we could show that the $H_0$-poisson bracket defined over the networks is invariant under the leapfrog map.
	\begin{theorem}
		The brackets \eqref{poi} and \eqref{poi_e} are invariant under the leapfrog map $T$.
	\end{theorem}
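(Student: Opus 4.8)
The plan is to verify the five relations \eqref{poi}--\eqref{poi_e} with every weight replaced by its image under the move, that is, to establish $\langle\tilde Y_i,\tilde X_i\rangle=\tilde Y_i\tilde X_i$, $\langle\tilde X_{i+1},\tilde Y_i\rangle=\tilde X_{i+1}\tilde Y_i$, $\langle\tilde Y_{i+1},\tilde Y_i\rangle=\tilde Y_{i+1}\tilde Y_i$, together with the twisted exceptions $\langle\tilde X_1,\tilde Y_N\rangle=\tilde X_1\tilde Z^{-1}\tilde Y_N\tilde Z$ and $\langle\tilde Y_1,\tilde Y_N\rangle=\tilde Y_1\tilde Z^{-1}\tilde Y_N\tilde Z$. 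Since $\tilde Z=Z$ by \eqref{xy_eq}, the conjugating monodromy is unchanged, so it is enough to reproduce the bulk relations and the two $Z$-twisted boundary relations. Rather than push the dynamics \eqref{xy_eq} through the double brackets of the $X_i,Y_i$, I would work on the post-move network directly: the formulas \eqref{tildexy} express $\tilde X_i,\tilde Y_i$ in terms of the post-move edge weights $\tilde a_i,\tilde b_i,\tilde c_i,\tilde d_i$, whose double brackets are recorded in \eqref{poisson_abcd1}, and \eqref{tildexy} has exactly the same shape as \eqref{xy2}. This reduces the claim to the same local computation that produced \eqref{poi}--\eqref{poi_e} in the first place, now carried out with \eqref{poisson_abcd1} in place of \eqref{poisson_abcd}.

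The first simplification is that the induced $H_0$-bracket depends only on cyclic classes, $a^\natural=(bab^{-1})^\natural$, so the gauge factors $\xi_{i-1}$ in \eqref{tildexy} may be discarded exactly as in the proof of the preceding theorem; it suffices to compute with the conjugate representatives $\tilde x_i=\tilde c_{i-1}\tilde a_i^{-1}$ and $\tilde y_i=\tilde d_i\tilde a_{i+1}^{-1}\tilde b_i^{-1}\tilde a_i^{-1}$. I would then evaluate the double brackets $\dl\tilde y_j,\tilde x_i\dr$, $\dl\tilde y_j,\tilde y_i\dr$ and $\dl\tilde x_j,\tilde x_i\dr$ by expanding each product through the Leibniz rule for $\dl-,-\dr$ and resolving every inverse with \eqref{eq_inverse}; only the adjacent-index pairs $j=i$ and $j=i-1$ survive, because \eqref{poisson_abcd1} is local and the representatives of neighbouring squares share only the weights $\tilde c_{i-1}$ and $\tilde a_i$. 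Composing with the multiplication $\mu$ and projecting onto $\A^\natural$ should collapse these to $\langle\tilde Y_j,\tilde X_i\rangle=\delta_{i,j}\tilde Y_j\tilde X_i-\delta_{i-1,j}\tilde X_i\tilde Y_j$, and the analogous $Y$-$Y$ and $X$-$X$ identities, which together are precisely the bulk relations \eqref{poi}.

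The two exceptional brackets need separate care, since indices are not periodic but twisted, $\tilde X_{i+N}=\tilde Z\tilde X_i\tilde Z^{-1}$ with $\tilde Z=Z$ and $\xi_N=Z$ by \eqref{ab_tr}. For $i=1$ and $j=N$ the relevant conjugate representatives are linked across the cut and involve $\tilde c_N$, equivalently $Z$; running the same expansion then produces an extra conjugation by $Z$, giving $\langle\tilde X_1,\tilde Y_N\rangle=\tilde X_1 Z^{-1}\tilde Y_N Z$ and $\langle\tilde Y_1,\tilde Y_N\rangle=\tilde Y_1 Z^{-1}\tilde Y_N Z$, which I would confirm by the explicit boundary substitution used in the preceding theorem. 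The main obstacle is exactly this boundary and monodromy bookkeeping: one must track how the cut weight $Z$ enters the post-move representatives and check that precisely the factor $Z^{-1}(\,\cdot\,)Z$ of \eqref{poi_e}, with no residual gauge term, reappears. Conceptually the statement is an instance of the invariance of the network double bracket under Postnikov moves and gauge transformations from \cite{Non,arthamonov18}, so an alternative and more structural organization is to invoke that invariance together with the observation that the post-move network is of the same combinatorial type on the cylinder, which forces the induced brackets into the same functional form \eqref{poi}--\eqref{poi_e}.
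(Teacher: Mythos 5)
Your proposal is correct and follows essentially the same route as the paper: the paper likewise works on the post-move network directly, passing to the conjugate representatives $\tilde{x}_i=\tilde{c}_{i-1}\tilde{a}_i^{-1}$, $\tilde{y}_i=\tilde{d}_{i}\tilde{a}_{i+1}^{-1}\tilde{b}_{i}^{-1}\tilde{a}_i^{-1}$ via the cyclic identification $a^\natural=(bab^{-1})^\natural$, computing the double brackets from \eqref{poisson_abcd1} with the Leibniz rule and \eqref{eq_inverse}, and recovering the bulk and $Z$-twisted relations \eqref{poi} and \eqref{poi_e}. The only cosmetic difference is your closing remark about invoking Postnikov-move invariance abstractly, which the paper does not use.
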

	\begin{proof}
		Firstly,
		we define two conjugate variables of $\tilde{X}_i$, $\tilde{Y}_i$ by
		\begin{align*}
			\tilde{x}_i=\xi_{i-1}^{-1}\tilde{X}_i\xi_{i-1}=\tilde{c}_{i-1}\tilde{a}_i^{-1},\quad
			\tilde{y}_i=\xi_{i-1}^{-1}\tilde{Y}_i\xi_{i-1}=\tilde{d}_{i}\tilde{a}_{i+1}^{-1}\tilde{b}_{i}^{-1}\tilde{a}_i^{-1}.
		\end{align*}
		Then we compute three possible double brackets for $\dl \tilde{x}_i,\tilde{x}_j\dr$, $\dl \tilde{x}_i,\tilde{y}_j\dr$ and $\dl \tilde{y}_i,\tilde{y}_j\dr$.
		For $\dl \tilde{x}_i,\tilde{x}_j\dr$,
		a simple calculation by using the Leibniz rules and \eqref{poisson_abcd1} gives
		\begin{align*}
			\dl \tilde{x}_i,\tilde{x}_j\dr=0.
		\end{align*}
		The double brackets for $\tilde{y}_i$ and $\tilde{y}_j$ are
		\begin{align*}
			\dl \tilde{y}_i,\tilde{y}_j \dr
			=&-\frac{1}{2}\delta_{i+1,j}\left(
			\tilde{a}_{i+1}\tilde{d}_i^{-1}\tilde{y}_i
			\otimes
			\tilde{d}_i\tilde{a}_{i+1}^{-1}\tilde{y}_j
			+\tilde{y}_j\tilde{b}_i^{-1}\tilde{a}_i^{-1}\otimes
			\tilde{y}_i\tilde{a}_i\tilde{b}_i
			\right)\\
			&+\frac{1}{2}\delta_{i,j+1}\left(
			\tilde{d}_j\tilde{a}_{j+1}^{-1}\tilde{y}_i\otimes
			\tilde{a}_{j+1}\tilde{d}_j^{-1}\tilde{y}_j+
			\tilde{y}_j\tilde{a}_j\tilde{b}_j\otimes
			\tilde{y}_i\tilde{b}_j^{-1}\tilde{a}_j^{-1}
			\right),
		\end{align*}
		which induces
		\begin{align*}
			\langle
			\tilde{y}_i,\tilde{y}_j\rangle
			=-\delta_{i+1,j}\tilde{y}_i\tilde{a}_i\tilde{b}_i\tilde{y}_j\tilde{b}_i^{-1}\tilde{a}_i^{-1}
			+\delta_{i,j+1}\tilde{y}_j\tilde{a}_j\tilde{b}_j\tilde{y}_i\tilde{b}_j^{-1}\tilde{a}_j^{-1}.
		\end{align*}
		Therefore,
		we have
		\begin{align*}
			\langle \tilde{Y}_i,\tilde{Y}_j\rangle
			=\left(
			\delta_{i,j+1}-\delta_{i+1,j}\right)\tilde{Y}_i\tilde{Y}_j.
		\end{align*}
		The rest are similar, and thus we have checked that the bracket relations are invariant under the leapfrog map.
	\end{proof}
	
	Then we want to demonstrate the boundary measurement matrix and re-derive its Lax integrability. 
	Let's cut the cylinder as in Fig. \ref{cut}.
	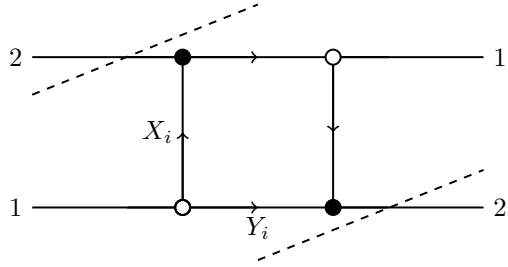
\begin{figure}[htbp]
		\begin{tikzpicture}[line width=0.75pt]
			\draw (-1.736,-1)--(-1.1,-1);
			\draw (-0.9,-1)--(1.736,-1);
			\draw (-1,-1) circle[radius=0.1];
			\filldraw (-1,1) circle[radius=0.1];
			\draw (-1,-1) circle[radius=0.1];
			\filldraw (1,-1) circle[radius=0.1];
			\draw (-1,-1) circle[radius=0.1];
			\draw (1,1) circle[radius=0.1];
			\draw (-1.736,1)--(0.9,1);
			\draw (1.1,1)--(1.736,1);
			\draw (-1,-0.9)--(-1,1);
			\draw (1,-1)--(1,0.9);
			\node [left] at (-1,0) {$X_i$};
			\node [below] at (0,-1) {$Y_i$};
			\draw[->] (-0.9,-1)--(0,-1);
			\draw[->] (-1,1)--(0,1);
			\draw[->] (-1,-0.9)--(-1,0);
			\draw[->] (1,0.9)--(1,0);
			\draw (-3,-1)--(-1.1,-1);
			\draw (-3,1)--(-1.1,1);
			\draw (3,-1)--(1.1,-1);
			\draw (3,1)--(1.1,1);
			\node [left] at (-3,-1) {$1$};
			\node [left] at (-3,1) {$2$};
			\node [right] at (3,-1) {$2$};
			\node [right] at (3,1) {$1$};
			\draw[dashed] (-3,0.5)--(0,1.7);
			\draw[dashed] (0,-1.7)--(3,-0.5);
		\end{tikzpicture}
		\caption{The cut in one  elementary network}
		\label{cut}
	\end{figure} 
	\newline The corresponding boundary measurement matrix of the $i$-th elementary network is 
	\begin{align*}
		\mathscr{B}_i(\mu)
		=\begin{pmatrix}
			X_i&\mu(X_i+Y_i)\\
			\mu&\mu^2
		\end{pmatrix}.
	\end{align*}
	The boundary measurement matrix of the entire network is the product 
	\begin{align*}
		\mathscr{B}(\mu):=\mathscr{B}_1(\mu)\cdots\mathscr{B}_N(\mu).
	\end{align*}
	Let $\tr(\mathscr{B}(\mu)^i)=\sum_j t_{i,j}\mu^j $,
	where $t_{i,j}$ can be written as non-commutative polynomials of $X_i$ and $Y_i$.
	According to \cite{Non},
	these functions $t_{i,j}$ form an involutive family with respect to this induced Poisson structure.
	Finally,
	we give a Lax matrix for the leapfrog map.
	\begin{proposition}
		The boundary measurement matrix $\mathscr{B}_i$ can be written as
		\begin{align*}
			\mathscr{B}_i=A_iL_{i}A_{i+1}^{-1},
		\end{align*}
		where
		\begin{align*}
			L_i=\begin{pmatrix}
				0&\lambda Y_i\\
				1&\lambda+X_{i+1}
			\end{pmatrix},
			\quad
			A_i=\begin{pmatrix}
				\lambda^{-1}&\lambda^{-1}X_i\\
				0&\mu^{-1}
			\end{pmatrix},\quad
			\lambda:=\mu^2.
		\end{align*}
		
	\end{proposition}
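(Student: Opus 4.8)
The plan is to verify the factorization by a single direct matrix computation, so the whole statement reduces to checking that the product $A_iL_iA_{i+1}^{-1}$ reproduces the boundary measurement matrix $\mathscr{B}_i(\mu)$ recorded just above. The only structural input I will use is that the spectral parameter $\mu$—and hence $\lambda=\mu^2$—commutes with every element of $\mr$, exactly as declared for the Lax pair earlier; this is what lets conjugations by powers of $\mu$ collapse, e.g. $\lambda^{-1}X_i\lambda=X_i$ and $\mu^{-1}\lambda=\mu$.

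First I would invert the upper-triangular matrix $A_{i+1}$. Applying the non-commutative formula for the inverse of a triangular matrix, taking care of the ordering in the off-diagonal entry ($-\lambda(\lambda^{-1}X_{i+1})\mu=-X_{i+1}\mu$), gives
\begin{align*}
	A_{i+1}^{-1}=\begin{pmatrix}\lambda & -X_{i+1}\mu\\ 0 & \mu\end{pmatrix}.
\end{align*}
Next I would form $L_iA_{i+1}^{-1}$, where the only place ordering must be watched is the $(2,2)$-entry, in which $-X_{i+1}\mu+(\lambda+X_{i+1})\mu=\lambda\mu$ after cancellation; this yields
\begin{align*}
	L_iA_{i+1}^{-1}=\begin{pmatrix}0 & \lambda Y_i\mu\\ \lambda & \lambda\mu\end{pmatrix}.
\end{align*}
Finally, left-multiplying by $A_i$ and invoking the commutativity of $\mu$ with $\mr$, the $(1,1)$-entry simplifies as $\lambda^{-1}X_i\lambda=X_i$, the $(1,2)$-entry collapses to $Y_i\mu+X_i\mu=\mu(X_i+Y_i)$, and the bottom row becomes $(\mu,\mu^2)$. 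The result is precisely $\mathscr{B}_i(\mu)$.

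Since the verification is entirely mechanical, there is no genuine obstacle; the only point requiring care is the bookkeeping of the non-commutative matrix products and the correct placement of the spectral parameter relative to the edge weights. I would stress that it is exactly the commutativity of $\mu$ with $\mr$ that makes the two conjugation-type simplifications valid, so that the gauge matrices $A_i$ genuinely conjugate the transfer matrices $L_i$ into the boundary measurement matrices, exhibiting $\{L_i\}$ as a Lax representation of the leapfrog dynamics \eqref{xy_eq}.
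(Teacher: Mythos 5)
Your proposal is correct: the paper states this proposition without any written proof, and the intended argument is exactly the direct verification you carry out — invert the triangular gauge matrix $A_{i+1}$, multiply through, and use centrality of the spectral parameter $\mu$ (hence $\lambda=\mu^2$) to collapse the conjugations, recovering $\mathscr{B}_i(\mu)=\begin{pmatrix} X_i & \mu(X_i+Y_i)\\ \mu & \mu^2\end{pmatrix}$. All three matrix computations in your write-up check out, including the careful ordering in the off-diagonal entry of $A_{i+1}^{-1}$ and the cancellation in the $(2,2)$-entry of $L_iA_{i+1}^{-1}$.
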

	Therefore,
	$\mathscr{B}(\lambda)$ is conjugate to the matrix $L(\lambda):=L_1(\lambda)\cdots L_N(\lambda)$.	
	Let $\Phi_i=(\phi_{i},\psi_{i})$ and assume that it satisfies the linear equation
	\begin{align*}
		\Phi_{i+1}=\Phi_iL_i,
	\end{align*}
	which is equivalent to
	\begin{align*}
		\psi_{i}=\phi_{i+1},\quad
		\phi_{i+2}-X_{i+1}\phi_{i+1}=\lambda\left( \phi_{i+1}+Y_i\phi\right).
	\end{align*}
	Therefore $L_i(\lambda)$ can be treated the Lax matrix of the leapfrog map,
	while the corresponding discrete time evoluton is
	\begin{align*}
		T\phi_i=\phi_{i+1}+Y_i\phi.
	\end{align*}
	According to this proposition above,
	$\mathscr{B}(\lambda)$ and $L(\lambda)$ possess the same spectral invariants.
	
	\begin{remark}
		
		To get the complete integrability,
		it is crucial to obtain a maximal family of involutive invariants.
		However, 
		it remains an open question about whether theses invariants are indeed maximal.
	\end{remark}
	
	\section*{Acknowledgement}
	The authors would like to thank Professor Vladimir Retakh for his comments. BW was supported by the National Natural Science Foundation of China (Nos. 12201325 \& 12235007). SHL was supported by the National Natural Science Foundation of China (Nos. 12101432 \& 12175155).

	\appendix\label{appendixa}
	\section*{Appendix A}
	\renewcommand{\thesection}{A} 
	\setcounter{equation}{0}
	\setcounter{theorem}{0}
	\subsection{Basic quasi-determinant identities}\label{AppendixA1}
	
	In the appendix, we list some basic facts that are used in the article. Firstly, we give the definition of a quasi-determinant.
	\begin{definition}\label{qd-def}
		Let $A$ be an $n\times n$ matrix over an associative algebra $\mathcal{R}$.
		For $i, j=1, 2, \dots, n$, let $r_i^j$ be the $i$-th row of $A$ without the $j$-th entry, $c_j^i$ be the $j$-th column without the $i$-th entry, and $A^{i,j}$ be the submatrix of $A$ without the $i$-th row and $j$-th column of $A$.
		Assume that $A^{i,j}$ is invertible.
		Then there are $n^2$ quasi-determinants of $A$, denoted as $|A|_{i,j}$ for $1\leq i, j\leq n$, as follows
		\begin{align*}
			|A|_{i,j}=a_{i,j}-r_i^j\left(A^{i,j}\right)^{-1}c_j^i,
		\end{align*}
		where $a_{i,j}$ is the $(i,j)$-th entry of $A$. For convenience, we denote the quasi-determinant expanded by $(i,j)$-position as
		\begin{align*}
			|A|_{i,j}=\left|\begin{array}{cc}
				A^{i,j}&c_j^i\\
				r_i^j&\boxed{a_{ij}}
			\end{array}
			\right|.
		\end{align*}
	\end{definition}
	As is known, cross-ratio is defined by a linear system of coordinates. To get the definition of non-commutative cross-ratio, we need to solve a linear system with non-commutative coefficients. Thus, 
	the following is to show how to make use of quasi-determinants to solve this problem.
	
	\begin{proposition}(\cite[Thm. 1.6.1]{gelfand05})\label{p-ls}
		Let $A=(a_{i,j})$ be an $n\times n$ matrix over an associate algebra $\mathcal{R}$.
		Assume that all the quasi-determinants $|A|_{i,j}$  are defined and invertible.
		Then
		\begin{align*}
			\left\{\begin{array}{c}
				a_{1,1}x_1+\cdots+a_{1,n}x_n=\xi_1\\
				\vdots\\
				a_{n,1}x_1+\cdots+a_{n,n}x_n=\xi_n\end{array}
			\right.
		\end{align*}
		has a solution $x_i\in\mathcal{R}$ if and only if
		\begin{align*}
			x_i=\sum_{j=1}^n|A|_{j,i}^{-1}\xi_j.
		\end{align*}
	\end{proposition}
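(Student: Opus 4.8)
The plan is to read the system as a single matrix equation $A\mathbf{x}=\boldsymbol{\xi}$, where $\mathbf{x}=(x_1,\dots,x_n)^\top$ and $\boldsymbol{\xi}=(\xi_1,\dots,\xi_n)^\top$, and to reduce this non-commutative Cramer's rule to one structural fact: the entries of the inverse matrix are precisely the inverse quasi-determinants. Concretely, the key lemma I would establish first is
\begin{align*}
	\left(A^{-1}\right)_{ij}=|A|_{ji}^{-1},\qquad 1\le i,j\le n,
\end{align*}
so that, once $A$ is known to be invertible, the unique solution $\mathbf{x}=A^{-1}\boldsymbol{\xi}$ reads component-wise as $x_i=\sum_{j}(A^{-1})_{ij}\xi_j=\sum_{j}|A|_{ji}^{-1}\xi_j$, which is exactly the asserted formula.

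To prove the key lemma I would use the Schur-complement description of the quasi-determinant already recorded in Definition~\ref{qd-def}. Fixing a target entry $(i,j)$ of $A^{-1}$, I conjugate $A$ by permutation matrices that move its $j$-th row and $i$-th column into the last row and last column, producing the block form
\begin{align*}
	\widetilde{A}=\begin{pmatrix} A^{j,i} & c_i^j \\ r_j^i & a_{ji}\end{pmatrix},
\end{align*}
whose bottom-right $1\times1$ Schur complement is $a_{ji}-r_j^i(A^{j,i})^{-1}c_i^j=|A|_{ji}$. The non-commutative block-inversion formula then yields the bottom-right entry of $\widetilde{A}^{-1}$ as $|A|_{ji}^{-1}$, and tracking the permutations (row $j\to$ row $n$, column $i\to$ column $n$) identifies this entry with $(A^{-1})_{ij}$. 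The hypothesis that all quasi-determinants $|A|_{i,j}$ are defined and invertible is exactly what guarantees that the relevant submatrices $A^{j,i}$ and the Schur complement $|A|_{ji}$ are invertible, hence that $A$ itself is invertible and the inversion is legitimate.

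With the lemma in hand, both implications are immediate. If $A$ is invertible then $\mathbf{x}=A^{-1}\boldsymbol{\xi}$ is a solution and substituting the lemma gives the stated expression; conversely, any solution satisfies $A\mathbf{x}=\boldsymbol{\xi}$, and left-multiplication by $A^{-1}$ forces $\mathbf{x}=A^{-1}\boldsymbol{\xi}$, so the formula is both necessary and sufficient and the solution is unique.

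The step I expect to be the genuine obstacle is the careful verification of the key lemma in the non-commutative setting: one must handle the permutation bookkeeping correctly and justify the corner entry of the block inverse without ever commuting factors, since the usual cofactor-with-signs argument is unavailable. An alternative that sidesteps the explicit permutation is an induction on $n$: expand $A^{-1}$ along one row and column via the block formula and invoke the non-commutative Jacobi identity to relate the resulting $(n-1)$-sized quasi-determinants to the $n$-sized ones. This inductive route is essentially the one taken in the cited source, and I would fall back on it should the direct Schur-complement argument become unwieldy.
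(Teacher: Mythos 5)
The paper itself offers no proof of this proposition: it is quoted verbatim from \cite[Thm.~1.6.1]{gelfand05} as a background fact in the appendix, so there is no internal argument to compare against. Judged on its own merits, your proof is correct and is essentially the standard argument underlying the cited source. Your key lemma $(A^{-1})_{ij}=|A|_{ji}^{-1}$ is precisely the fundamental relation between quasi-determinants and inverse matrices (Theorem~1.2.1 in \cite{gelfand05}, where it is close to being the definition of the quasi-determinant), and once it is in hand the Cramer-type statement follows exactly as you say, by multiplying $A\mathbf{x}=\boldsymbol{\xi}$ on the left by $A^{-1}$ in both directions of the equivalence. The two technical points you flag both check out: the block-inversion formula
\begin{align*}
\begin{pmatrix} P & Q \\ R & S \end{pmatrix}^{-1}
=\begin{pmatrix}
P^{-1}+P^{-1}QTRP^{-1} & -P^{-1}QT\\
-TRP^{-1} & T
\end{pmatrix},
\qquad T=\bigl(S-RP^{-1}Q\bigr)^{-1},
\end{align*}
is verified by direct multiplication to give a two-sided inverse over any associative ring, with no commutativity used; and the permutation bookkeeping works, since with $\widetilde{A}=P_\sigma A P_\tau$ (row $j\to n$, column $i\to n$) one has $\widetilde{A}^{-1}=P_\tau^{-1}A^{-1}P_\sigma^{-1}$, whose $(n,n)$ entry is $(A^{-1})_{ij}$, while the $(n,n)$ Schur complement of $\widetilde{A}$ is by construction $a_{ji}-r_j^i\bigl(A^{j,i}\bigr)^{-1}c_i^j=|A|_{ji}$. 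The hypothesis that all $|A|_{i,j}$ are defined and invertible supplies exactly the invertibility of $A^{j,i}$ and of the Schur complement needed to legitimize the block formula, and hence the invertibility of $A$ itself, so the argument is complete as you outlined it.
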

	
	The following is an equivalent condition for a zero quasi-determinant, which is used in giving constraints for points in $S$ and $S^-$.
	\begin{proposition}(\cite[Prop. 1.4.6]{gelfand05})\label{p-equi}
		The following statements are equivalent if the quasi-determinant $|A|_{ij}$ is defined.\\
		({\romannumeral1})~~$|A|_{ij}$=0;\\
		({\romannumeral2})~~The $i$-th row of the matrix $A$ is a left linear combination of the other rows of $A$.\\
		({\romannumeral3})~~The $j$-th column of the matrix $A$ is a right linear combination of the other columns of $A$.
	\end{proposition}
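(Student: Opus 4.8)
The plan is to prove all three equivalences by routing through statement (i), establishing (i) $\Leftrightarrow$ (ii) and (i) $\Leftrightarrow$ (iii) separately, the second being the column-mirror image of the first. The whole argument is a direct unwinding of the Schur-complement form of the quasi-determinant in Definition \ref{qd-def}, once the distinguished entry has been moved into the corner.

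First I would reduce to the normalized position $i=j=n$. I claim that moving the $i$-th row to the bottom and the $j$-th column to the right, while preserving the relative order of the remaining rows and columns, leaves all three conditions unchanged. Such a permutation merely relabels the ``other rows'' and ``other columns,'' so it does not affect whether the distinguished row is a left linear combination of the rest, nor whether the distinguished column is a right linear combination of the rest. Moreover it carries $a_{ij},\,r_i^j,\,c_j^i,\,A^{i,j}$ to the corresponding data $a'_{nn},\,r'^n_n,\,c'^n_n,\,(A')^{n,n}$ of the new matrix $A'$, with $(A')^{n,n}=A^{i,j}$ exactly, whence $|A'|_{nn}=|A|_{ij}$. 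Thus it suffices to analyze $|A|_{nn}=a_{nn}-r_n^n(A^{n,n})^{-1}c_n^n$, noting that $A^{n,n}$ is invertible precisely because the quasi-determinant is assumed defined.

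Next I would set up the block decomposition for rows. Writing each row as $R_k=(r_k,\,a_{kn})$ with $r_k$ its first $n-1$ entries, the matrix $A^{n,n}$ has rows $r_1,\dots,r_{n-1}$, while $r_n^n=(a_{n1},\dots,a_{n,n-1})$ and $c_n^n=(a_{1n},\dots,a_{n-1,n})^\top$. Condition (ii) asserts the existence of $\lambda=(\lambda_1,\dots,\lambda_{n-1})$ with $R_n=\sum_{k=1}^{n-1}\lambda_k R_k$. Comparing the first $n-1$ coordinates gives $r_n^n=\lambda A^{n,n}$, and comparing the last coordinate gives $a_{nn}=\lambda c_n^n$. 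Invertibility of $A^{n,n}$ forces $\lambda=r_n^n(A^{n,n})^{-1}$ uniquely; substituting into the last relation yields $a_{nn}=r_n^n(A^{n,n})^{-1}c_n^n$, i.e. $|A|_{nn}=0$. Conversely, if $|A|_{nn}=0$ then $\lambda:=r_n^n(A^{n,n})^{-1}$ witnesses (ii), establishing (i) $\Leftrightarrow$ (ii). For (i) $\Leftrightarrow$ (iii) I would run the column-mirror argument: condition (iii) posits $\mu=(\mu_1,\dots,\mu_{n-1})^\top$ with (column $n$) $=\sum_k(\text{column }k)\mu_k$; the top $n-1$ rows give $c_n^n=A^{n,n}\mu$, hence $\mu=(A^{n,n})^{-1}c_n^n$, and the bottom row gives $a_{nn}=r_n^n\mu=r_n^n(A^{n,n})^{-1}c_n^n$, which again reads as $|A|_{nn}=0$; the converse is identical.

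I expect no serious obstacle, since the proposition is essentially a restatement of the definition. The only point requiring care is the non-commutative bookkeeping: the coefficients $\lambda_k$ must multiply the rows on the \emph{left} and the coefficients $\mu_k$ must multiply the columns on the \emph{right}, matching exactly the left/right multiplication order in $r_n^n(A^{n,n})^{-1}c_n^n$. The essential input is the assumed invertibility of $A^{i,j}$, which is what guarantees both the existence and the \emph{uniqueness} of the coefficient vector and thereby makes the equivalence sharp rather than merely one-directional.
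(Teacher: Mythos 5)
Your proof is correct: the reduction to the $(n,n)$ position is legitimate (simultaneous row and column permutations conjugate $A^{i,j}$ by permutation matrices and leave $r_i^j(A^{i,j})^{-1}c_j^i$ unchanged), and the block comparison of the first $n-1$ entries with the last entry, using invertibility of $A^{n,n}$ to pin down the coefficient vector $\lambda=r_n^n(A^{n,n})^{-1}$ (resp. $\mu=(A^{n,n})^{-1}c_n^n$), cleanly gives both directions of (i) $\Leftrightarrow$ (ii) and (i) $\Leftrightarrow$ (iii). The paper itself supplies no proof --- it imports the statement by citation from \cite[Prop.~1.4.6]{gelfand05} --- and your direct unwinding of the Schur-complement form of the quasi-determinant is essentially the standard argument given in that source, with the left/right coefficient bookkeeping handled correctly.
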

	
	Moreover, we have the following non-commutative Jacobi identity \cite{gilson07}
	\begin{align}\label{ncj1}
		\left|\begin{array}{ccc}
			A&B&C\\
			D&f&g\\
			E&h&\boxed{i}
		\end{array}
		\right|
		=\left|
		\begin{array}{cc}
			A&C\\
			E&\boxed{i}
		\end{array}
		\right|-\left|\begin{array}{cc}
			A&B\\
			E&\boxed{h}
		\end{array}
		\right|\left|\begin{array}{cc}
			A&B\\
			D&\boxed{f}
		\end{array}
		\right|^{-1}\left|\begin{array}{cc}
			A&C\\
			D&\boxed{g}
		\end{array}
		\right|.
	\end{align}
	which could be viewed as a special case of homological relation in terms of quasi-Pl\"ucker coordinates
	\begin{align}
		\left|\begin{array}{ccc}
			A&B&C\\
			D&f&g\\
			E&\boxed{h}&i\end{array}
		\right|&=\left|\begin{array}{ccc}
			A&B&C\\
			D&f&g\\
			E&h&\boxed{i}\end{array}
		\right|\left|\begin{array}{ccc}
			A&B&C\\
			D&f&g\\
			0&\boxed{0}&1\end{array}
		\right|,\label{hm1}\\
		\left|\begin{array}{ccc}
			A&B&C\\
			D&f&\boxed{g}\\
			E&h&i\end{array}
		\right|&=\left|\begin{array}{ccc}
			A&B&0\\
			D&f&\boxed{0}\\
			E&h&1\end{array}
		\right|\left|\begin{array}{ccc}
			A&B&C\\
			D&f&g\\
			E&h&\boxed{i}\end{array}
		\right|,\label{hm2}
	\end{align}
	which have been used in this article.

\end{document}